\numberwithin{table}{section}
\numberwithin{figure}{section}
\numberwithin{equation}{section}
\newcommand*\patchAmsMathEnvironmentForLineno[1]{%
\expandafter\let\csname old#1\expandafter\endcsname\csname #1\endcsname
\expandafter\let\csname oldend#1\expandafter\endcsname\csname end#1\endcsname
\renewenvironment{#1}%
{\linenomath\csname old#1\endcsname}%
{\csname oldend#1\endcsname\endlinenomath}}%
\newcommand*\patchBothAmsMathEnvironmentsForLineno[1]{%
\patchAmsMathEnvironmentForLineno{#1}%
\patchAmsMathEnvironmentForLineno{#1*}}%
\newtheorem{remark}{Remark}[section]
\newtheorem{assumption}{Assumption}[section]
\newtheorem{theorem}{Theorem}[section]
\newtheorem{proposition}{Proposition}[section]
\newtheorem{corollary}{Corollary}[section]
\newtheorem{lemma}{Lemma}[section]
\newtheorem{definition}{Definition}[section]
\newcommand{\vast}{\bBigg@{4}}
\newcommand{\Vast}{\bBigg@{5}}
\begin{document}

%\linenumbers %Turns on line numbering.

\title{{Neural Network Approach to\\Portfolio Optimization with Leverage Constraints:\\a Case Study on High Inflation Investment}}

\author{Chendi Ni}
\affil{Cheriton School of Computer Science, University of Waterloo, Waterloo, N2L 3G1, Canada,
\texttt{chendi.ni@uwaterloo.ca}
}
\author{Yuying Li}
\affil{Cheriton School of Computer Science, University of Waterloo, Waterloo, N2L 3G1, Canada,
\texttt{yuying@uwaterloo.ca}
}
\author{Peter Forsyth}
\affil{Cheriton School of Computer Science, University of Waterloo, Waterloo, N2L 3G1, Canada,
\texttt{paforsyt@uwaterloo.ca}
}

\maketitle
\begin{abstract}
Motivated by the current global high inflation scenario, we aim to discover a dynamic multi-period allocation strategy to optimally outperform a passive benchmark while adhering to a bounded leverage limit. To this end, we formulate an optimal control problem to outperform a benchmark portfolio throughout the investment horizon. Assuming the asset prices follow the jump-diffusion model during high inflation periods, we first establish a closed-form solution for the optimal strategy that outperforms a passive strategy under the cumulative quadratic tracking difference (CD) objective, assuming continuous trading and no bankruptcy. 
{To obtain strategies under the bounded leverage constraint among other realistic constraints, we then propose a novel leverage-feasible neural network (LFNN) to represent control, which converts the original constrained optimization problem into an unconstrained optimization problem that is computationally feasible with standard optimization methods.} 
We establish mathematically that the LFNN approximation can yield a solution that is arbitrarily close to the solution of the original optimal control problem with bounded leverage. We further apply the LFNN approach to a four-asset investment scenario with bootstrap resampled asset returns from the filtered high inflation regime data. 
The LFNN strategy is shown to consistently outperform the passive benchmark strategy by about 200 bps 
(median annualized return), with a greater than 90\% probability of 
outperforming the benchmark at the end of the investment horizon.

\vspace{5pt}
\noindent
\textbf{Keywords:} cumulative tracking difference, leveraged portfolio, benchmark outperformance, asset allocation, machine learning

\noindent
\textbf{JEL codes:} G11, G22\\
\noindent
\textbf{AMS codes:} 91G, 35Q93, 68T07

\end{abstract}

\section{Introduction}\label{sec:intro}
Since the global outbreak of COVID-19 in March 2020, there has been a significant increase in worldwide inflation. Specifically, from May 2021 to February 2023, the 12-month change in the CPI index in the U.S. has not dropped below 5\% \citep{statistics2023consumer}. Prior to the pandemic, the U.S. economy experienced nearly four decades of low inflation. The abrupt shift from a long-term low-inflation environment to a high-inflation environment has created substantial uncertainty and volatility in the financial markets. In 2022, the technology-heavy NASDAQ stock index recorded a yearly return of -33.10\% \citep{nasdaq2023}.

Equally concerning is the uncertainty around the duration of this round of high inflation. Some believe that the geopolitical tensions and the COVID-19 pandemic will overturn the trend of globalization and lead to global supply chain restructuring \citep{javorcik2020global} which may result in a higher cost of production in the foreseeable future. Moreover, \citet{ball2022understanding} suggests that the future inflation rate may remain high if the unemployment rate remains low. 

In this article, we aim to answer the following question: 
{with the goal of outperforming a passive benchmark, how should an active investor optimize the portfolio during high inflation?
It is important to note that} we do not attempt to make predictions about future inflation conditions. {Instead, we approach the problem by formulating a multi-period optimal control problem that considers bounded leverage constraints and specific investment criteria.} This optimal control problem requires the specification of an appropriate objective function, realistic constraints, as well as stochastic models for returns of traded assets during high inflation regimes. {Given the complexities and challenges associated, it is crucial to develop an efficient method capable of computing optimal solutions, accommodating flexible data sources, handling high-dimensional cases, and dealing with complex constraints. In this paper, we propose a framework to address these challenges.}

In Section \ref{sec:active_allocation}, we assume that the real (inflation-adjusted) asset returns during a high-inflation regime follow stochastic processes and treat allocation decisions as the control of a dynamic system. 
Specifically, we formulate an optimal control problem to outperform a fixed-mix benchmark portfolio consistently throughout the investment horizon by minimizing a cumulative quadratic tracking difference (CD) objective. There is a large amount of extant literature on closed-form solutions for beating a stochastic benchmark under synthetic market assumptions \citep{browne_1999_a,browne_2000,tepla_2001,basak_2006,davis_2008, Lim_2010,Oderda_2015,Alekseev_2016,alaradi_2018}. In these articles, the common objective function involves a log-utility function, e.g. the log wealth ratio. Under the log wealth ratio formulation, it is often hard to accommodate a fixed stream of cash injections, which is a common characteristic of open-ended funds. \citet{forsyth2022too} consider a scenario where a fixed amount of cash injections is allowed and provides a closed-form solution under a cumulative quadratic tracking difference (CD) objective given the assumption that the stock price follows a double exponential jump-diffusion model and the bond price is deterministic. 
Since the assumption that the bond index price is stochastic and has jumps is more reasonable under a high-inflation scenario, we develop a closed-form solution under the case that both the stock index and the bond index follow jump-diffusion models. 

The closed-form solution is derived, unfortunately, under unrealistic assumptions such as continuous rebalancing, infinite leverage, and continued trading when insolvent. A discrete-time multi-period asset allocation problem is generally solved using a dynamic programming (DP) based approach, which converts a multi-step optimization problem into multiple single-step optimization problems. However, \citet{van2023beating} point out that dynamic programming-based approaches require the evaluation of a high-dimensional performance criterion to obtain the optimal control which is comparatively low-dimensional. This means that solving the discrete-time problem numerically using dynamic programming-based techniques (for example numerical solutions to the corresponding PIDE \citep{wang2010numerical}, or reinforcement learning (RL) techniques \citep{dixon2020machine,park2020intelligent,lucarelli2020deep,gao2020application}) are inefficient and are computationally prone to known issues such as error amplification over recursions \citep{wang2020statistical}.

Acknowledging these limitations, in Section \ref{sec:nn_model}, we propose to use a single neural network model to approximate the optimal control and solve the original optimal control problem directly via a single standard finite-dimensional optimization. %which is solved by learning the parameters of the neural network model on data sets of asset returns using standard optimization methods. 
This direct approximation of the control exploits the lower dimensionality of the optimal control and bypasses the problem of solving high-dimensional conditional expectations associated with DP methods. We 
note that the idea of using a neural network to directly approximate the control process is also used 
in \citet{han2016deep,BuehlerGononEtAl2018,tsang2020deep, reppen2022deep}, in which they propose a stacked neural network approach that includes individual sub-networks for each rebalancing step. In contrast, we propose a single shallow neural network that includes time as an input feature, and thus avoids the need to have multiple sub-networks for each rebalancing step and greatly reduces the computational and modeling complexity. Furthermore, using time as a feature in the neural network approximation function is consistent with the observation that (under assumptions) the optimal control is naturally a continuous function of time, which we discuss in detail in Section \ref{sec:closed-form}. 

{The idea of using a single neural network to approximate controls has also been explored in previous studies such as \citet{li2019data} and \citet{ni2022optimal}. These studies focus on portfolio optimization problems with long-only constraints. The neural network architecture proposed in these studies transforms the constrained portfolio optimization problems into unconstrained optimization problems, making them computationally easier to solve. However, these existing neural network architectures do not address the bounded leverage constraint, which limits the total long exposure in the portfolio. The limited literature on portfolio optimization with bounded leverage is likely due to the added complexity arising from the combination of long and short positions in the portfolio. 

A significant contribution of this article is the introduction of a novel leverage-feasible neural network (LFNN) model, which converts the leverage-constrained optimization problem into an unconstrained optimization problem. This model enables the incorporation of the bounded leverage constraint into the portfolio optimization framework. Additionally, in Section \ref{sec:justify_LFNN}, we provide a mathematical proof that, under reasonable assumptions, the solution of the unconstrained optimization problem obtained using the LFNN model can approximate the optimal control of the original problem arbitrarily well. This mathematical justification validates the effectiveness and validity of the LFNN approach.

In Section \ref{sec:numerical_experiments}, we present a case study on active portfolio optimization in a high-inflation regime. To identify historical high-inflation periods, we employ a simple filtering method. Subsequently, we use bootstrap resampling to generate training and testing data sets, which consist of price paths for four assets: the equal-weighted and cap-weighted stock indexes, as well as the 30-day and 10-year U.S. treasury indexes.

Using the leverage-feasible neural network (LFNN) model and the cumulative quadratic tracking shortfall (CS) objective, we derive a leverage-constrained strategy for portfolio optimization. Our results demonstrate that the LFNN model produces a strategy that consistently outperforms the fixed-mix benchmark. Specifically, the strategy achieves a median (annualized) internal rate of return (IRR) that is more than 2\% higher than the benchmark. Moreover, there is a probability of over 90\% that the strategy will yield a higher terminal wealth compared to the benchmark.

These findings highlight the efficacy of the LFNN model in optimizing portfolios under high-inflation conditions. By incorporating the bounded leverage constraint and utilizing the CS objective, our approach enables investors to achieve superior performance and mitigate risks in a high-inflation environment.
}

Our contributions are summarized below:
\begin{enumerate}[label=(\roman*)]
   
    \item %To discover dynamic allocation strategies that outperform the passive fixed-mix strategies in high-inflation regimes, we formulate the problem as a stochastic optimal control problem with suitable objective functions under leverage constraints. 
    To gain intuition about the behavior of the optimal controls, we derive the closed-form solution under a jump-diffusion asset price model and other typical assumptions (such as continuous rebalancing) for a two-asset case. The closed-form solution provides important insights into the properties of the optimal control as well as meaningful interpretations of the neural network models that approximate the controls.
    \item We propose to represent the control directly by a neural network representation so that the stochastic optimal control problem can be solved numerically under realistic constraints such as discrete rebalancing and limited leverage. Particularly, we propose the novel leverage-feasible neural network (LFNN) model to convert the original complex leverage-constrained optimization problem into an unconstrained optimization problem that can be solved easily by standard optimization methods. 
    \item We prove that, with a suitable choice of the hyperparameter of the LFNN model, the solution of the parameterized unconstrained optimization problem can approximate the optimal control arbitrarily well. This provides a mathematical justification for the validity of the LFNN approach. This is further supported by the numerical results that the performance of the LFNN model matches the clipped form of the closed-form solution on simulated data.

    \item {In the case study on active portfolio optimization in high-inflation,} we apply the neural network method to bootstrap resampled asset returns with four underlying assets, including the equal-weighted/cap-weighted stock indexes, and the 30-day/10-year treasury bond indexes. The dynamic strategy from the learned LFNN model outperforms the fixed-mix benchmark strategy consistently throughout the investment horizon, with a 2\% higher median (annualized) internal rate of return (IRR), and more than 90\% probability of achieving a higher terminal wealth. Furthermore, the learned allocation strategy suggests that the equal-weighted stock index and short-term bonds are preferable investment assets during high-inflation regimes.
\end{enumerate}

%------------------------------------------
% section: high inflation periods
%------------------------------------------
%\section{Dynamic allocation for outperforming benchmark with leverage constraints}\label{sec:active_allocation}
\section{Outperform dynamic benchmark under bounded leverage}\label{sec:active_allocation}

\subsection{Sovereign wealth funds and benchmark targets}

%So far, we have examined the performance of passive portfolios in 
%bootstrap resampled sustained inflation regimes. In practice, 
Instead of taking a passive approach, some of the largest sovereign wealth funds often adopt an 
active management philosophy and use passive portfolios as the benchmark to evaluate the efficiency 
of active management. For example, the Canadian Pension Plan (CPP) uses a base reference portfolio of 
85\% global equity and 15\% Canadian government bonds \citep{cpp2022}. Another example is the 
Government Pension Fund Global of Norway (also known as the oil fund) managed by Norges Bank 
Investment Management (NBIM), which uses a benchmark index consisting of 70\% equity index and 30\% bond 
index.\footnote{The Ministry of Finance of Norway sets the allocation fraction between the equity 
index and the bond index. It gradually raised the weight for equities from 60\% to 70\% from 2015-2018.} 
The benchmark equity index is constructed based on the market capitalization for equities 
in the countries included in the benchmark. The benchmark index for bonds specifies a 
defined allocation between government bonds and corporate bonds, with a 
weight of 70 percent to government bonds and 30 percent to corporate bonds \citep{norges2022}. 

However, the excess return that these well-known sovereign wealth funds have 
achieved over their respective passive benchmark portfolios cannot be described as impressive. 
In the 2022 fiscal year report, CPP claims to have beaten the base reference portfolio by an 
annualized 80 bps after fees over the past 5 years \citep{cpp2022}. 
On the other hand, NBIM reports a mere average of 27 bps of annual excess 
return over the benchmark over the last decade (see Table \ref{tb:nbim_return}). 
It is worth noting that these behemoth funds achieve seemingly meager results 
by hiring thousands of highly paid investment professionals and spending billions 
of dollars on day-to-day operations. For example, the CPP 2021 annual 
report \citep{cpp2021} lists personnel costs as CAD 938 million, for 1,936 employees, 
which translates to average costs of about CAD 500,000 per employee-year.

\begin{table}[htb]
\begin{tabular}{c|c|c|c|c|c|c|c|c|c|c|c}
\hline
Year                 & 2012 & 2013 & 2014  & 2015 & 2016 & 2017 & 2018  & 2019 & 2020 & 2021 & Average \\ \hline
Excess return (\%) & 0.21 & 0.99 & -0.77 & 0.45 & 0.15 & 0.70 & -0.30 & 0.23 & 0.27 & 0.74 & 0.27    \\ \hline
\end{tabular}
\caption{Norges Bank Investment Management, relative return to benchmark portfolio}\label{tb:nbim_return}
\end{table} 

The stark contrast between the enormous spending of sovereign wealth 
funds and the meager outperformance of the funds relative to the passive benchmark 
portfolios is probably provocative to taxpayers and pensioners who invest their 
hard-earned money in the funds. Equally concerning is the potential of a long, persistent 
inflation regime and the funds' ability to consistently beat the benchmark portfolio 
in such times. After all, both the CPP Investments and NBIM were established in the 
late 1990s, a decade after the last long inflation period ended in the mid-1980s. 

These concerns prompt us to ask the following question: in a presumed persistent high-inflation environment, 
can a fund manager find a simple asset allocation strategy that consistently beats the 
benchmark passive portfolios by a reasonable margin (preferably without spending billions of dollars in personnel costs)? 

\subsection{Mathematical formulation}\label{sec:formulation}
In this section, we mathematically formulate the problem of outperforming a benchmark. Let $[t_0(=0),T]$ denote the 
investment horizon, and let $W(t)$ denote the wealth (value) of the portfolio actively 
managed by the manager at time $t\in[t_0,T]$. We refer to the actively managed portfolio as the ``active portfolio''. 
Furthermore, let $\hat{W}(t)$ denote the wealth of the benchmark portfolio at time $t\in[t_0,T]$.
To ensure a fair assessment of the relative performance of the two portfolios, we assume both portfolios start with an equal initial wealth amount $w_0>0$, i.e., $W(t_0)=\hat{W}(t_0)=w_0>0$.

Technically, the admissible sets of underlying assets  for the active and 
passive portfolio need not be identical.
However, for simplicity, we assume that both the active portfolio and the benchmark 
portfolio can allocate wealth to the same set of $N_a$ assets.

{
Let vector $\boldsymbol{S}(t) = (S_i(t):i=1,\cdots,N_a)^\top\in\mathbb{R}^{N_a}$ denote the asset prices of the $N_a$ underlying assets at time $t\in[t_0,T]$. In addition, let vectors $\boldsymbol{p}(t)= (p_i(t):i=1,\cdots,N_a)^\top\in\mathbb{R}^{N_a}$ and $\boldsymbol{\hat{p}}(t)= (\hat{p}_i(t):i=1,\cdots,N_a)^\top\in\mathbb{R}^{N_a}$ 
denote the fraction of wealth allocated to the $N_a$ underlying assets at time $t\in[t_0,T]$, 
respectively, for the active portfolio and the benchmark portfolio. 
From a control theory perspective, the allocation vector $\boldsymbol{p}$ can be regarded as the control of the system, as it determines how the wealth of the 
active portfolio evolves over time. 
We will seek to find the optimal
feedback control. In other words, the closed-loop controls (allocation decisions) are assumed to also depend on the value 
of the state variables (e.g. portfolio wealth). 
Therefore, we consider the control $\boldsymbol{p}$ to be a function of time as well as the relevant state variables. 
In addition, the benchmark portfolio allocation $\boldsymbol{\hat{p}}$ can be regarded as a known function of
its state variables and time as well.

Mathematically, $\boldsymbol{p}(\boldsymbol{X}(t))= (p_i(\boldsymbol{X}(t)):i=1,\cdots,N_a)^\top\in\mathbb{R}^{N_a}$ and $\boldsymbol{\hat{p}}(\hat{\boldsymbol{X}}(t))= (\hat{p}_i(\hat{\boldsymbol{X}}(t)):i=1,\cdots,N_a)^\top\in\mathbb{R}^{N_a}$, where ${\boldsymbol{X}(t)}\in\mathcal{X}\subseteq\mathbb{R}^{N_x}$ and $\hat{\boldsymbol{X}}(t)\in\hat{\mathcal{X}}\subseteq\mathbb{R}^{N_{\hat{x}}}$ are the state variables taken into account by the active portfolio and the benchmark portfolio respectively. Here we include $t$ in 
$\boldsymbol{X}(t)$ and $\hat{\boldsymbol{X}}(t)$ for notational simplicity. In this article, we consider the particular problem of outperforming a passive portfolio, in which $\boldsymbol{X}(t)=\big(t,W(t),\hat{W}(t)\big)^\top$. 
}
%Without 
%further specification, we use $\boldsymbol{X}(t)$ to represent $\big(t,W(t),\hat{W}(t)\big)^\top$ in the rest of the article.

We assume that the active portfolio and the benchmark portfolio follow the same rebalancing schedule denoted by $\mathcal{T}\subseteq[t_0,T]$. In the case of discrete rebalancing, $\mathcal{T}\subset[t_0,T]$ is a discrete set. In the case of continuous rebalancing, $\mathcal{T}=[t_0,T]$, i.e., rebalancing happens continuously throughout the entire investment horizon. 

Additionally, we assume both portfolios follow the same deterministic sequence of cash injections, defined by the set $\mathcal{C}=\{c(t),\;t\in\mathcal{T}_c\}$, where $\mathcal{T}_c\subseteq[t_0,T]$ is the schedule of the cash injections. 
When $\mathcal{T}_c$ is a discrete injection schedule, $c(t)$ is the amount of cash injection at $t$. 
In the case of continuous cash injections, i.e., $\mathcal{T}=[t_0,T]$, $c(t)$ is the rate of cash injection at $t$, 
i.e., the total cash injection amount during $[t,t+dt]$ is $c(t)dt$, where 
$dt$ is an infinitesimal time interval. For simplicity, we assume that $\mathcal{T}_c=\mathcal{T}$, 
so that the cash injections schedule is the same as the rebalancing schedule. At $t\in\mathcal{T}$, $W(t)$ and $\hat{W}(t)$ always denote the wealth after the cash injection (assuming there is a cash injection event happening at $t$).

The active and benchmark strategies, respectively, are defined as the sequence of the allocation fractions following the rebalancing schedule. Mathematically, the active and benchmark strategies are defined by sets

\begin{equation}
    \mathcal{P}=\{\boldsymbol{p}(\boldsymbol{X}(t)),\;t\in\mathcal{T}\},\quad\text{and}\quad\hat{\mathcal{P}}=\{\boldsymbol{\hat{p}}(\hat{\boldsymbol{X}}(t)),\;t\in\mathcal{T}\}.
\end{equation}

Denote $\mathcal{A}$ as the set of admissible strategies, which reflects the investment constraints on the controls.  We assume that admissibility can vary with state and let \{$\mathcal{X}_i$: $i=1,\cdots,k$\} be a partition of $\mathcal{X}$ (the state variable space), i.e.
\begin{equation}
    \begin{cases}
    \bigcup_{i=1}^k \mathcal{X}_i=\mathcal{X},\\
    \mathcal{X}_i\bigcap \mathcal{X}_j=\varnothing,\forall 1\leq i<j\leq k,
    \end{cases}
\end{equation}
and \{$\mathcal{Z}_i\subseteq\mathbb{R}^{N_a}$: $i=1,\cdots,k$\} be the corresponding value sets of feasible controls such that any feasible control $\boldsymbol{p}$ satisfies
\begin{equation}
    \boldsymbol{p}(\boldsymbol{x})\in\mathcal{Z}_i,\forall \boldsymbol{x}\in\mathcal{X}_i,\;\forall i\in\{1,\cdots,k\}.
\end{equation}
We say that strategy $\mathcal{P}$ is an admissible strategy, i.e., $\mathcal{P}\in\mathcal{A}$, 
if and only if 
\begin{equation}
 \mathcal{P}=\Big\{\boldsymbol{p}(\boldsymbol{X}(t)),\; t\in\mathcal{T}\;\Big|\;\boldsymbol{p}(\boldsymbol{X}(t))\in\mathcal{Z}_i,\;\text{if }\boldsymbol{X}(t)\in\mathcal{X}_i\Big\}
\end{equation}

Consider a discrete rebalancing schedule $\mathcal{T}=\{t_j,\;j=0,\cdots,N\}$ with $N$ rebalancing events, where $t_0<t_1<\cdots<t_{N}=T$.\footnote{Technically, at $t=t_0$, the manager makes the initial asset allocation, rather than a ``rebalancing'' of the portfolio. However, despite the different purposes, a rebalancing of the portfolio is simply a new allocation of the portfolio wealth. Therefore, for notational simplicity, we include $t_0$ in the rebalancing schedule.} Then, the wealth evolution of the active portfolio and the benchmark portfolio can be described by the equations
\begin{equation}
\begin{cases}
    W(t_{j+1})=\Big(\sum\limits_{i=1}\limits^{N_a}p_i(\boldsymbol{X}(t_j))\cdot\frac{S_i(t_{j+1})-S_i(t_j)}{S_i(t_j)}\Big)W(t_j)+c(t_{j+1}),\;j=0,\cdots,N-1,
    \\
    \hat{W}(t_{j+1})=\Big(\sum\limits_{i=1}\limits^{N_a}\hat{p}_i(\hat{\boldsymbol{X}}(t_j))\cdot\frac{S_i(t_{j+1})-S_i(t_j)}{S_i(t_j)}\Big)\hat{W}(t_j)+c(t_{j+1}),\;j=0,\cdots,N-1.
\end{cases}\label{dynamics_disc}
\end{equation}

In the continuous rebalancing case, $\mathcal{T}=[t_0,T]$. Let $dS_i(t)$ denote the instantaneous change in price for asset $i$, $i\in[1,\cdots,N_a]$.\footnote{For illustration purposes, here we assume $S_i(t),i\in[1,\cdots,N_a]$ follow standard diffusion processes, i.e., no jumps. We will discuss the case with jumps in detail in Section \ref{sec:closed-form}.} Then, at $t\in\mathcal{T}=[t_0,T]$, the wealth dynamics of the active portfolio and the benchmark portfolio, following their respective strategies $\mathcal{P}$ and $\hat{\mathcal{P}}$, can be described by the equations
\begin{equation}
\begin{cases}
    dW(t)=\Big(\sum\limits_{i=1}\limits^{N_a}p_i(\boldsymbol{X}(t))\cdot\frac{dS_i(t)}{S_i(t)}\Big)W(t_j)+c(t)dt,
    \\
    d\hat{W}(t)=\Big(\sum\limits_{i=1}\limits^{N_a}\hat{p}_i(\hat{\boldsymbol{X}}(t))\cdot\frac{dS_i(t)}{S_i(t)}\Big)\hat{W}(t_j)+c(t)dt.
\end{cases}\label{dynamics_cont}
\end{equation}

Let sets $\mathcal{W}_{\mathcal{P}}=\{W(t), t\in\mathcal{T}\}$ and $\hat{\mathcal{W}}_{\hat{\mathcal{P}}}=\{\hat{W}(t), t\in\mathcal{T}\}$ denote the wealth trajectories of the active portfolio and the benchmark portfolio following their respective investment strategies $\mathcal{P}$ and $\hat{\mathcal{P}}$. Let $F(\mathcal{W}_{\mathcal{P}},\hat{\mathcal{W}}_{\hat{\mathcal{P}}})\in\mathbb{R}$ denote an investment metric that measures the performances of the active and benchmark strategies, based on their respective wealth trajectories.

In this article, we assume that the asset prices $\boldsymbol{S}(t)\in\mathbb{R}^{N_a}$ are stochastic. Then, the wealth trajectories $\mathcal{W}_{\mathcal{P}}$ and $\hat{\mathcal{W}}_{\hat{\mathcal{P}}}$ are also stochastic, as well as the performance metric $F(\mathcal{W}_{\mathcal{P}},\hat{\mathcal{W}}_{\hat{\mathcal{P}}})$, which measures the relative performance of the active strategy with respect to the benchmark strategy. Therefore, when investment managers target to optimize an investment metric, the evaluation is often on the expectation of the random metric.

Let $\mathbb{E}_{\mathcal{P}}^{(t_0,w_0)}[F(\mathcal{W}_{\mathcal{P}},\hat{\mathcal{W}}_{\hat{\mathcal{P}}})]$ denote the expectation of the value of the performance metric $F$, with respect to a given initial wealth $w_0=W(0)=\hat{W}(0)$ at time $t_0=0$, following an admissible investment strategies $\mathcal{P}\in\mathcal{A}$, and the benchmark investment strategy $\hat{\mathcal{P}}$. Since the benchmark strategy is often pre-determined and known, we keep the benchmark strategy $\hat{\mathcal{P}}$ implicit in this notation for simplicity. Subsequently we use $\mathbb{E}_{\mathcal{P}}^{(t_0,w_0)}[F(\mathcal{W}_{\mathcal{P}},\hat{\mathcal{W}}_{\hat{\mathcal{P}}})]$, the expectation of a desired performance metric, as the {\it (investment) objective function} and solve

\iffalse
Note that when the asset prices $\boldsymbol{S}(t)$ are Markovian, the wealth trajectories $\mathcal{W}$ and $\hat{\mathcal{W}}$ are also Markovian. In such cases, the wealth of the active portfolio $W(t)$ and the wealth of the benchmark portfolio $\hat{W}(t)$ are the sole state variables at $t\in[t_0,T]$, i.e., $\boldsymbol{X}(t)=(W(t),\hat{W}(t))$. 
\fi

%Finally, we can mathematically define the investment objective (the optimization problem) associated with a general performance metric $F$ as 
\begin{equation}\text{(Optimization problem):}
    \quad\inf_{\mathcal{P}\in\mathcal{A}}\mathbb{E}_{\mathcal{P}}^{(t_0,w_0)}\big[F(\mathcal{W}_{\mathcal{P}},\hat{\mathcal{W}}_{\hat{\mathcal{P}}})\big].\label{generic_optimal_problem}
\end{equation}

%%%%%%%%%%%%%%%%%%%%%%%%%%%%%%%%%%%%%%%%%%%%%%%%%%%%%%
\subsection{Choice of investment objective}\label{sec:obj_choice}

The first step to designing a proper outperforming investment objective is to clarify the definition of {\it beating the benchmark}. In the context of measuring the performance of the portfolio against the benchmark, a common metric is the tracking error, which measures the volatility of the {difference in returns}, i.e.,
\begin{equation}
 \text{Tracking error}=stdev(R-\hat{R}),\label{metric:track_err}   
\end{equation}
where $R$ denotes the return of the active portfolio, and $\hat{R}$ denotes the return of the benchmark portfolio. Note that the returns of the active portfolio and the benchmark portfolio are determined from their respective wealth trajectories ($\mathcal{W}_{\mathcal{P}}$ and $\hat{\mathcal{W}}_{\hat{\mathcal{P}}}$) that are evaluated under the same investment horizon and same market conditions. The tracking error measures the volatility of the { difference in returns} over the investment horizon. A criticism of the tracking error is that it only measures the variability in the { difference in returns}, but does not reflect the magnitude of the {return difference} itself. For example, an active strategy with a constant { negative return difference} over the investment horizon would yield a better tracking error than an active strategy with a positive but volatile { return difference}. For this reason, many prefer the tracking difference \citep{johnson2013right, hougan2015tracking,charteris2020tracking,boyde2021etf}, which is defined as the annualized difference between the active portfolio's cumulative return and the benchmark portfolio's cumulative return over a specific period. Note that both tracking error and tracking difference metrics measure the { return difference} of the active portfolio over the benchmark portfolio. In other words, these metrics measure how closely the return of the active portfolio tracks the return of the benchmark portfolio. In practice, if an investment manager aims to achieve a certain annualized relative return target, e.g., $\beta$, then the tracking difference metric may not {be appropriate}. To address this, \citet{van2022dynamic} suggests the investment objective 
\begin{equation}
    \inf_{\mathcal{P}\in\mathcal{A}} \mathbb{E}_{\mathcal{P}}^{(t_0,w_0)}\Big[\Big(W(T)-e^{\beta T}\hat{W}(T) \Big)^2\Big],\label{prob:quad_terminal}
\end{equation}
where $W(T)$ and $\hat{W}(T)$ are the respective terminal wealth of the active portfolio and the benchmark portfolio at terminal time $T$, and $\beta$ is the annualized relative return target.

The optimal control problem (\ref{prob:quad_terminal}) aims to produce an active strategy that minimizes the quadratic difference between $W(T)$ and the terminal portfolio value target of $e^{\beta T}\hat{W}(T)$. In other words, the optimal control policy tries to outperform the benchmark portfolio by a total factor of $e^{\beta T}$ over the time horizon $[0,T]$, which is equivalent to an annualized relative return of $\beta$. The quadratic term of the difference incentivizes the terminal wealth of the active portfolio $W(T)$ to closely track the {\it elevated target} $e^{\beta T}$. 

It is worth noting that the relative return target $\beta$ can be intuitively interpreted as the manager's willingness to take more risks. As $\beta\downarrow0$, the optimal solution to problem (\ref{prob:quad_terminal}) is simply to mimic the benchmark strategy. However, as $\beta$ grows larger, the manager needs to take on more risk (for more return) in order to beat the benchmark portfolio by the relative return target rate.

A criticism of the investment objective (\ref{prob:quad_terminal}) is that it is symmetrical in terms of the outperformance and underperformance of $W(T)$ relative to the elevated target $e^{\beta T}\hat{W}(T)$. This is a common issue for volatility-based measures, such as the Sharpe ratio \citep{ziemba2005symmetric}. In practice, investors may favor outperformance more than underperformance, while still aiming to track the elevated target closely. Acknowledging this, instead of (\ref{prob:quad_terminal}), \citet{ni2022optimal} propose the following asymmetrical objective function,
\begin{equation}
   \inf_{\mathcal{P}\in\mathcal{A}}\mathbb{E}_{\mathcal{P}}^{(t_0,w_0)}\Bigg[\Big(\min(W(T)-e^{\beta T}\hat{W}(T),0) \Big)^2+\max\big(W(T)-e^{\beta T}\hat{W}(T),0\big)\Bigg].\label{prob:quad_terminal_asym}
\end{equation}

The investment objective (\ref{prob:quad_terminal_asym}) penalizes the outperformance (of $W(T)$ relative to the elevated target $e^{\beta T}\hat{W}(T)$) linearly but the underperformance quadratically, thus encouraging the optimal policy to favor outperformance more than underperformance when necessary. Note that the use of objective function (\ref{prob:quad_terminal_asym}) does not permit closed-form solutions and machine learning techniques are used \citep{ni2022optimal} to compute the desired optimal strategy numerically.

Another criticism of the investment objective (\ref{prob:quad_terminal}) and (\ref{prob:quad_terminal_asym}) is that both are only concerned with the relative performance at terminal time $T$. In reality, investment managers are often required to report intermediate portfolio performance internally or externally at regular time intervals. Instead of achieving the annualized relative return target when reviewing the portfolio performance at the end of the investment horizon,  managers may want to consistently achieve the relative return target throughout the entire investment horizon. In this case, managers may need to set an investment objective function to control the deviation of the wealth of the portfolio from the target along a market scenario within the investment horizon. Consequently, \citet{van2022dynamic} propose the following cumulative quadratic tracking difference (CD) objectives

\begin{numcases}{(CD(\beta)):\quad}
   \inf_{\mathcal{P}\in\mathcal{A}}\mathbb{E}_{\mathcal{P}}^{(t_0,w_0)}\Bigg[\int_{t_0}^T\Big(W(t)-e^{\beta t}\hat{W}(t)\Big)^2dt\Bigg],\; \text{if }\mathcal{T}=[t_0,T],\label{prob:CD_cont}
   \\
   \inf_{\mathcal{P}\in\mathcal{A}}\mathbb{E}_{\mathcal{P}}^{(t_0,w_0)}\Bigg[\sum_{t\in\mathcal{T}\cup\{T\}}\Big(W(t)-e^{\beta t}\hat{W}(t)\Big)^2\Bigg],\; \text{if }\mathcal{T}\subseteq[t_0,T], \mathcal{T}\text{ discrete}.\label{prob:CD_disc}
\end{numcases}
Here, note  that objective (\ref{prob:CD_cont}) 
is for the continuous rebalancing case, and (\ref{prob:CD_disc}) for discrete rebalancing. Both (\ref{prob:CD_cont}) and (\ref{prob:CD_disc}) measure the cumulative deviation of the wealth of the active portfolio relative to the target, along a market scenario within the entire investment horizon. Therefore, they  { measure} the intermediate performance deviations effectively. 
However, similar to (\ref{prob:quad_terminal}), (\ref{prob:CD_cont}) and (\ref{prob:CD_disc}) penalize outperformance and underperformance symmetrically. Therefore, we also consider following cumulative quadratic shortfall (CS) objectives that only penalize the shortfall (underperformance with respect to the target)
\begin{numcases}{(CS(\beta)):\quad}
   \inf_{\mathcal{P}\in\mathcal{A}}\mathbb{E}_{\mathcal{P}}^{(t_0,w_0)}\Bigg[\int_{t_0}^T\Big(\min\big(W(t)-e^{\beta t}\hat{W}(t),0\big)\Big)^2dt+\epsilon W(T)\Bigg],\; \text{if }\mathcal{T}=[t_0,T],\label{prob:CS_cont}
   \\
   \inf_{\mathcal{P}\in\mathcal{A}}\mathbb{E}_{\mathcal{P}}^{(t_0,w_0)}\Bigg[\sum_{t\in\mathcal{T}\cup\{T\}}\Big(\min\big(W(t)-e^{\beta t}\hat{W}(t),0\big)\Big)^2+\epsilon W(T)\Bigg],\; \text{if }\mathcal{T}\subseteq[t_0,T], \mathcal{T}
\text{ discrete}. \nonumber \\
    \label{prob:CS_disc}
\end{numcases}
Here (\ref{prob:CS_cont}) and (\ref{prob:CS_disc}) are the investment objectives for the continuous rebalancing and discrete rebalancing cases respectively. $\epsilon$ is a small regularization parameter to ensure that problem (\ref{prob:CS_cont}) and (\ref{prob:CS_disc}) are well-posed. A more detailed comparison of the CD and CS objective functions can be found in Appendix \ref{sec:shorfall_obj}.

%%%%%%%%%%%%%%%%%%%%%%%%%%%%%%%%%
\subsection{Closed-form solution for CD problem}\label{sec:closed-form}
In this section, we present the closed-form solution to the CD problem (\ref{prob:CD_cont}) under several assumptions. The closed-form solution not only provides us with insights for understanding the CD-optimal controls for problem (\ref{prob:CD_cont}), but also serves as the {baseline} for understanding the numerical results derived from the neural network method (discussed in later sections). Specifically, in this section, we consider the case that all asset prices follow jump-diffusion processes and portfolios with cash injections, which are aspects not frequently considered in benchmark outperformance literature \citet{browne_1999_a,browne_2000,tepla_2001,basak_2006,yao2006tracking,zhao2007dynamic,davis_2008,lim2010benchmarking,Oderda_2015,zhang2017portfolio,al2018outperformance,nicolosi2018portfolio,bo2021optimal}. 

We first summarize the assumptions for obtaining the closed-form solution to CD problem (\ref{prob:CD_cont}).
\begin{assumption}
(Two assets, no friction, unlimited leverage, trading in insolvency, constant rate of cash injection) The active portfolio and the benchmark portfolio have access to 
two underlying assets, a stock index, and a constant-maturity bond index. Both portfolios are rebalanced continuously, i.e., $\mathcal{T}=[t_0,T]$. There is no transaction cost and no leverage limit. Furthermore, we assume that trading continues in the event of insolvency, i.e., when $W(t)<0$ for some $t\in[t_0,T]$. Finally, we assume both portfolios receive constant cash injections with an injection rate of $c$, which means during any time interval $[t,t+\Delta t]\subseteq[t_0,T],\forall\Delta t>0$, both portfolios receive cash injection amount of $c\Delta t$.\label{as:basic_assump}
\end{assumption}
\begin{remark}
\textup{(Remark on Assumption \ref{as:basic_assump}) For illustration purposes, we assume only two underlying assets. 
However, the technique for deriving the closed-form solution can be extended to multiple assets. 
We remark that unlimited leverage is unrealistic, and is only assumed for deriving the closed-form solution. 
In Section \ref{sec:approx_form}, 
we will discuss the technique for handling the leverage constraint in more detail. We also acknowledge that it is 
not realistic to assume that the manager can continue to trade and borrow when insolvent. 
However, this assumption is typically required for obtaining closed-form solutions, 
see \citet{zhou2000continuous,li2000optimal} for the case of a multi-period mean-variance asset 
allocation problem. 
In Appendix \ref{sec:approx_form}, we have more discussion on the impact of insolvency and its handling in experiments.}
\end{remark}

\begin{assumption}
(Fixed-mix benchmark strategy) We assume that the benchmark strategy is a fixed-mix strategy (also known as the constant weight strategy). We assume the benchmark always allocates a constant fraction of $\hat{\varrho}\;(\in\mathbb{R})$ of the portfolio wealth to the stock index, and a constant fraction of $1-\hat{\varrho}$ to the bond index. Let $\hat{\boldsymbol{\varrho}}=(\hat{\varrho},1-\hat{\varrho})^\top\in\mathbb{R}^2$ denote the vector of allocation fractions to the stock index and the bond index, the benchmark strategy is the fixed-mix strategy defined by $\hat{\mathcal{P}}=\{\hat{\boldsymbol{p}}(\hat{\boldsymbol{X}}(t))=\big(\hat{p}_1(\hat{\boldsymbol{X}}(t)),\hat{p}_2(\hat{\boldsymbol{X}}(t))\big)^\top\equiv\hat{\boldsymbol{\varrho}},\;\forall t\in\mathcal{T}\}$.\label{as:fixed-mix-bm}
\end{assumption}

Finally, we assume the stock index price and bond index price follow the jump-diffusion processes described below.
\begin{assumption}
(Jump-diffusion processes) Let $S_1(t)$ and $S_2(t)$ denote the deflated (adjusted by inflation) price of the stock index and the bond index at time $t\in[t_0,T]$. We assume $S_i(t),\;i\in\{1,2\}$ follow the jump-diffusion processes
\begin{equation}
    \frac{dS_i(t)}{S_i(t^-)} = (\mu_i-\lambda_i\kappa_i+{r_i\cdot\textbf{1}_{S_i(t^-)<0}})dt+\sigma_idZ_i(t)+d\Big(\sum_{k=1}^{\pi_i(t)}(\xi_i^{(k)}-1)\Big),\;i=1,2.
\label{model:jump_diffusion}
\end{equation}
Here $\mu_i$ are the (uncompensated) drift rate, $\sigma_i$ is the diffusive volatility, $Z_1(t),Z_2(t)$ are correlated Brownian motions, where $\mathbb{E}[dZ_1(t)\cdot dZ_2(t)]=\rho dt$. 
{$r_i$  are the borrowing
premiums when $S_i(t^-)$ is negative.\footnote{Intuitively, there is a premium for shorting an asset. In the closed-form solution derivation, we assume $r_i=0$.}} $\pi_i(t)$ is a Poisson process with positive intensity parameter $\lambda_i$. $\{\xi_i^{(k)},\;k=1,\cdots,\pi_i(t)\}$ are i.i.d. positive random variables that describe jump multipliers associated with the assets. If a jump occurs for asset $i$ at time $t\in(t_0,T]$, its underlying price jumps from $S_i(t^-)$ to $S_i(t)=\xi_i\cdot S_i(t^-)$.\footnote{For any functional $\psi(t)$, we use the notation $\psi(t^-)$ as shorthand for the left-sided limit $\psi(t^-)=\lim_{\Delta t\downarrow0}\psi(t-\Delta t)$.} $\kappa_i=\mathbb{E}[\xi_i-1]$. $\xi_i$ and $\pi_i(t)$ are independent of each other. Moreover, $\pi_1(t)$ and $\pi_2(t)$ are assumed to be independent.\footnote{See \citet{forsyth2020optimal} for the discussion on the empirical evidence for stock-bond jump independence. Also note that the assumption of independent jumps can be relaxed without technical difficulty if needed \citep{kou2002jump}, but will significantly increase the complexity of notations.} \label{as:sto_process}
\end{assumption}

\begin{remark}
\textup{(Motivation for jump-diffusion model) The assumption of stock index price following a jump-diffusion model is common in the financial mathematics literature \citep{merton1976option,kou2002jump}. 
In addition, we follow the practitioner approach and directly model the returns of the constant maturity bond index as a stochastic process, see for example \citet{lin2015risking,macminn2014securitization}. 
As in \citet{macminn2014securitization}, we also assume that the constant maturity bond index follows a jump-diffusion process. 
During high-inflation regimes, central banks often make rate hikes to curb inflation, which causes sudden jumps in bond prices \citep{lahaye2011jumps}. 
We believe this is an appropriate assumption for bonds in high-inflation regimes.} 
\end{remark}

Under the jump-diffusion model (\ref{model:jump_diffusion}), the wealth processes for the active portfolio and benchmark portfolio are
\begin{equation}
\begin{cases}
    dW(t)=\Big(\sum\limits_{i=1}\limits^{N_a}p_i(\boldsymbol{X}(t^-))\cdot\frac{dS_i(t)}{S_i(t^-)}\Big)W(t_j^-)+cdt,
    \\
    d\hat{W}(t)=\Big(\sum\limits_{i=1}\limits^{N_a}\hat{p}_i(\hat{\boldsymbol{X}}(t^-))\cdot\frac{dS_i(t)}{S_i(t^-)}\Big)\hat{W}(t_j^-)+cdt,
\end{cases}\label{dynamics_jump_diffusion}
\end{equation}
where $t\in(t_0,T]$, $W(t_0)=\hat{W}(t_0)=w_0$ and $X(t^-)=(t,W(t^-),\hat{W}(t^-))^\top\in\mathbb{R}^3$ is the state variable vector.

We now derive the closed-form solution of the CD problem (\ref{prob:CD_cont}) under Assumption \ref{as:basic_assump}, \ref{as:fixed-mix-bm} and \ref{as:sto_process}. We first present the verification theorem for the HJB integro-differential equation (PIDE) satisfied by the value function and the optimal control of the CD problem (\ref{prob:CD_cont}).

\begin{theorem}
(Verification theorem for CD problem (\ref{prob:CD_cont}) For a fixed $\beta>0$, assume that for all $(t,w,\hat{w},\hat{\varrho})\in[t_0,T]\times\mathbb{R}^{3}$, there exists a function $V(t,w,\hat{w},\hat{\varrho}):[t_0,T]\times\mathbb{R}^{3}\mapsto\mathbb{R}$ and $p^*(t,w,\hat{w},\hat{\varrho}):[t_0,T]\times\mathbb{R}^{3}\mapsto\mathbb{R}^2$ that satisfy the following two properties. (i) $V$ and $\boldsymbol{p}^*$ are sufficiently smooth and solve the HJB PIDE (\ref{eq:PIDE}), and (ii) the function $\boldsymbol{p}^*(t,w,\hat{w},\hat{\varrho})$ attains the pointwise infimum in (\ref{eq:PIDE}) below
\begin{equation}
\begin{cases}
    \frac{\partial V}{\partial t}+(w-e^{\beta t}\hat{w})^2+\inf\limits_{\boldsymbol{p}\in\mathbb{R}^2}H(\boldsymbol{p};t,w,\hat{w},\hat{\boldsymbol{\varrho}}) = 0,
    \\
    V(T,w,\hat{w},\hat{\varrho})=0,
\end{cases}\label{eq:PIDE}
\end{equation}
where
\begin{align}
 H(\boldsymbol{p};t,w,\hat{w},\hat{\boldsymbol{\varrho}})=& \big(w\cdot\boldsymbol{\alpha}^\top\boldsymbol{p}+c\big)\cdot\frac{\partial V}{\partial w}+\big(\hat{w}\cdot\boldsymbol{\alpha}^\top\hat{\boldsymbol{\varrho}}+c\big)\cdot\frac{\partial V}{\partial \hat{w}}-\Big(\sum\limits_{i}\lambda_i\Big)\cdot V(t,w,\hat{w},\hat{\varrho})\notag\\
 &+\frac{w^2}{2}\cdot\big(\boldsymbol{p}^\top\boldsymbol{\Sigma}\boldsymbol{p}\big)\cdot\frac{\partial^2V}{\partial w^2}+\frac{\hat{w}^2}{2}\cdot\big(\hat{\boldsymbol{\varrho}}^\top\boldsymbol{\Sigma}\hat{\boldsymbol{\varrho}}\big)\cdot\frac{\partial^2V}{\partial \hat{w}^2}+w\hat{w}\cdot\big({\boldsymbol{p}}^\top\boldsymbol{\Sigma}\hat{\boldsymbol{\varrho}}\big)\cdot\frac{\partial^2V}{\partial w\partial\hat{w}}\notag\\
 &+\sum\limits_{i}\lambda_i\int_0^\infty V(w+p_iw(\xi-1),\hat{w}+\hat{p}_i\hat{w}(\xi-1),t,\hat{\varrho})f_{\xi_i}(\xi)d\xi.\label{eq:H}
\end{align}
Here $\boldsymbol{\alpha}=(\mu_1-\lambda_1\kappa_1,\mu_2-\lambda_2\kappa_2)^\top$ is the vector of (compensated) drift rates, 
$\boldsymbol{\Sigma}=\begin{bmatrix}
\sigma_1^2 & \rho\sigma_1\sigma_2 \\
\rho\sigma_1\sigma_2 & \sigma_2^2
\end{bmatrix}$ is the covariance matrix, and $f_{\xi_i}$ is the density function for $\xi_i$.

Then, under Assumption \ref{as:basic_assump}, \ref{as:fixed-mix-bm} and \ref{as:sto_process}, $V$ is the value function and $\boldsymbol{p}^*$ is the optimal control for the CD problem (\ref{prob:CD_cont}).
\end{theorem}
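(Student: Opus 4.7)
The plan is a standard verification argument adapted to the jump-diffusion wealth dynamics in (\ref{dynamics_jump_diffusion}). I would fix an arbitrary admissible strategy $\mathcal{P}\in\mathcal{A}$ with control $\boldsymbol{p}$, together with the associated wealth processes $(W(t),\hat{W}(t))$ driven by the jumps and diffusions in Assumption \ref{as:sto_process}. The central identity is Itô's lemma for jump-diffusion processes applied to $V(t,W(t),\hat{W}(t),\hat{\varrho})$ between $t_0$ and $T$. Writing $dV$ in differential form using (\ref{dynamics_jump_diffusion}), I would separate $dV$ into (i) a drift part equal to $\partial_t V + H(\boldsymbol{p};t,W(t^-),\hat{W}(t^-),\hat{\boldsymbol{\varrho}})$, where $H$ is exactly the integro-differential operator defined in (\ref{eq:H}), and (ii) local-martingale pieces coming from the Brownian increments $dZ_i$ and the compensated Poisson random measures for each $\pi_i$.

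Next I would invoke the HJB PIDE (\ref{eq:PIDE}). Since $\boldsymbol{p}^*$ attains the pointwise infimum, we have $\partial_t V + (w-e^{\beta t}\hat{w})^2 + H(\boldsymbol{p};t,w,\hat{w},\hat{\boldsymbol{\varrho}}) \ge 0$ for every admissible $\boldsymbol{p}$, with equality when $\boldsymbol{p}=\boldsymbol{p}^*$. Rearranged, this gives the pointwise bound
\begin{equation*}
\partial_t V + H(\boldsymbol{p};t,w,\hat{w},\hat{\boldsymbol{\varrho}}) \;\ge\; -(w-e^{\beta t}\hat{w})^2,
\end{equation*}
again with equality under $\boldsymbol{p}^*$. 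Integrating the Itô expansion from $t_0$ to $T$, using the terminal condition $V(T,\cdot)=0$, and taking expectations to annihilate the martingale contributions, I obtain
\begin{equation*}
V(t_0,w_0,w_0,\hat{\varrho}) \;\le\; \mathbb{E}_{\mathcal{P}}^{(t_0,w_0)}\!\left[\int_{t_0}^{T}\!\big(W(t)-e^{\beta t}\hat{W}(t)\big)^{2}\,dt\right],
\end{equation*}
with equality along the candidate control $\boldsymbol{p}^*$. Taking the infimum over $\mathcal{P}\in\mathcal{A}$ on the right-hand side and matching with the $\boldsymbol{p}^*$ trajectory identifies $V$ as the value function of (\ref{prob:CD_cont}) and $\boldsymbol{p}^*$ as an optimizer.

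The main technical obstacle is justifying that the stochastic integrals with respect to $dZ_i$ and the compensated jump measures are true martingales (not merely local martingales), so that their expectations vanish; this requires a standard integrability/localization argument combined with growth bounds on $V$, its spatial derivatives, and the jump integrand in (\ref{eq:H}). A secondary subtlety is that Assumption \ref{as:basic_assump} permits $W(t)<0$ and unlimited leverage, so I would be careful that the Itô expansion and the HJB PIDE are formulated on all of $\mathbb{R}$ in the wealth variable (consistent with the $r_i\cdot\mathbf{1}_{S_i(t^-)<0}$ term in (\ref{model:jump_diffusion})) and that admissibility in $\mathcal{A}$ is strong enough to supply the required moment bounds. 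With those ingredients in place, the remainder of the argument is the routine verification template and does not require any further structural assumptions.
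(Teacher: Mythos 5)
Your proposal is correct, but it runs in the opposite direction from the argument the paper actually gives, and in fact your direction is the one that matches the statement of the theorem. The paper's Appendix C.1 starts from the \emph{definition} of the value function, invokes the dynamic programming principle over $[t,t+\Delta t]$, applies It\^o's lemma with jumps to the resulting identity, and lets $\Delta t\downarrow 0$ to arrive at the HJB PIDE (\ref{eq:PIDE}); the authors explicitly describe this as ``merely an intuitive guide'' and defer the formal proof to a cited reference (\O ksendal and Sulem), so what they present is a heuristic \emph{necessity} derivation (value function $\Rightarrow$ PIDE). You instead give the classical \emph{sufficiency} (verification) argument: take an arbitrary admissible control, expand $V(t,W(t),\hat W(t),\hat\varrho)$ by the jump-diffusion It\^o formula, identify the drift as $\partial_t V + H(\boldsymbol{p};\cdot)$, use the HJB inequality to bound it below by $-(w-e^{\beta t}\hat w)^2$ with equality at $\boldsymbol{p}^*$, integrate, apply the terminal condition, and take expectations. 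That is the logically appropriate route for a theorem whose hypothesis is the existence of a smooth PIDE solution and whose conclusion is that this solution is the value function, and it is self-contained modulo the integrability needed to turn the local martingales into true martingales --- a burden you correctly flag, and which is nontrivial here precisely because the control set is all of $\mathbb{R}^2$ and $W$ may be negative (in practice it is discharged using the explicit quadratic ansatz for $V$ found in the subsequent proposition, which gives polynomial growth of $V$ and its derivatives, together with moment bounds on the controlled wealth). In short: the paper's route explains where the PIDE comes from but leans on a citation for rigor; your route proves the stated implication directly and makes the remaining technical gap explicit.
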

\begin{proof}
See Appendix \ref{sec:proof_verification}
\end{proof}
Define several auxiliary variables
\begin{equation}
\begin{cases}
\kappa_i^{(2)}=\mathbb{E}\big[(\xi_i-1)^2\big],\quad (\sigma_i^{(2)})^2=(\sigma_i)^2+\lambda_i\kappa_i^{(2)},\;i\in\{1,2\},\\
\vartheta=\sigma_1\sigma_2\rho-(\sigma_2^{(2)})^2,\quad\gamma=(\sigma_1^{(2)})^2+(\sigma_2^{(2)})^2-2{\sigma_1\sigma_2}\rho,\\
\phi = \frac{(\mu_1-\mu_2)(\mu_1-\mu_2+\vartheta)}{\gamma},\quad\eta = \frac{(\mu_1-\mu_2+\theta)^2}{\gamma}-(\sigma_2^{(2)})^2,
\end{cases}\label{def:soln_params}
\end{equation}
then we have the following proposition regarding the optimal control of problem (\ref{prob:CD_cont}).

\begin{proposition} (CD-optimal control)
Suppose Assumption \ref{as:basic_assump}, \ref{as:fixed-mix-bm} and \ref{as:sto_process} are applicable, then the optimal control fraction of the wealth of the active portfolio to be invested in the stock index for the $CD(\beta)$ problem (\ref{prob:CD_cont}) is given by ${p}^*(t,w,\hat{w},\hat{\varrho})\in\mathbb{R}$, where
\begin{equation}
p^*(t,w,\hat{w},\hat{\varrho}) = \frac{1}{W^*(t)}\Bigg[\frac{(\mu_1-\mu_2)}{\gamma }h(t;\beta,c)+\frac{(\mu_1-\mu_2+\vartheta)}{\gamma}\Big(g(t;\beta)\hat{W}(t)-W^*(t)\Big)+g(t;\beta)\hat{W}(t)\cdot\hat{\varrho}\Bigg].\label{def:optimal_control}
\end{equation}
Here $W^*(t)$ denotes the wealth process of the active portfolio from (\ref{dynamics_cont}) following control $\boldsymbol{p}^*(t,W^*(t),\hat{W}(t),\hat{\varrho})=\Big(p^*(t,W^*(t),\hat{W}(t),\hat{\varrho}),1-p^*(t,W^*(t),\hat{W}(t),\hat{\varrho})\Big)^\top$, where $p^*$ is the optimal stock allocation described in (\ref{def:optimal_control}), and $\hat{W}(t)$ is the wealth process of the benchmark portfolio following the fixed-mixed strategy described in Assumption \ref{as:fixed-mix-bm}. Here, $h$ and $g$ are deterministic functions of time,
\begin{equation}
g(t;\beta)=-\frac{D(t;\beta)}{2A(t)},\quad\quad h(t;\beta,c) = -\frac{B(t;\beta,c)}{2A(t)},\label{def:g_h}
\end{equation}
where $A,D$ and $B$ are deterministic functions defined as
\begin{equation}
    A(t)=\frac{e^{(2\mu_2-\eta)(T-t)}-1}{(2\mu_2-\eta)},\quad\quad 
    D(t;\beta)=2e^{\beta T}\Big(\frac{e^{-\beta(T-t)}-e^{(2\mu_2-\eta)(T-t)}}{2\mu_2-\eta+\beta}\Big),\label{def:A_D}
\end{equation}
and 
\begin{align}
    B(t;\beta,c)&=\frac{2c}{2\mu_2-\eta}\Big(\frac{e^{(2\mu_2-\eta)(T-t)}-e^{(\mu_2-\phi)(T-t)}}{\mu_2+\phi-\eta}-\frac{e^{(\mu_2-\phi)(T-t)}-1}{\mu_2-\phi}\Big)\nonumber\\
    &+\frac{2ce^{\beta T}}{2\mu_2-\eta+\beta}\Big(\frac{e^{(\mu_2-\phi)(T-t)}-e^{-\beta(T-t)}}{\mu_2-\phi+\beta}-\frac{e^{(2\mu_2-\eta)(T-t)}-e^{(\mu_2-\phi)(T-t)}}{\mu_2+\phi-\eta}\Big).\label{def:B}
\end{align}
\end{proposition}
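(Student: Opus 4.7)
The plan is to apply the verification theorem directly: hypothesize a smooth ansatz for the value function $V(t,w,\hat{w},\hat{\varrho})$, plug it into the HJB PIDE (\ref{eq:PIDE}), obtain $\boldsymbol{p}^{*}$ from the first-order condition on $H$, and then verify that the resulting coefficient ODEs admit the closed-form solutions $A(t)$, $D(t;\beta)$ and $B(t;\beta,c)$ in (\ref{def:A_D})--(\ref{def:B}). Since the running cost $(w-e^{\beta t}\hat{w})^{2}$ is quadratic in $(w,\hat{w})$, the cash injection $c$ is deterministic, and the benchmark mix $\hat{\boldsymbol{\varrho}}$ is constant, a natural ansatz is
\begin{equation*}
V(t,w,\hat{w},\hat{\varrho}) = A(t)\,w^{2} + D(t;\beta)\,w\hat{w} + F(t;\beta)\,\hat{w}^{2} + B(t;\beta,c)\,w + E(t;\beta,c,\hat{\varrho})\,\hat{w} + C(t;\beta,c,\hat{\varrho}),
\end{equation*}
with terminal conditions $A(T)=D(T;\beta)=\cdots=0$ imposed by $V(T,\cdot)=0$.

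First I would insert this ansatz into the functional $H$ in (\ref{eq:H}). Because $V$ is quadratic in $w$, the jump integral reduces to a polynomial in $\boldsymbol{p}$ whose quadratic coefficients involve $\kappa_i^{(2)}$; combined with the diffusive term $\tfrac{w^{2}}{2}\boldsymbol{p}^{\top}\boldsymbol{\Sigma}\boldsymbol{p}\cdot V_{ww}$, this is exactly how the effective second moments $(\sigma_i^{(2)})^{2}=\sigma_i^{2}+\lambda_i\kappa_i^{(2)}$ arise. Using the budget constraint $p_{2}=1-p_{1}$ and writing $p\equiv p_{1}$, the function $H$ becomes a scalar quadratic in $p$ whose leading coefficient is $2A(t)\gamma w^{2}/2$ with $\gamma$ as in (\ref{def:soln_params}); the first-order condition then yields $p^{*}$ as a linear function of $w^{-1}$, $\hat{w}/w$ and $\hat{\varrho}$, with coefficients built from $A$, $D$, $B$, the compensated drifts $\boldsymbol{\alpha}$, and the auxiliary constants $\phi$, $\eta$, $\vartheta$, $\gamma$. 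A careful algebraic rearrangement, substituting the closed forms of $A$, $D$, $B$ through $g(t;\beta)=-D/(2A)$ and $h(t;\beta,c)=-B/(2A)$ as in (\ref{def:g_h}), will produce exactly the expression (\ref{def:optimal_control}).

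Next I would substitute $p^{*}$ back into the HJB PIDE and match coefficients of $w^{2}$, $w\hat{w}$, $\hat{w}^{2}$, $w$, $\hat{w}$, and the constant. The resulting system is triangular and linear: the $w^{2}$ equation decouples into a first-order linear ODE for $A$ whose constant coefficient simplifies to $2\mu_{2}-\eta$, yielding the first formula in (\ref{def:A_D}); the $w\hat{w}$ equation is a linear ODE for $D$ with forcing $-2e^{\beta t}$, yielding the second formula in (\ref{def:A_D}); and the $w$ equation is a linear ODE for $B$ whose forcing is driven by $A$, $D$, and the cash injection $c$, integrating to the two-term expression (\ref{def:B}) through the $(\mu_{2}\pm\phi\pm\beta)$ exponents. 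The coefficients $F$, $E$, $C$ are not needed to reconstruct $p^{*}$, so the derivation stops once $A$, $D$, $B$ are identified.

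The main obstacle will be the bookkeeping in reducing the jump integral and reassembling the cross-terms so that the auxiliary quantities in (\ref{def:soln_params}) emerge in precisely the form claimed; in particular, checking that the contributions of $\boldsymbol{p}^{\top}\boldsymbol{\Sigma}\hat{\boldsymbol{\varrho}}$ and of the jump terms combine so that $\vartheta$ and $\eta$ appear with the right signs requires careful algebra. A secondary technical point is verifying the hypotheses of the verification theorem under Assumption \ref{as:basic_assump}: with $r_i=0$ as stated, one must confirm that the local martingale terms from applying It\^o's lemma to the quadratic candidate $V$ are true martingales on $[t_0,T]$ (so that the candidate equals the value function) despite the possibility $W^{*}(t)<0$, which follows from the exponential moment bounds on $W^{*}$ and $\hat{W}$ implied by the linear wealth dynamics under the fixed-mix benchmark and the feedback control $p^{*}$.
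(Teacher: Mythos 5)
Your proposal follows essentially the same route as the paper's proof: the same quadratic ansatz in $(w,\hat{w})$, the first-order condition on $H$ reduced to a scalar quadratic in $p_1$ via $p_2=1-p_1$, back-substitution to obtain the triangular linear ODE system for $A$, $D$, $B$ with the stated terminal conditions, and solution of those ODEs to recover (\ref{def:A_D})--(\ref{def:B}). The additional remarks on the jump-integral bookkeeping and the martingale verification are consistent with (and slightly more explicit than) the paper's treatment, which defers the formal verification argument to a test-function argument.
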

\begin{proof}
    See Appendix \ref{sec:obtain_opt_control}.
\end{proof}

%%%%%%%%%%%%%%%%%%%%%%%%%%%%%
\subsubsection{Insights from CD-optimal control}\label{sec:insights}
The CD-optimal control (\ref{def:optimal_control}) provides insights into the behaviour of
the optimal allocation policy. 
For ease of exposition, we first establish the following properties of $g(t;\beta)$ and $h(t;\beta,c)$.

\begin{corollary}\label{coro:prop_g} (Properties of $g(t;\beta)$) 
The function $g(t;\beta)$ defined in (\ref{def:g_h}) has the following properties for $t\in[t_0,T]$ and $\beta>0$:
\begin{enumerate}[label=(\roman*)]
\item For fixed $t\in[t_0,T]$, $g(t;\beta)$ is strictly increasing on $\beta\in(0,\infty)$. 
\item For fixed $\beta>0$, $g(t;\beta)$ is strictly increasing on $t\in[t_0,T]$.
\item $g(t;\beta)$ admits the following bounds:
\begin{equation}
    e^{\beta t}\leq g(t;\beta)\leq e^{\beta T}.
\end{equation}
\end{enumerate}
\end{corollary}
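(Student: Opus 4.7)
The plan is to recognize that the ratio defining $g(t;\beta)$ is in fact a positive-weighted average of $e^{\beta s}$ over $s\in[t,T]$, which would make all three claims transparent. Setting $a := 2\mu_2-\eta$, I would verify the two identities
\begin{equation*}
A(t) \;=\; \int_t^T e^{a(s-t)}\,ds, \qquad -\tfrac{1}{2}D(t;\beta) \;=\; \int_t^T e^{\beta s}\,e^{a(s-t)}\,ds,
\end{equation*}
the second one by pulling $e^{-at}$ outside the integral and evaluating $\int_t^T e^{(a+\beta)s}\,ds$. The degenerate parameter values $a=0$ or $a+\beta=0$ are handled by continuity, since both sides are analytic in $(a,\beta)$. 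Substituting into $g = -D/(2A)$ then yields the key representation
\begin{equation*}
g(t;\beta) \;=\; \frac{\int_t^T e^{\beta s}\,e^{a(s-t)}\,ds}{\int_t^T e^{a(s-t)}\,ds},
\end{equation*}
which realizes $g(t;\beta)$ as the expectation of $e^{\beta s}$ under the probability measure on $[t,T]$ with density proportional to $e^{a(s-t)}$.

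From this representation, (iii) is immediate: because $\beta>0$, the map $s\mapsto e^{\beta s}$ is strictly increasing on $[t,T]$, so any convex combination of its values lies in $[e^{\beta t},e^{\beta T}]$, with strict interior containment whenever $t<T$. For (i), the denominator of the representation does not depend on $\beta$, so differentiation under the integral sign gives
\begin{equation*}
\frac{\partial g}{\partial\beta}(t;\beta) \;=\; \frac{\int_t^T s\,e^{\beta s}\,e^{a(s-t)}\,ds}{\int_t^T e^{a(s-t)}\,ds} \;>\; 0
\end{equation*}
for every $t\in[t_0,T)$, using $t_0=0$ so that the numerator's integrand is nonnegative on $[t,T]$ and strictly positive on a subset of positive measure. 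The endpoint $t=T$ is handled by a direct computation on the closed-form value $g(T;\beta)=e^{\beta T}$, which depends strictly monotonically on $\beta$.

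Property (ii) is the slightly more involved one. Here I would factor $e^{-at}$ out of both numerator and denominator in the representation to write $g(t;\beta)=F(t)/G(t)$ with $F(t)=\int_t^T e^{(a+\beta)s}\,ds$ and $G(t)=\int_t^T e^{as}\,ds$, and then apply the quotient rule. Since $F'(t)=-e^{(a+\beta)t}$ and $G'(t)=-e^{at}$, the numerator of $g'(t)$ simplifies to
\begin{equation*}
e^{at}F(t) - e^{(a+\beta)t}G(t) \;=\; e^{at}\int_t^T e^{as}\bigl(e^{\beta s}-e^{\beta t}\bigr)\,ds,
\end{equation*}
whose integrand is strictly positive on $(t,T]$ because $e^{\beta s}>e^{\beta t}$ whenever $s>t$ and $\beta>0$. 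This forces $g'(t)>0$ on $[t_0,T)$, establishing strict monotonicity in $t$.

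The only real obstacle is recognizing the weighted-average identity; once that rewrite is in hand, all three properties reduce to one-line positivity arguments. The mild bookkeeping concern is that $A$ and $D$ are defined by formulas that are ill-posed when $a=0$ or $a+\beta=0$, but this is purely cosmetic and handled by analytic extension to the integral form, which is well-defined for all parameter values.
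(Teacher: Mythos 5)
Your proof is correct. The identities $A(t)=\int_t^T e^{a(s-t)}\,ds$ and $-\tfrac{1}{2}D(t;\beta)=\int_t^T e^{\beta s}e^{a(s-t)}\,ds$ with $a=2\mu_2-\eta$ do check out against (\ref{def:A_D}), so $g=-D/(2A)$ really is a positive-weighted average of $e^{\beta s}$ over $[t,T]$, and from there (i)--(iii) reduce to the one-line positivity arguments you give; your treatment of the removable singularities at $a=0$ and $a+\beta=0$, and of the endpoint $t=T$ where $g$ is a $0/0$ expression that must be read as its limit $e^{\beta T}$, is also sound. The route is genuinely different from the paper's, however: the paper gives no self-contained argument for this corollary, but instead defers to \citet{van2022dynamic}, where the analogous $g$ and $h$ arise under a deterministic (risk-free) bond, and asserts that the proof ``follows similar steps'' even though the $A$ and $D$ in (\ref{def:A_D}) come from a model in which the bond itself follows a jump-diffusion. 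Your weighted-average identity supplies exactly the verification the paper leaves implicit, working directly with the present formulas for arbitrary $a$, so nothing needs to be imported from the cited reference or re-derived under its different asset model. Two small points you could make explicit: for (i) at interior $t$ you need $T>0$ so that $s\,e^{\beta s}e^{a(s-t)}$ is strictly positive on a set of positive measure (true here since $t_0=0<T$), and for (ii) the strict monotonicity on the closed interval $[t_0,T]$ follows by combining $g'>0$ on $[t_0,T)$ with the strict bound $g(t)<e^{\beta T}=g(T)$ for $t<T$ from your part (iii).
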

\begin{proof}
    See Appendix \ref{sec:proof-g-h}.
\end{proof}
\begin{corollary}\label{coro:prop_h} (Properties of $h(t;\beta,c)$) 
The function $h(t;\beta,c)$ defined in (\ref{def:g_h}) has the following properties for $t\in[t_0,T]$, $\beta>0$ and $c\geq0$:
\begin{enumerate}[label=(\roman*)]
\item For fixed $t\in[t_0,T]$ and $c>0$, $h(t;\beta,c)$ is strictly increasing  on $\beta\in(0,\infty)$. 
\item $h(t;\beta,c)\geq0$, $\forall (t,\beta,c)\in[t_0,T]\times(0,\infty)\times[0,\infty)$. 
\item For fixed $t\in[t_0,T]$ and $\beta>0$, $h(t;\beta,c)$ is strictly increasing on $c\in[0,\infty)$. $h(t;\beta,0)\equiv0$. Moreover, $h(t;\beta,c)\propto c$, i.e. $h(t;\beta,c)$ is proportional to $c$.
\end{enumerate}
\end{corollary}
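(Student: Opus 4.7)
The plan is to reduce everything to manipulations of the explicit formulas for $A$ and $B$ given in (\ref{def:A_D})--(\ref{def:B}). The central observation, which essentially gives property (iii) for free, is that (\ref{def:B}) is linear in $c$: each of the two summands carries a single overall factor of $c$ and no other occurrence of $c$. Thus we can write $B(t;\beta,c) = c\cdot B(t;\beta,1)$ and consequently
\[
    h(t;\beta,c) \;=\; c \cdot h(t;\beta,1).
\]
This immediately gives $h(t;\beta,0)\equiv 0$ and $h(t;\beta,c)\propto c$, and reduces strict monotonicity in $c\in[0,\infty)$ to the positivity claim $h(t;\beta,1)>0$ on $[t_0,T)$, which is exactly the content of (ii) below.

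Next I would analyze $A(t)$. A direct rewriting of (\ref{def:A_D}) gives $A(t)=\int_t^T e^{(2\mu_2-\eta)(s-t)}\,ds$, which is strictly positive for $t<T$ regardless of the sign of $2\mu_2-\eta$ (including the limiting case $2\mu_2-\eta=0$). Since $h=-B/(2A)$, property (ii) is equivalent to $B(t;\beta,1)\le 0$, with strict inequality on $[t_0,T)$.

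To establish the sign of $B(t;\beta,1)$ I would avoid direct algebraic factoring of (\ref{def:B}) --- which is cumbersome because the denominators $\mu_2-\phi$, $\mu_2+\phi-\eta$, $\mu_2-\phi+\beta$ have parameter-dependent signs --- and instead return to the ODE that $B$ satisfies. In the verification argument of Appendix~\ref{sec:obtain_opt_control}, substituting the quadratic-in-$w$ ansatz for $V$ into (\ref{eq:PIDE}) reduces the HJB PIDE to a triangular system of linear ODEs in $t$ for the coefficients $A(t),D(t;\beta),B(t;\beta,c),\ldots$, each with terminal condition $0$. Solving the ODE for $B$ yields an integral representation of the form
\[
    B(t;\beta,1) \;=\; -2\int_t^T e^{(2\mu_2-\eta)(s-t)}\,\Psi(s;\beta)\,ds,
\]
where $\Psi(s;\beta)$ is expressible in terms of $g(s;\beta)$ and the constants $\phi,\eta,\gamma$. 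Using Corollary~\ref{coro:prop_g}(iii), namely $g(s;\beta)\ge e^{\beta s}>0$, one verifies that $\Psi(s;\beta)>0$ on $[t_0,T)$, hence $B(t;\beta,1)<0$ and $h(t;\beta,1)>0$. This settles (ii) and completes (iii).

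Finally, for property (i), since $A$ is independent of $\beta$ and positive, strict monotonicity of $h$ in $\beta$ is equivalent to $\partial_\beta B(t;\beta,1)<0$. Differentiating the integral representation under the integral sign reduces this to checking $\partial_\beta \Psi(s;\beta)>0$, which follows from the strict $\beta$-monotonicity of $g(s;\beta)$ established in Corollary~\ref{coro:prop_g}(i). The main obstacle in this whole plan is the intermediate step of identifying $\Psi$ and verifying its positivity and $\beta$-monotonicity: working directly from (\ref{def:B}) requires a delicate case analysis across sign regimes of the denominators, whereas the ODE-based integral representation bypasses this and makes both the sign and the $\beta$-derivative transparent once Corollary~\ref{coro:prop_g} is in hand.
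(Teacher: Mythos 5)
Your plan is sound, but it is worth noting that the paper does not actually carry out this argument: its ``proof'' of Corollary \ref{coro:prop_h} (Appendix \ref{sec:proof-g-h}) consists of deferring to \citet{van2022dynamic}, where analogous $g$ and $h$ functions are analyzed under a risk-free-bond model, and asserting that the same steps go through. So you are supplying a self-contained argument where the paper supplies a citation. Your route is the natural one and it works: linearity of $B(t;\beta,c)$ in $c$ gives (iii) modulo positivity; $A(t)=\int_t^T e^{(2\mu_2-\eta)(s-t)}\,ds>0$ for $t<T$; and the ODE $\frac{dB}{dt}=-(\mu_2-\phi)B-2cA-cD$ with $B(T)=0$ integrates to
\begin{equation*}
B(t;\beta,1)=\int_t^T e^{(\mu_2-\phi)(s-t)}\bigl(2A(s)+D(s)\bigr)\,ds=-2\int_t^T e^{(\mu_2-\phi)(s-t)}A(s)\bigl(g(s;\beta)-1\bigr)\,ds,
\end{equation*}
after which Corollary \ref{coro:prop_g} finishes (ii) and (i). Two small corrections to your sketch: the integrating-factor exponent is $\mu_2-\phi$ (from the $-(\mu_2-\phi)B$ term of the ODE), not $2\mu_2-\eta$ as you wrote --- harmless for the sign argument since the kernel is positive either way and is $\beta$-independent; and the positivity of your $\Psi$ requires $g(s;\beta)\geq 1$, not merely $g>0$, which does follow from Corollary \ref{coro:prop_g}(iii) because $g(s;\beta)\geq e^{\beta s}\geq 1$ for $s\geq t_0=0$ and $\beta>0$. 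With those repairs your integral representation makes both the sign of $h$ and its strict $\beta$-monotonicity (via pointwise comparison of $g(s;\beta_1)<g(s;\beta_2)$ under the positive kernel) transparent, which is arguably cleaner than a direct case analysis of the closed form \eqref{def:B}.
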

\begin{proof}
    See Appendix \ref{sec:proof-g-h}.
\end{proof}
In order to analyze the closed-form solution, we make the following assumptions.
\begin{assumption}\label{as:drift_rates}
    (Drift rates of the two assets) We assume that the drift rates of the stock and the bond index $\mu_1$ and $\mu_2$ satisfy the following properties,
    \begin{equation}
        \mu_1-\mu_2>0,\quad \mu_1-\mu_2+\vartheta>0,
    \end{equation}
    where $\vartheta$ is defined in \eqref{def:soln_params}.
\end{assumption}
\begin{remark}\textup{(Remark on drift rate assumptions) The first inequality $\mu_1-\mu_2>0$ indicates that the stock index has a higher drift rate than the bond index, which is a standard assumption.\footnote{In fact, in this two-asset case, this assumption does not cause loss of generality.} The second inequality $\mu_1-\mu_2+\vartheta>0$ is also practically reasonable. $\vartheta$ is a variance term that is usually on a smaller scale compared to the drift rates. In reality, it is unlikely that $\mu_1-\mu_2>0$ but $\mu_1-\mu_2+\vartheta\leq0$.\footnote{For reference, based on the calibrated jump-diffusion model (\ref{model:jump_diffusion}) on historical high-inflation regimes, $\mu_1=0.051,\mu_2=-0.014,\vartheta=-0.00024$, and thus both inequalities are satisfied.}}
\end{remark}

Now we proceed to summarize the insights from the CD-optimal control (\ref{def:optimal_control}). The first obvious observation is that the CD-optimal control is a contrarian strategy. This can be seen from the fact that fixing time and the wealth of the benchmark portfolio $\hat{W}(t)$, the allocation to the more risky stock index decreases when the wealth of the active portfolio $W^*(t)$ increases.

If we take a deeper look at (\ref{def:optimal_control}), we can see that the CD-optimal control consists of two components: a cash injection component $p_{cash}^*$ and a tracking component $p_{track}^*$. Mathematically, 
\begin{equation}
p^*(t,w,\hat{w},\hat{\varrho}) = p_{cash}^*(t,w,\hat{w})+p_{track}^*(t,w,\hat{w},\hat{\varrho}),
\end{equation}
where
\begin{equation}
\begin{cases}
p_{cash}^*(t,w,\hat{w}) = \frac{1}{W^*(t)}\Bigg[\frac{(\mu_1-\mu_2)}{\gamma }h(t;\beta,c)\Bigg],\\
p_{track}^*(t,w,\hat{w},\hat{\varrho}) = \frac{1}{W^*(t)}\Bigg[\frac{(\mu_1-\mu_2+\vartheta)}{\gamma}\Big(g(t;\beta)\hat{W}(t)-W^*(t)\Big)+g(t;\beta)\hat{W}(t)\cdot\hat{\varrho}
\Bigg].\label{def:p_decomp}    
\end{cases}
\end{equation}

Based on Assumption \ref{as:drift_rates} and Corollary \ref{coro:prop_h}, the cash injection component $p_{cash}$ is always non-negative. Furthermore, from Corollary \ref{coro:prop_h}, we know that the stock allocation from the cash injection component is proportional to the cash injection rate $c$. In addition, as $t\uparrow T$, $h(t;\beta,c)$ increases, and thus the stock allocation from the cash injection component also increases with time.

On the other hand, the tracking component $p_{track}$ does not depend on the cash injection rate $c$, but only concerns the tracking performance of the active portfolio. One key finding is that 
\begin{equation}
    \begin{cases}
        p_{track}^*(t,w,\hat{w},\hat{\varrho})\geq \hat{\varrho},\qquad\text{if }W^*(t)\leq g(t;\beta)\hat{W}(t),\\
        p_{track}^*(t,w,\hat{w},\hat{\varrho})< \hat{\varrho},\qquad\text{if }W^*(t)>g(t;\beta)\hat{W}(t).
    \end{cases}
\end{equation}
This means that the CD-optimal control uses $g(t;\beta)\hat{W}(t)$ as the true target for the active portfolio to decide if the active portfolio should take more or less risk than the benchmark portfolio. This is a 
key observation, since the CD objective function (\ref{prob:CD_cont}) measures the difference between $W(t)$ and $e^{\beta t}\hat{W}(t)$. 
One would naively think that the optimal strategy would be based 
on the deviation from $e^{\beta t}\hat{W}(t)$. In contrast, from Corollary \ref{coro:prop_g}, we know that the true target $g(t;\beta)\hat{W}(t)$ used for decision making is greater than $e^{\beta t}\hat{W}(t)$. The insight from this observation is that if the manager wants to track an elevated target $e^{\beta t}\hat{W}(t)$, she should aim higher than the target itself.

%%%%%%%%%%%%%%%%%%%%%%%
\subsection{Leverage constraints}\label{sec:leverage}
In practice, large pension funds such as the Canadian Pension Plan often have exposures to alternative assets, such as private equity \citep{cpp2022}. Unfortunately, due to practical limitations, we only have access to 
long-term historical returns of publicly traded stock indexes and treasury bond indexes. Although controversial, some literature suggests that returns on private equity can be replicated using a leveraged small-cap stock index \citep{phalippou2014performance, l2016bottom}. Following this line of argument, we allow managers to take leverage to invest in public stock index funds to roughly mimic the pension fund portfolios with some exposure to private equities. 

Essentially, taking leverage to invest in stocks requires borrowing additional capital, 
which incurs borrowing costs. For simplicity, we assume the borrowing activity is represented by shorting some bond assets within the portfolio, and thus the manager is required to pay the cost of shorting these { shortable assets}. We assume that the cost consists of two parts: the returns of the shorted assets, and an additional borrowing premium (rate depends on specific investment scenarios) so that the total borrowing cost reflects both the interest rate environment (the return of shorted bond assets) and is reasonably estimated (with the added borrowing premium).

Following the notation from Section \ref{sec:formulation}, we assume that the total $N_a$ underlying assets are divided into two groups. The first group of $N_l$ assets are long-only assets, which we index by the set $\{1,\cdots,N_l\}$. The second group of $N_a-N_l$ assets are shortable assets that can be shorted to create 
leverage and are indexed by the set $\{N_l+1,\cdots,N_a\}$. Recall the notation of $p_i(\boldsymbol{X}(t))$ for the allocation fraction for asset $i$ at time $t$. 

For long-only assets, the wealth fraction needs to be non-negative, hence we have
\begin{equation}
    \text{(Long-only constraint):}\quad p_i(\boldsymbol{X}(t)) \geq 0,\; i\in\{1,\cdots,N_l\},\;t\in\mathcal{T}.\label{cons:long-only}
\end{equation}

Furthermore, the total allocation fraction for all assets should be one. Therefore, the following summation constraint needs to be satisfied
\begin{equation}
    \text{(Summation constraint):}\quad\sum_{i=1}^{N_a}p_i(\boldsymbol{X}(t))=1,\;t\in\mathcal{T}.\label{cons:sum-one}
\end{equation}

{
In practice, due to borrowing costs (from taking leverage) 
and risk management mandates, the use of leverage is often constrained. 
For this reason, we cap the maximum leverage by introducing a constant $p_{max}$, which represents the total allocation fraction for long-only assets. Therefore,
\begin{equation}
    \text{(Maximum leverage constraint):}
         \quad\sum_{i=1}^{N_l}p_i(\boldsymbol{X}(t)) \leq p_{max},\;t\in\mathcal{T}.\label{cons:max-lev}
\end{equation}
Note that no leverage is permitted if $p_{max}=1$.

Finally, we make the following assumption on the scenario of shorting multiple shortable assets.
\begin{assumption}
    (Simultaneous shorting) If one shortable asset has a
negative weight, other shortable assets 
must have nonpositive weights. Mathematically, this assumption can be expressed as \textup{
\begin{equation}
    \text{(Simultaneous shorting constraint):}
    \begin{cases}
       \;p_i(\boldsymbol{X}(t))\leq  0,\;\forall i\in\{N_l+1,\cdots,N_a\},\;\text{if }
                   \sum_{i=1}^{N_l}p_i(\boldsymbol{X}(t))>1,\;t\in\mathcal{T} \\
       \;p_i(\boldsymbol{X}(t)) \geq 0,\;\forall i\in\{N_l+1,\cdots,N_a\},\;\text{if }
                   \sum_{i=1}^{N_l}p_i(\boldsymbol{X}(t)) \leq 1,\;t\in\mathcal{T}
      \end{cases}
           .\label{constraint:simult_short}
\end{equation}    
    }\label{as:simult_short}
\end{assumption}

\begin{remark}
\textup{(Remark on Assumption \ref{as:simult_short}) This assumption avoids the ambiguity between the long-only assets and shortable assets in scenarios that involve leverage. When leveraging 
occurs, all shortable assets are treated as one group to provide the needed liquidity to achieve the desired leverage level.}
\end{remark}

%%%%%%%%%%%%%%%%%%%%%%%%%%
The above constraints consider scenarios with non-negative portfolio wealth. Before we proceed to the handling of the negative portfolio wealth scenarios, we first define the following partition of the state space $\mathcal{X}$,
\begin{definition}\label{def:partition}{(Partition of state space)} 
We define $\big\{\mathcal{X}_1,\mathcal{X}_2\big\}$ to be a partition of the state space $\mathcal{X}$, such that
\begin{equation}
\begin{cases}
   \mathcal{X}_1 = \Big\{x=(t,W,\hat{W})^\top\in\mathcal{X}\Big|W \geq0\Big\},\\
   \mathcal{X}_2 = \Big\{x=(t,W,\hat{W})^\top\in\mathcal{X}\Big|W<0\Big\}.
\end{cases}
\end{equation}
\end{definition}
Intuitively, we separate the state space $\mathcal{X}$ into two regions by the wealth of the active portfolio, one with non-negative wealth and the other with negative wealth. Then, we present the following assumption concerning the negative wealth (insolvency) scenarios.

\begin{assumption}
    (No trading in insolvency) If the wealth of the active portfolio is negative, then all long-only asset positions should be liquidated, and all the debt (i.e. the negative wealth) is allocated to the least-risky shortable asset (in terms of volatility). Particularly, without loss of generality, we assume all debt is allocated to asset $N_l+1$. Let $\boldsymbol{e}_i\in\mathbb{R}^{N_a}=(0,\cdots,0,1,0,\cdots,0)^\top$ denote the standard basis vector of which the $i$-th entry is 1 and all other entries are 0. Then, we can formulate this assumption as follows. \textup{
\begin{equation}
    \text{(No trading in insolvency):}
    \quad p(\boldsymbol{X}(t))=\boldsymbol{e}_{N_l+1},\quad \text{if }\boldsymbol{X}(t)\in\mathcal{X}_2.
\end{equation}    
    }\label{as:no_trading_insolvency}
\end{assumption}

\begin{remark}
\textup{(Remark on Assumption \ref{as:no_trading_insolvency}) Essentially, when the portfolio wealth is negative, we assume the debt is allocated to a short-term bond asset and accumulates over time.}
\end{remark}

{
Summarizing the constraints, we can define two sets $\mathcal{Z}_1,\mathcal{Z}_2$:
\begin{numcases}
   \mathcal{Z}_1=\vast\{\boldsymbol{z}\in\mathbb{R}^{N_a}\Bigg|
   \begin{cases}
   z_i \geq 0,\forall i\in\{1,\cdots,N_l\},
   \\    
   \sum_{i=1}^{N_a}z_i=1,
   \\
   \sum_{i=1}^{N_l}z_i \leq  p_{max},
   \\
   z_i \leq  0,\;\forall i\in\{N_l+1,\cdots,N_a\},\;\text{if }\sum_{i=1}^{N_l}z_i>1, \\
   z_i \geq  0,\;\forall i\in\{N_l+1,\cdots,N_a\},\;\text{if }\sum_{i=1}^{N_l}z_i \leq 1
   \end{cases}
   \vast\}
   ,\label{def:Z_1}
   \\
   \mathcal{Z}_2=\big\{\boldsymbol{e}_{N_l+1}\big\},\label{def:Z_2}
\end{numcases}

Then, the corresponding space of feasible control vector values $\mathcal{Z}$ and the admissible strategy set $\mathcal{A}$ are

\begin{numcases}{(\text{Admissible set}):\quad}
   \mathcal{Z}=\mathcal{Z}_1\cup\mathcal{Z}_2,\label{control_space_lev}
   \\
   \mathcal{A}=\Bigg\{\mathcal{P}=\Big\{\boldsymbol{p}(\boldsymbol{X}(t)),\;t\in\mathcal{T}\Bigg|
   \begin{cases}
   \boldsymbol{p}(\boldsymbol{X}(t))\in\mathcal{Z}_1,\;\text{if } \boldsymbol{X}(t)\in\mathcal{X}_1,\\ 
   \boldsymbol{p}(\boldsymbol{X}(t))\in \mathcal{Z}_2,\;\text{if } \boldsymbol{X}(t)\in\mathcal{X}_2,\\
   \end{cases}
   \Big\}\Bigg\}.\label{admissible_strat_lev}
\end{numcases}

} %end color red
{It is not obvious how the conditional constraints in (\ref{control_space_lev}) and (\ref{admissible_strat_lev}) can be formulated into a standard constrained optimization problem.}

%%%%%%%%%%%%%%%%%%%%%%%%%

\subsection{Neural network method}
%\subsubsection{Motivation for neural network method}

In Section \ref{sec:closed-form}, we derive the closed-form solution under the jump-diffusion model, which requires several unrealistic assumptions such as continuous rebalancing, unlimited leverage, and trading in insolvency. Furthermore, the closed-form solution is specific to the investment objective defined in the CD problem (\ref{prob:CD_cont}).
 To discover optimal strategies for high inflation regimes,  capability in solving general investment problem  (\ref{generic_optimal_problem})  for different objectives and under realistic constraints, such as discrete rebalancing and limited leverage (i.e., leverage constraints discussed in Section \ref{sec:leverage}), is critically beneficial.  Therefore, we need computationally efficient methods to solve these problems numerically, particularly in high-dimensional cases.

Solving a discrete-time multi-period optimal asset allocation problem often utilizes dynamic programming (DP). For example, \citet{dixon2020machine,park2020intelligent,lucarelli2020deep,gao2020application} use Q-learning algorithms to solve the discrete-time multi-period optimal allocation problem. In general, if there are $N_a$ assets to invest in, then the use of Q-learning involves approximation of an action-value function (``Q'' function) which is a ${(2N_a+1)}$-dimensional function \citep{van2023beating} which represents the conditional expectation of the cumulative rewards at an intermediate state.\footnote{Intuitively, the dimensionality comes from tracking the allocation in the $N_a$ assets for both the active portfolio and benchmark portfolio when evaluating the changes in wealth of both portfolios over one period in the action-value function.} Meanwhile, the optimal control is a mapping from the state space to the allocation fractions to the assets. If the state space is relatively low-dimensional, \footnote{For example, the state space of problem (\ref{prob:CD_cont}) with assumptions of a fixed-mix strategy is a vector in $\mathbb{R}^3$.} then the DP-based approaches are potentially unnecessarily high-dimensional. 

Instead of using dynamic programming methods, 
\citet{han2016deep,BuehlerGononEtAl2018,tsang2020deep, reppen2022deep} propose to approximate the optimal control function by neural network functions directly. In particular, they propose a stacked neural network approach that essentially uses a sub-network to approximate the control at every rebalancing step. 
Therefore, the number of neural networks required grows linearly with the number of rebalancing periods.
Note that, in the taxonomy of \citet{Powell_2023}, this method is termed as
Policy Function Approximation (PFA).

In this article, we follow the lines of \citet{li2019data,ni2022optimal} and propose a single neural network to approximate the optimal control function. The direct representation 
of the control function avoids the high-dimensional approximation required in DP-based methods. In addition, we consider time $t$ as an input feature (along with the wealth of the active portfolio and benchmark portfolio), therefore avoiding the need for multiple sub-networks in the stacked neural network approach. 

The numerical solution to the general problem 
(\ref{generic_optimal_problem}) requires solving 
for the feedback control $\boldsymbol{p}$. We approximate the control function $\boldsymbol{\theta}$ by a neural network function $f(\boldsymbol{X}(t);\boldsymbol{\theta}):\mathcal{X}\mapsto\mathbb{R}^{N_a}$, where $\boldsymbol{\theta}\in\mathbb{R}^{N_{\boldsymbol{\theta}}}$ represents the parameters of the neural network (i.e., weights and biases). In other words, 
\begin{equation}
    \boldsymbol{p}(\boldsymbol{X}(t)) \simeq f(\boldsymbol{X}(t);\boldsymbol{\theta})\equiv f(\cdot;\boldsymbol{\theta}).\label{eq:nn_approx_idea}
\end{equation}
Then, the optimization problem (\ref{generic_optimal_problem}) can be converted to solving the following optimization problem.
\begin{equation}
    \text{(Parameterized optimization problem):}\quad\inf_{\boldsymbol{\theta}\in\mathcal{Z}_{\boldsymbol{\theta}}}\mathbb{E}_{f(\cdot;\boldsymbol{\theta})}^{(t_0,w_0)}\big[F(\mathcal{W}_{\boldsymbol{\theta}},\hat{\mathcal{W}}_{\hat{\mathcal{P}}})\big].\label{prob:NN_optimization}
\end{equation}
Here $\mathcal{W}_{\boldsymbol{\theta}}$ is the wealth trajectory of the active portfolio following the neural network approximation function parameterized by $\theta$. $\mathcal{Z}_{\boldsymbol{\theta}}\subseteq\mathbb{R}^{N_{\boldsymbol{\theta}}}$ is the feasibility domain of the parameter $\boldsymbol{\theta}$, which is translated from the constraints of the original problem, e.g., (\ref{control_space_lev}) and (\ref{admissible_strat_lev}). Mathematically,

{
\begin{equation}
 \mathcal{Z}_{\boldsymbol{\theta}}=\Bigg\{\boldsymbol{\theta}:
 \begin{cases}
 f(\boldsymbol{X};\boldsymbol{\theta})\in\mathcal{Z}_1,\;\text{if }\boldsymbol{X}\in\mathcal{X}_1,\\
 f(\boldsymbol{X};\boldsymbol{\theta})\in\mathcal{Z}_2,\;\text{if }\boldsymbol{X}\in\mathcal{X}_2.
 \end{cases}
 \Bigg\}.
 \label{def:Z_theta}
\end{equation}
Here $\mathcal{Z}_1,\mathcal{Z}_2$ are defined in (\ref{def:Z_1}), (\ref{def:Z_2}) and $\mathcal{X}_1,\mathcal{X}_2$ are  partitions of the state space $\mathcal{X}$ defined in Definition \ref{def:partition}.

Note here that $\mathcal{Z}_{\boldsymbol{\theta}}$ depends on  the structure of the neural network function $f(\cdot;\boldsymbol{\theta})$.  Intuitively, $\mathcal{Z}_{\boldsymbol{\theta}}$ is the preimage of $\mathcal{Z}$, i.e.,  any $\theta \in \mathcal{Z}_{\boldsymbol{\theta}}$,  $f(\cdot;\boldsymbol{\theta}) \in 
\mathcal{Z}$.} 
Specific neural network model design may result in $\mathcal{Z}_{\boldsymbol{\theta}}=\mathbb{R}^{N_{\boldsymbol{\theta}}}$, which means (\ref{prob:NN_optimization}) becomes an unconstrained optimization problem. For long-only investment problems, the only constraints are the long-only constraint (\ref{cons:long-only}) and the summation constraint (\ref{cons:sum-one}). 
Previous work has proposed a neural network architecture with a softmax 
activation function at the last layer so that the output (vector of allocation fractions) automatically satisfies the two constraints, and thus $\mathcal{Z}_{\boldsymbol{\theta}}=\mathbb{R}^{N_{\boldsymbol{\theta}}}$ and problem (\ref{prob:NN_optimization}) becomes an unconstrained optimization problem (see, e.g., \citet{li2019data, ni2022optimal}). However, as discussed in Section \ref{sec:leverage}, we consider the more complicated case where leverage and shorting are allowed. The problem thus involves more constraints than the long-only case and therefore we would like to design a new model architecture to convert the constrained optimization problem to an unconstrained problem. We will discuss the design of the {\it leverage-feasible neural network} (LFNN) model in the next section, and how the LFNN model achieves this goal.

It is worth noting that for the particular CD problem (\ref{prob:CD_disc}) and CS problem (\ref{prob:CS_disc}), our technique may be formulated to appear similar to policy gradient methods in RL literature \citep{silver2014deterministic} on a high level. Examples of policy gradient methods in financial problems include \citet{coache2021reinforcement}, in which the authors develop an actor-critic algorithm for portfolio optimization problems with convex risk measures. However, there are two main differences between our proposed methodology and policy gradient algorithms. Firstly, we assume that the randomness of the environment (i.e., asset returns) over the entire investment horizon is readily available upfront (e.g., through calibration of parametric models or resampling of historical data), which is a common assumption adopted by practitioners when backtesting investment strategies. On the other hand, RL literature often considers an unknown environment, and the algorithms focus on the exploration of the agent to learn from the unknown environment and thus may be unnecessarily complicated for our use case. Secondly, our proposed methodology is not limited to the cumulative reward framework in RL and thus is more universal and suitable for problems in which the investment objective cannot be easily expressed in the form of a cumulative reward.

\subsection{Leverage-feasible neural network (LFNN)}\label{sec:nn_model}
In this section, we propose the leverage-feasible neural network (LFNN) model, which { yields} $\mathcal{Z}_{\boldsymbol{\theta}}=\mathbb{R}^{N_{\boldsymbol{\theta}}}$ for leverage constraints defined in equation (\ref{control_space_lev}), and converts a constrained optimization problem (\ref{prob:NN_optimization}) to an unconstrained problem.

Let vector $\boldsymbol{x}=(t,W(t),\hat{W}(t))^\top\in\mathcal{X}$ be the feature (input) vector. We first define a standard fully-connected feedforward neural network (FNN) function $\Tilde{f}:\mathcal{X}\mapsto\mathbb{R}^{N_a+1}$ as follows:
\begin{equation}\label{def:FNN}(\text{FNN}):\quad
    \begin{cases}
    h^{(1)}_j=\text{Sigmoid}\Big(\sum_{i=1}^{N_x}x_i\theta^{(1)}_{ij}+b^{(1)}_{j}\Big),\;j = 1,\cdots, N_h^{(1)},\\
    h^{(k)}_j=\text{Sigmoid}\Big(\sum_{i=1}^{N_h^{(k-1)}}h^{(k-1)}_i\theta^{(k)}_{ij}+b^{(k)}_{j}\Big),\;j = 1,\cdots, N_h^{(k)},\;\forall k\in\{2,\cdots,K\},\\
    o_j=\sum_{i=1}^{N_h^{(K)}}h_i\theta^{(K+1)}_{ij},\;j = 1,\cdots, 
    N_a+1,\\
    {
    \Tilde{f}(\boldsymbol{x};\boldsymbol{\theta}):=(o_1,\cdots,o_{N_a+1})^\top.}
\end{cases}
\end{equation}
Here Sigmoid($\cdot$) denotes the sigmoid activation function, $K$ denotes the number of hidden layers, $h^{k}_j$ denotes the value of the $j$-th node in the $k$-th hidden layer, and $N_h^{(k)}$ is the number of nodes in the $k$-th hidden layer. Additionally, $\boldsymbol{\theta}^{(k)}=(\theta^{(k)}_{ij})\in\mathbb{R}^{N_h^{(k)}\times N_h^{(k-1)}}$ and $\boldsymbol{b}^{(k)}=(b^{(k)}_{j})\in\mathbb{R}^{N_h^{(k)}}$ are the (vectorized) weight matrix and bias vector for the $k$-th layer,\footnote{For $k=K+1$, define $N_h^{(K+1)}=N_a+1$ so $\boldsymbol{\theta}^{(K+1)}$ is well-defined.} and the parameter vector of the entire neural network is $\boldsymbol{\theta}=(\boldsymbol{\theta}^{(1)},\boldsymbol{b}^{(1)},\cdots,\boldsymbol{\theta}^{(K)},\boldsymbol{b}^{(K)},\boldsymbol{\theta}^{(K+1)})^\top\in\mathbb{R}^{N_{\boldsymbol{\theta}}}$, where $N_{\boldsymbol{\theta}}=\sum_{k=1}^{K+1} N_h^{(k)}\cdot N_h^{(k-1)}+\sum_{k=1}^{K}N_h^{(k)}$. 

Building on $\Tilde{f}$, we propose the following {\it leverage-feasible neural network} (LFNN) model $f:\mathcal{X}\mapsto{\mathcal{Z}}$:
\begin{equation}\label{LFNN}(\text{LFNN}):\quad
    f(\boldsymbol{x};\boldsymbol{\theta}):=\psi\Big(\Tilde{f}(\boldsymbol{x};\boldsymbol{\theta}),\boldsymbol{x}\Big)\in{\mathcal{Z}}.
\end{equation}

Here, $\psi(\cdot)$ is the {\it leverage-feasible activation function}. For $\boldsymbol{o}=(o_1,\cdots,o_{N_a+1})^\top\in\mathbb{R}^{N_a+1}$, and $\boldsymbol{p}=\psi(\boldsymbol{o},\boldsymbol{x})$, $\psi(\cdot): {(\boldsymbol{o},\boldsymbol{x})\in\mathbb{R}^{N_a+1}\times{\mathcal{X}}}\mapsto{\mathcal{Z}}$ is defined by

\begin{equation} \boldsymbol{p}=
\psi(\boldsymbol{o},\boldsymbol{x})=\begin{cases}
\begin{cases}
    l = p_{max}\cdot\text{Sigmoid}(o_{N_a+1}),\\
    p_i = l\cdot\frac{e^{o_i}}{\sum_{k=1}^{N_l}e^{o_k}},\;i\in\{1,\cdots,N_l\},\qquad\qquad\qquad\text{if } \boldsymbol{x}\in\mathcal{X}_1,\\
    p_i = (1-l)\cdot\frac{e^{o_i}}{\sum_{k=N_l+1}^{N_a}e^{o_k}},\;i\in\{N_l+1,\cdots,N_a\},
\end{cases}\\
\boldsymbol{e}_{N_l+1},\;\;\;\qquad\qquad\qquad\qquad\qquad\qquad\qquad\qquad\qquad\text{if } \boldsymbol{x}\in\mathcal{X}_2.\\
\end{cases}
\label{psi_lev_feas_func}
\end{equation}

Recall that $N_l$ is the number of long-only assets and $p_{max}$ is the maximum leverage allowed. We show that the leverage-feasible activation function $\psi$ has the following property.
{
\begin{lemma}\label{lemma:property_psi}  (Decomposition of $\psi$) 
The leverage-feasible function $\psi$ defined in (\ref{psi_lev_feas_func}) has the function decomposition that 
\begin{equation}
    \psi(\boldsymbol{o},\boldsymbol{x})=\varphi(\zeta(\boldsymbol{o}),\boldsymbol{x}),
\end{equation}
where
\begin{equation}\label{eq:psi_decomp}
    \begin{cases}
        \zeta:\mathbb{R}^{N_a+1}\mapsto \Tilde{\mathcal{Z}},\zeta(o)=\Bigg(\text{Softmax}\Big((o_1,\cdots,o_{N_l})\Big),\text{Softmax}\Big((o_{N_l+1},\cdots,o_{N_a})\Big),p_{max}\cdot\text{Sigmoid}(o_{N_a+1})\Bigg)^\top,\\
        \varphi:\Tilde{\mathcal{Z}}\times{\mathcal{X}}\mapsto{\mathcal{Z}},\varphi(z)=\Big(z_{N_a+1}\cdot (z_1,\cdots,z_{N_l}),(1-z_{N_a+1})\cdot(z_{N_l+1},\cdots,z_{N_a})\Big)^\top\cdot\textbf{1}_{\boldsymbol{x}\in\mathcal{X}_1}+\boldsymbol{e}_{N_l+1}\cdot\textbf{1}_{\boldsymbol{x}\in\mathcal{X}_2},
    \end{cases}
\end{equation}
and 
\begin{equation}
 \Tilde{\mathcal{Z}}=\Bigg\{z\in\mathbb{R}^{N_a+1},\sum_{i=1}^{N_l}z_i=1,\sum_{i=N_l+1}^{N_a}z_i=1,z_{N_a+1}\leq p_{max}, z_i\geq0, \forall i\Bigg\}.   
\end{equation}
\end{lemma}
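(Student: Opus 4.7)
The proof plan is essentially a direct algebraic verification that the composition $\varphi \circ \zeta$ reproduces the two-branch definition of $\psi$ in (\ref{psi_lev_feas_func}) term by term, together with a check that the intermediate point $\zeta(\boldsymbol{o})$ actually lies in the claimed codomain $\tilde{\mathcal{Z}}$.

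First I would verify well-definedness: for any $\boldsymbol{o}\in\mathbb{R}^{N_a+1}$, the first block of $\zeta(\boldsymbol{o})$ is $\text{Softmax}((o_1,\ldots,o_{N_l}))$ and by the standard property of the softmax its entries are nonnegative and sum to one, so $\sum_{i=1}^{N_l}\zeta(\boldsymbol{o})_i=1$. The same argument applied to the second block gives $\sum_{i=N_l+1}^{N_a}\zeta(\boldsymbol{o})_i=1$. Finally, since $\text{Sigmoid}(\cdot)\in(0,1)$, the last coordinate satisfies $0<\zeta(\boldsymbol{o})_{N_a+1}=p_{max}\cdot\text{Sigmoid}(o_{N_a+1})<p_{max}$. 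Together, these three observations place $\zeta(\boldsymbol{o})$ in $\tilde{\mathcal{Z}}$, so $\varphi(\zeta(\boldsymbol{o}),\boldsymbol{x})$ is legitimately defined.

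Next I would handle the two cases in the definition of $\psi$. On the region $\boldsymbol{x}\in\mathcal{X}_1$, set $z=\zeta(\boldsymbol{o})$ and $l=z_{N_a+1}=p_{max}\cdot\text{Sigmoid}(o_{N_a+1})$, which matches the scalar $l$ in the first branch of (\ref{psi_lev_feas_func}). For each $i\in\{1,\ldots,N_l\}$, the $i$-th entry of $\varphi(z,\boldsymbol{x})$ is $z_{N_a+1}\cdot z_i=l\cdot \tfrac{e^{o_i}}{\sum_{k=1}^{N_l}e^{o_k}}$, which is exactly the corresponding $p_i$ in $\psi$. The identical computation for $i\in\{N_l+1,\ldots,N_a\}$, using $1-z_{N_a+1}=1-l$ in the second block of $\varphi$, matches the third line of the first branch of $\psi$. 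On the region $\boldsymbol{x}\in\mathcal{X}_2$, the indicator $\mathbf{1}_{\boldsymbol{x}\in\mathcal{X}_2}$ in the definition of $\varphi$ trivially returns $\boldsymbol{e}_{N_l+1}$, which coincides with the second branch of $\psi$. Combining the two cases gives the desired equality $\psi(\boldsymbol{o},\boldsymbol{x})=\varphi(\zeta(\boldsymbol{o}),\boldsymbol{x})$ for all $(\boldsymbol{o},\boldsymbol{x})$.

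Because the statement is a factorization of a piecewise-defined function into a parameter-encoding map followed by a mixing map, there is no real analytic obstacle; the only place where care is needed is making sure that the two softmax blocks and the sigmoid-scaled scalar are identified with the correct coordinates of $z$ when substituting into $\varphi$, and that the $\mathcal{X}_1$/$\mathcal{X}_2$ indicator structure of $\varphi$ is set up consistently with the two branches of $\psi$. Once the index bookkeeping is verified in both cases, the lemma follows.
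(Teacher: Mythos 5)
Your proposal is correct and is exactly the argument the paper has in mind: the paper's proof of this lemma is simply the one-line remark that the identity is ``easily verifiable by definition of $\psi$,'' and your case-by-case coordinate matching (plus the check that $\zeta(\boldsymbol{o})\in\Tilde{\mathcal{Z}}$) is the straightforward verification being alluded to. No gaps; you have just written out the details the paper omits.
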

\begin{proof} This is easily verifiable by definition of $\psi$ in (\ref{psi_lev_feas_func}).
\end{proof}

\begin{remark}
\textup{(Remark on Lemma \ref{lemma:property_psi}) The leverage-feasible activation function $\psi$ { corresponds to}  a two-step decision process described by $\zeta$ and $\varphi$. Intuitively, $\zeta$ first determines the internal allocations within long-only assets and shortable assets, as well as the total leverage. Then, $\varphi$ converts the internal allocations and total leverage into final allocation fractions, which depend on the wealth of the active portfolio.
}
\end{remark}
}

With the LFNN model outlined above, the parameterized optimization problem (\ref{prob:NN_optimization}) becomes an unconstrained optimization problem. Specifically, we present the following theorem regarding the feasibility domain $\mathcal{Z}_{\boldsymbol{\theta}}$ associated with the LFNN model (\ref{LFNN}).
\begin{theorem}\label{theorem:feasibility_domain}
(Unconstrained feasibility domain) The feasibility domain $\mathcal{Z}_{\boldsymbol{\theta}}$ defined in (\ref{def:Z_theta}) associated with the LFNN model (\ref{LFNN}) is $\mathbb{R}^{N_{\boldsymbol{\theta}}}$.
\end{theorem}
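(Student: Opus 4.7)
The plan is to show that for \emph{any} choice of $\boldsymbol{\theta} \in \mathbb{R}^{N_{\boldsymbol{\theta}}}$, the output $f(\boldsymbol{x};\boldsymbol{\theta}) = \psi(\tilde{f}(\boldsymbol{x};\boldsymbol{\theta}),\boldsymbol{x})$ lands in $\mathcal{Z}$ whenever $\boldsymbol{x}\in\mathcal{X}_1$ and equals $\boldsymbol{e}_{N_l+1}$ whenever $\boldsymbol{x}\in\mathcal{X}_2$. Since no restriction on $\boldsymbol{\theta}$ is needed, it will follow that $\mathcal{Z}_{\boldsymbol{\theta}}=\mathbb{R}^{N_{\boldsymbol{\theta}}}$. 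The key observation is that the leverage-feasible activation function $\psi$ was designed precisely so that its image is contained in $\mathcal{Z}$ on $\mathcal{X}_1$ regardless of the pre-activation vector $\boldsymbol{o}=\tilde{f}(\boldsymbol{x};\boldsymbol{\theta})$, which means the constraints are baked into the architecture rather than imposed on the parameters.

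First I would dispatch the insolvency case: by construction of $\psi$ in (\ref{psi_lev_feas_func}), when $\boldsymbol{x}\in\mathcal{X}_2$ the output is identically $\boldsymbol{e}_{N_l+1}$, which satisfies the second branch of the admissible set (\ref{admissible_strat_lev}). Then for $\boldsymbol{x}\in\mathcal{X}_1$ I would leverage the decomposition $\psi=\varphi\circ\zeta$ from Lemma \ref{lemma:property_psi}. Set $l := p_{max}\cdot\text{Sigmoid}(o_{N_a+1})$, and write $q_i^{(1)} := \text{Softmax}((o_1,\dots,o_{N_l}))_i$ for $i\leq N_l$ and $q_i^{(2)} := \text{Softmax}((o_{N_l+1},\dots,o_{N_a}))_i$ for $i>N_l$. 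Then $p_i = l\, q_i^{(1)}$ for $i\in\{1,\dots,N_l\}$ and $p_i = (1-l)\, q_i^{(2)}$ for $i\in\{N_l+1,\dots,N_a\}$.

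Next I would verify the four defining properties of $\mathcal{Z}$ in (\ref{control_space_lev}) one by one. The long-only constraint $p_i\geq 0$ for $i\leq N_l$ follows because $l\geq 0$ (sigmoid is non-negative) and $q_i^{(1)}>0$. The summation constraint holds since $\sum_{i=1}^{N_l} p_i = l\sum q_i^{(1)} = l$ and $\sum_{i=N_l+1}^{N_a} p_i = (1-l)\sum q_i^{(2)} = 1-l$, whence $\sum_{i=1}^{N_a} p_i = 1$. The maximum leverage constraint is immediate because $\sum_{i=1}^{N_l} p_i = l = p_{max}\cdot\text{Sigmoid}(o_{N_a+1}) \leq p_{max}$, since $\text{Sigmoid}(\cdot)\leq 1$. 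For the simultaneous shorting constraint, I would split into the two cases in (\ref{constraint:simult_short}): if $\sum_{i=1}^{N_l} p_i = l > 1$, then $1-l<0$ and each $p_i = (1-l)q_i^{(2)} \leq 0$ for $i > N_l$; conversely if $l\leq 1$, then $1-l\geq 0$ and each such $p_i \geq 0$.

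Since these verifications hold for arbitrary $\boldsymbol{o}\in\mathbb{R}^{N_a+1}$ and the pre-activation $\tilde{f}(\boldsymbol{x};\boldsymbol{\theta})$ can take \emph{any} value in $\mathbb{R}^{N_a+1}$ as $\boldsymbol{\theta}$ ranges over $\mathbb{R}^{N_{\boldsymbol{\theta}}}$, we obtain $f(\boldsymbol{x};\boldsymbol{\theta})\in\mathcal{Z}$ on $\mathcal{X}_1$ and $f(\boldsymbol{x};\boldsymbol{\theta})=\boldsymbol{e}_{N_l+1}$ on $\mathcal{X}_2$ for every $\boldsymbol{\theta}\in\mathbb{R}^{N_{\boldsymbol{\theta}}}$, giving $\mathcal{Z}_{\boldsymbol{\theta}}=\mathbb{R}^{N_{\boldsymbol{\theta}}}$. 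There is no real technical obstacle here; the proof is essentially a direct verification enabled by the bespoke design of $\psi$. The only subtlety worth flagging is the case analysis in the simultaneous shorting constraint, where the sign of $1-l$ must be tracked carefully to match the two branches of (\ref{constraint:simult_short}).
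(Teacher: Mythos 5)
Your proposal is correct and follows essentially the same route as the paper's own proof: a direct, case-by-case verification that for every $\boldsymbol{\theta}$ the output of $\psi$ satisfies the long-only, summation, maximum-leverage, and simultaneous-shorting constraints on $\mathcal{X}_1$ and equals $\boldsymbol{e}_{N_l+1}$ on $\mathcal{X}_2$, so that no restriction on $\boldsymbol{\theta}$ is ever imposed. The only cosmetic difference is that you route the verification through the $\varphi\circ\zeta$ decomposition of Lemma \ref{lemma:property_psi}, whereas the paper works directly from the definition (\ref{psi_lev_feas_func}); the substance is identical.
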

\begin{proof} See Appendix \ref{app:proof_feasibility_dom}.
\end{proof}

{Following Theorem \ref{theorem:feasibility_domain}, the constrained optimization problem (\ref{generic_optimal_problem}) can be transformed into the following unconstrained optimization problem\begin{equation} \text{(Unconstrained parameterized problem):}
    \quad\inf_{\boldsymbol{\theta}\in\mathbb{R}^{N_{\boldsymbol{\theta}}}}\mathbb{E}_{f(\cdot;\boldsymbol{\theta})}^{(t_0,w_0)}\big[F(\mathcal{W}_{\boldsymbol{\theta}},\hat{\mathcal{W}}_{\hat{\mathcal{P}}})\big].\label{prob:unconstrained_LFNN}
\end{equation}

\subsection{Mathematical justification for LFNN approach}\label{sec:justify_LFNN}
By approximating the feasible control with a parameterized LFNN model, we have shown that the original constrained optimization problem is transformed into an unconstrained optimization problem, which is computationally more implementable.

However, an important question remains: is the solution to the parameterized unconstrained optimization problem (\ref{prob:unconstrained_LFNN}) capable of yielding the optimal control of the original problem (\ref{generic_optimal_problem})? In other words, suppose $\boldsymbol{\theta}^*$ is the solution to (\ref{prob:unconstrained_LFNN}), can  $f(\cdot;\boldsymbol{\theta}^*)$ approximates solution to (\ref{generic_optimal_problem}) with desired accuracy?

In this section, we prove that under benign assumptions and appropriate choices 
of the hyperparameter of the LFNN model (\ref{LFNN}), solving 
the unconstrained problem (\ref{prob:unconstrained_LFNN}) provides
an arbitrarily close approximation 
the original 
problem (\ref{generic_optimal_problem}). We start by establishing the following lemma.
\begin{lemma}\label{lemma:decomp_p_opt}  (Structure of feasible control) 
Any feasible control function $p:\mathcal{X}\mapsto \mathcal{Z}$, where $\mathcal{Z}$ is defined in \eqref{admissible_strat_lev}, has the function decomposition
\begin{equation}
    p(x)=\varphi(\omega(x),x),
\end{equation}
where
$\varphi:\Tilde{\mathcal{Z}}\times\mathcal{X}\mapsto{\mathcal{Z}}$ is defined in (\ref{eq:psi_decomp}) and $\omega:\mathcal{X}\mapsto\Tilde{\mathcal{Z}}$.
\end{lemma}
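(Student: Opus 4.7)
The proof is constructive: for each $x \in \mathcal{X}$, I would exhibit an element $\omega(x) \in \tilde{\mathcal{Z}}$ such that $\varphi(\omega(x),x) = p(x)$, invoking the definition of $\varphi$ in \eqref{eq:psi_decomp} and the feasibility constraints on $p$ from \eqref{control_space_lev}--\eqref{admissible_strat_lev}.

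First I would split on the partition $\{\mathcal{X}_1,\mathcal{X}_2\}$. On $\mathcal{X}_2$, feasibility forces $p(x) = \boldsymbol{e}_{N_l+1}$, and since $\varphi(\cdot,x) \equiv \boldsymbol{e}_{N_l+1}$ on $\mathcal{X}_2$ by the indicator structure in \eqref{eq:psi_decomp}, any default choice of $\omega(x) \in \tilde{\mathcal{Z}}$ works (e.g.\ uniform weights with $z_{N_a+1}=0$). On $\mathcal{X}_1$, let $\ell(x) := \sum_{i=1}^{N_l} p_i(x)$. By the long-only constraint and the maximum leverage constraint, $0 \le \ell(x) \le p_{max}$, so I can set $z_{N_a+1} := \ell(x)$.

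The main construction is for the internal weights. I would define
\begin{equation}
z_i := \begin{cases} p_i(x)/\ell(x), & \ell(x) > 0, \\ 1/N_l, & \ell(x) = 0, \end{cases} \qquad i \in \{1,\dots,N_l\},
\end{equation}
and analogously
\begin{equation}
z_{N_l+j} := \begin{cases} p_{N_l+j}(x)/(1-\ell(x)), & \ell(x) \neq 1, \\ 1/(N_a - N_l), & \ell(x) = 1, \end{cases} \qquad j \in \{1,\dots,N_a - N_l\}.
\end{equation}
For the first block, nonnegativity and summation to one follow from the long-only constraint on $p$. For the second block, the key point is the simultaneous shorting constraint: when $\ell(x) > 1$, all shortable $p_{N_l+j}(x) \le 0$ and $\sum_j p_{N_l+j}(x) = 1-\ell(x) < 0$, so dividing by the negative quantity $1-\ell(x)$ yields nonnegative $z_{N_l+j}$ summing to one; when $\ell(x) < 1$, the shortable components are nonnegative with sum $1-\ell(x) > 0$, giving the same conclusion; when $\ell(x) = 1$, the summation and shorting constraints force each $p_{N_l+j}(x) = 0$, and the default simplex choice works since the factor $1-z_{N_a+1}$ annihilates it. Finally, verifying $\varphi(\omega(x),x) = p(x)$ in each case reduces to the identity $z_{N_a+1} \cdot (p_i/\ell) = p_i$ (and likewise for the shortable block), with the degenerate cases handled by the multiplier being zero.

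The only delicate step is precisely the handling of $\ell(x) \in \{0,1\}$, where the natural normalizations become $0/0$. This is not a real obstacle but requires the explicit case split above, together with the observation that in those degenerate cases the outer multiplier in $\varphi$ vanishes, so the arbitrary simplex choice is absorbed. Measurability of $\omega$ is not asserted by the lemma (it states only that $\omega:\mathcal{X} \to \tilde{\mathcal{Z}}$ exists pointwise), so no further regularity argument is needed.
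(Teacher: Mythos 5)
Your proof is correct and follows essentially the same constructive normalization as the paper's: define $\omega$ on $\mathcal{X}_1$ by dividing each block of $p(x)$ by its (signed) sum, using the long-only, summation, leverage, and simultaneous-shorting constraints to check the result lands in $\Tilde{\mathcal{Z}}$, with explicit defaults in the degenerate cases $\ell(x)\in\{0,1\}$, and take any fixed element of $\Tilde{\mathcal{Z}}$ on $\mathcal{X}_2$. If anything, your choice of uniform weights $1/N_l$ when $\ell(x)=0$ is slightly more careful than the paper's (which sets those entries to zero, so its image does not literally satisfy the sum-to-one condition defining $\Tilde{\mathcal{Z}}$), though in both versions the discrepancy is absorbed by the vanishing multiplier in $\varphi$.
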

\begin{proof}
    See Appendix \ref{app:validity_LFNN}.
\end{proof}

Next, we propose the following benign assumptions on the state space and the optimal control.
\begin{assumption} (Assumption on state space and optimal control) \\
\begin{enumerate}[label=(\roman*)]
\item The space $\mathcal{X}$ of state variables  is a compact set.
\item Following Lemma \ref{lemma:decomp_p_opt}, the optimal control $p^*:\mathcal{X}\mapsto{\mathcal{Z}}$ has the decomposition $p^*(x)=\varphi(\omega^*(x),x)$ for some $\omega^*:\mathcal{X}\mapsto\Tilde{\mathcal{Z}}$. We assume $\omega^*\in C(\mathcal{X},\Tilde{\mathcal{Z}})$, {where $C(\mathcal{X},\Tilde{\mathcal{Z}})$ denotes the set of continuous mappings from $\mathcal{X}$ to $\Tilde{\mathcal{Z}}$.}
\end{enumerate}\label{as:approx_control}
\end{assumption}

\begin{remark}
\textup{(Remark on Assumption \ref{as:approx_control}) In our particular problem of outperforming a benchmark portfolio, the state variable vector is $X(t)=(t,W(t),\hat{W}(t))^\top\in\mathcal{X}$ where $t\in[0,T]$. In this case, assumption (i) is equivalent to the assumption that the wealth of the active portfolio and benchmark portfolio is bounded, i.e. {$\mathcal{X}=[0,T]\times[w_{min},w_{max}]\times[w_{min},\hat{w}_{max}]$, where $w_{min},w_{max}$ and $\hat{w}_{min},\hat{w}_{max}$ are the respective wealth bounds for the portfolios.} Intuitively, assumption (ii) states that the decision process for the optimal control to obtain the internal allocation fractions within the long-only assets, shortable assets, and the total leverage is a continuous function. This is a natural extension of the long-only case, in which it is commonly assumed that the allocation within long-only assets is a continuous function of state variables. }
\end{remark}

Finally, we present the following theorem.
\begin{theorem}\label{theorem:approximation_opt_control} (Approximation of optimal control) Following Assumption \ref{as:approx_control}, $\forall\epsilon>0$, there exists $N_h\in\mathbb{N}$, and $\boldsymbol{\theta}\in\mathbb{R}^{N_{\boldsymbol{\theta}}}$ such that the corresponding LFNN model $f(\cdot;\boldsymbol{\theta})$ described in (\ref{LFNN}) satisfies the following:
\begin{equation}\label{def:dist_func}
    \sup_{x\in\mathcal{X}}\|f(x;\boldsymbol{\theta})-p^*(x)\| <\epsilon.
\end{equation}
\end{theorem}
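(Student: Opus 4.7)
The plan is to exploit the decomposition $f(\,\cdot\,;\boldsymbol{\theta})=\varphi\bigl(\zeta(\tilde{f}(\,\cdot\,;\boldsymbol{\theta})),\,\cdot\,\bigr)$ guaranteed by Lemma \ref{lemma:property_psi} and the parallel decomposition $p^*(x)=\varphi(\omega^*(x),x)$ furnished by Lemma \ref{lemma:decomp_p_opt}. Since $\varphi(\boldsymbol{e}_{N_l+1},x)=\boldsymbol{e}_{N_l+1}$ automatically matches $p^*(x)$ on $\mathcal{X}_2$, it suffices to control $\sup_{x\in\mathcal{X}_1}\|\varphi(\zeta(\tilde{f}(x;\boldsymbol{\theta})),x)-\varphi(\omega^*(x),x)\|$. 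On the slice $\mathcal{X}_1$, $\varphi(\cdot,x)$ is Lipschitz in its first argument (it is a bilinear/affine combination of entries bounded by $p_{max}$), so it is enough to show that $\zeta(\tilde{f}(\,\cdot\,;\boldsymbol{\theta}))$ can be made uniformly close to $\omega^*$ on $\mathcal{X}$.

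The first hurdle is that $\zeta:\mathbb{R}^{N_a+1}\to\mathrm{int}(\tilde{\mathcal{Z}})$ never attains boundary values (softmax coordinates are strictly positive and the sigmoid image is $(0,p_{max})$), whereas $\omega^*(x)$ may lie on the boundary of $\tilde{\mathcal{Z}}$. To deal with this, fix $\eta>0$ small and define the perturbed target
\begin{equation}
\omega^*_\eta(x)=(1-\eta)\,\omega^*(x)+\eta\, c,
\end{equation}
where $c=\bigl(1/N_l,\ldots,1/N_l,\,1/(N_a-N_l),\ldots,1/(N_a-N_l),\,p_{max}/2\bigr)^\top$ is the natural interior center of $\tilde{\mathcal{Z}}$. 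By convexity of $\tilde{\mathcal{Z}}$, $\omega^*_\eta(\mathcal{X})$ is a compact subset of $\mathrm{int}(\tilde{\mathcal{Z}})$, and $\|\omega^*_\eta-\omega^*\|_\infty\le \eta\cdot\mathrm{diam}(\tilde{\mathcal{Z}})$. Now construct an explicit continuous right inverse $g=\zeta^{-1}\circ\omega^*_\eta:\mathcal{X}\to\mathbb{R}^{N_a+1}$ by applying coordinate-wise logarithms on the two softmax blocks (which is well-defined because they are bounded away from $0$) and the logit on the last coordinate (well-defined because it is bounded away from $0$ and $p_{max}$). Continuity of $\omega^*$ together with continuity and compactness of $\omega^*_\eta(\mathcal{X})$ inside $\mathrm{int}(\tilde{\mathcal{Z}})$ implies $g\in C(\mathcal{X},\mathbb{R}^{N_a+1})$ and $g(\mathcal{X})$ is compact in $\mathbb{R}^{N_a+1}$.

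Next I would invoke the universal approximation theorem for single-hidden-layer feedforward networks with sigmoid activations \citep[e.g.,][]{hornik1989multilayer}: since $g$ is continuous on the compact set $\mathcal{X}$, for any $\epsilon'>0$ there exist $N_h\in\mathbb{N}$ and parameters $\boldsymbol{\theta}$ such that the FNN $\tilde{f}(\,\cdot\,;\boldsymbol{\theta})$ from \eqref{def:FNN} (applied coordinate-wise on the $N_a+1$ outputs) satisfies $\sup_{x\in\mathcal{X}}\|\tilde{f}(x;\boldsymbol{\theta})-g(x)\|<\epsilon'$. The function $\zeta$ is continuous on $\mathbb{R}^{N_a+1}$ and hence uniformly continuous on the compact enlarged set $\{y:\mathrm{dist}(y,g(\mathcal{X}))\le 1\}$, so for any $\delta>0$ we may choose $\epsilon'$ small enough that
\begin{equation}
\sup_{x\in\mathcal{X}}\bigl\|\zeta(\tilde{f}(x;\boldsymbol{\theta}))-\omega^*_\eta(x)\bigr\|<\delta.
\end{equation}

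Combining the three estimates, $\sup_{x\in\mathcal{X}}\|\zeta(\tilde{f}(x;\boldsymbol{\theta}))-\omega^*(x)\|\le \delta+\eta\cdot\mathrm{diam}(\tilde{\mathcal{Z}})$, and then the Lipschitz property of $\varphi(\cdot,x)$ on $\tilde{\mathcal{Z}}$ (with Lipschitz constant bounded by something like $1+p_{max}$) yields $\sup_{x\in\mathcal{X}}\|f(x;\boldsymbol{\theta})-p^*(x)\|\le L\,(\delta+\eta\cdot\mathrm{diam}(\tilde{\mathcal{Z}}))$, which can be made smaller than $\epsilon$ by first choosing $\eta$ then $\delta$. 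The main technical obstacle I expect is the first step, namely handling target values $\omega^*(x)$ that live on the boundary of $\tilde{\mathcal{Z}}$, since pre-images under softmax and sigmoid diverge to infinity there; the interior-perturbation trick above is what makes the universal approximation argument applicable without requiring unbounded network weights.
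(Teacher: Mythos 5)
Your proposal is correct, but it reaches the conclusion by a genuinely different route than the paper. Both arguments start from the same skeleton: the decomposition $f(\cdot;\boldsymbol{\theta})=\varphi(\zeta(\tilde{f}(\cdot;\boldsymbol{\theta})),\cdot)$ from Lemma \ref{lemma:property_psi}, the matching decomposition $p^*=\varphi(\omega^*(\cdot),\cdot)$ from Lemma \ref{lemma:decomp_p_opt}, exact agreement on $\mathcal{X}_2$, and piecewise Lipschitz continuity of $\varphi$ in its first argument to transfer an approximation of $\omega^*$ into an approximation of $p^*$. Where you diverge is in how you make $\zeta\circ\tilde{f}$ uniformly close to $\omega^*$: the paper packages this step into Lemma \ref{lemma:general_approx}, which invokes the non-Euclidean universal approximation theorem of \citet{kratsios2020non} and therefore must verify that $\zeta$ has a continuous right inverse, that $Im(\zeta)$ is dense in $\tilde{\mathcal{Z}}$, and that $\partial Im(\zeta)$ is collared. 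You instead sidestep that machinery entirely with the interior-perturbation trick: replace $\omega^*$ by $\omega^*_\eta=(1-\eta)\omega^*+\eta c$, which by convexity of $\tilde{\mathcal{Z}}$ lands in a compact subset of the relative interior (note it is the \emph{relative} interior --- $\tilde{\mathcal{Z}}$ has empty interior in $\mathbb{R}^{N_a+1}$ because of the two affine equality constraints, a cosmetic point only), pull back through the explicit continuous right inverse of $\zeta$, apply the classical Euclidean universal approximation theorem to the resulting continuous function on the compact set $\mathcal{X}$, and push forward using uniform continuity of $\zeta$ on a compact neighbourhood. Your approach buys self-containedness: it needs only the standard Hornik/Cybenko result and replaces the density and collared-boundary hypotheses with an elementary convexity argument that handles boundary values of $\omega^*$ (where preimages under softmax and sigmoid blow up) by hand. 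The paper's approach buys generality and reusability: Lemma \ref{lemma:general_approx} is stated for an arbitrary constraint-encoding map $\Upsilon$ and arbitrary finite partitions of $\mathcal{X}$, so it applies to other activation designs without redoing the perturbation argument. Both proofs are valid; the quantifier order at the end of yours (choose $\eta$ first so that $L\eta\,\mathrm{diam}(\tilde{\mathcal{Z}})<\epsilon/2$, then $\delta$, then the network size) is handled correctly.
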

\begin{proof}
    See Appendix \ref{app:validity_LFNN}.
\end{proof}

Theorem \ref{theorem:approximation_opt_control} shows that given any arbitrarily small tolerance $\epsilon>0$, there exists a suitable choice of the hyperparameter of the LFNN model (e.g. the number of hidden layers and nodes), and a parameter vector $\boldsymbol{\theta}$, such that the corresponding parameterized LFNN function is within this tolerance of the optimal control function.\footnote{The distance is defined in (\ref{def:dist_func}), i.e. the supremum of the pointwise distance over the extended state space $\mathcal{X}$.} In other words, with a large enough LFNN model (in terms of the number of hidden nodes), solving the unconstrained parameterized problem (\ref{prob:unconstrained_LFNN}) approximately solves the original optimization problem (\ref{generic_optimal_problem}) with any required precision.
}
{
\begin{remark}
\textup{(Empirical evidence of approximation) In practice, we find that a small neural network structure with one single hidden layer and only 10 hidden nodes 
achieves excellent approximation 
performance. In particular, in a numerical experiment with simulated data, we compare the LFNN model with the approximate form of the closed-form 
solution derived in Section \ref{sec:closed-form}, and find that the LFNN model mimics the closed-form solution very well. 
This provides further empirical evidence that supports Theorem \ref{theorem:approximation_opt_control}. Additional details can be found in Appendix \ref{sec:validate_NN}.}
\end{remark}
}

\subsection{Training LFNN}
Since the numerical experiments involve the solution and evaluation of the optimal parameters $\boldsymbol{\theta}^*$ of the LFNN model (\ref{LFNN}) in problem (\ref{prob:unconstrained_LFNN}), we briefly review {how the parameters are computed in experiments.}

In numerical experiments, the expectation in (\ref{prob:unconstrained_LFNN}) is approximated by using a finite set of samples of the set $\boldsymbol{Y}=\{Y^{(j)}:j=1,\cdots,N_d\}$, where $N_d$ is the number of samples, and $Y^{(j)}$ represents a time series sample of {\it joint} asset return observations $R_i(t),\;i\in\{1,\cdots,N_a\}$, observed at $t\in\mathcal{T}$.\footnote{Note that the corresponding set of asset prices can be easily inferred from the set of asset returns, or vice versa.} Mathematically, problem (\ref{prob:unconstrained_LFNN}) is approximated by 
\begin{equation}
    \quad\inf_{\boldsymbol{\theta}\in\mathbb{R}^{N_{\boldsymbol{\theta}}}}\Bigg\{\frac{1}{N_d}\sum_{j=1}^{N_d}F\left(\mathcal{W}^{(j)}_{\boldsymbol{\theta}},\hat{\mathcal{W}}^{(j)}_{\hat{\mathcal{P}}}\right)\Bigg\}.\label{prob:training}
\end{equation}
Here $\mathcal{W}^{(j)}_{\boldsymbol{\theta}}$ is the wealth trajectory of the active portfolio following the LFNN parameterized by $\boldsymbol{\theta}$, and $\hat{\mathcal{W}}^{(j)}$ is the wealth trajectory of the benchmark portfolio following the benchmark strategy $\hat{\mathcal{P}}$, both evaluated on $Y^{(j)}$, the $j$-th time series sample. 

 We use a shallow neural network model,  specifically,  an LFNN model with one single hidden layer with 10 hidden nodes, i.e., $K=1$ and $N_h^{(1)}=10$. We use the 3-tuple vector $(t,W_{\boldsymbol{\theta}}(t),\hat{W}(t))^\top$ as the input (feature) to the LFNN network. At $t\in[t_0,T]$, $W_{\boldsymbol{\theta}}(t)$ is the wealth of the active portfolio of the strategy {that follows} the LFNN model parameterized by $\boldsymbol{\theta}$, and $\hat{W}(t)$ is the wealth of the benchmark portfolio.

Then, the optimal parameter $\boldsymbol{\theta}^*$ can be numerically obtained by solving problem (\ref{prob:training}) using standard optimization algorithms such as ADAM \citep{kingma2014adam}. This process is commonly referred to as ``training'' of the neural network model, and $\boldsymbol{Y}$ is often referred to as the training data set \citep{goodfellow2016deep}.

Once $\boldsymbol{\theta}^*$ is numerically obtained, the resulting optimal strategy $f(\cdot;\boldsymbol{\theta}^*)$ is evaluated on a separate ``testing'' data set $\boldsymbol{Y}^{test}$, which contains a different set of samples generated from either the same distribution of the training process or a different process (depending on experiment purposes) so that the ``out-of-sample'' performance of $f(\cdot;\boldsymbol{\theta}^*)$ is assessed.

%%%%%%%%%%%%%%%%%%%%%%%%%%%%%%%%%%%%%%%%
% Computation results
%%%%%%%%%%%%%%%%%%%%%%%%%%%%%%%%%%%%%%%%

\section{Numerical experiments}\label{sec:numerical_experiments}
{In this section, we present a case study that explores optimal asset allocation during high-inflation periods using the LFNN model through numerical experiments. To conduct our analysis, we need data specifically from high inflation periods. Such data can be acquired using parametric modeling or non-parametric sample generation methods. It is important to note that our LFNN approach is agnostic to the choice of data modeling methods. While there is no universally accepted method for identifying or modeling high inflation regimes, for the purpose of this demonstration, we employ a simple filtering technique to identify inflation regime data and generate the required samples for training the LFNN.}

%------------------------------------------
% section: high inflation periods
%------------------------------------------
\subsection{Filtering historical inflation regimes}\label{sec:filtering_regimes}
We use the U.S. CPI index and monthly data from the Center for
Research in Security Prices (CRSP) over the
1926:1-2022:1 period.\footnote{The date convention is that, for example, 1926:1 refers
to January 1, 1926.}\footnote{More specifically, results presented here
were calculated based on data from Historical Indexes, \copyright
2022 Center for Research in Security Prices (CRSP), The University of
Chicago Booth School of Business. Wharton Research Data Services (WRDS) was
used in preparing this article. This service and the data available
thereon constitute valuable intellectual property and trade secrets
of WRDS and/or its third-party suppliers.}
%We also use the U.S. CPI index, also supplied by CRSP.
We select high-inflation periods as determined by the CPI index using the following filtering procedure.  %Monthly data
Using a moving window of
$k$ months,  we determine the cumulative CPI index log return (annualized) in this window.  If
the cumulative annualized CPI index log return is greater than a cutoff, then all the months in the window
are flagged as part of a high-inflation regime.  Note that some months may appear in more
than one moving window. Any months which do not meet this criterion are considered to be
in low-inflation regimes.
{ See Algorithm \ref{Algo:filter}  in Appendix \ref{sec:window_algo} for the pseudo-code. }

%This approach requires the specification of the cutoff and the window size.  
Since the average annual inflation over the period 1926:1-2022:1 was 2.9\%,  % In other words, inflation of about 3\% was normal 
and Federal Reserve policymakers have been targeting the inflation rate of 2\% over the long run to achieve maximum employment and price stability \citep{fed2011},
we use a cutoff of 5\% as the threshold for high inflation.
In addition, we use the moving window size of 5 years (see Appendix \ref{app:filter_window_size} for more discussion). This uncovers two inflation regimes: 1940:8-1951:7 and 1968:9-1985:10, which correspond to well-known market shocks (i.e. the second world war, and price controls; the oil price shocks and stagflation of the seventies).

Table \ref{regime_data} shows the average annual inflation over the two regimes identified from our filter.
\begin{table}[hbt!]
\begin{center}
\begin{tabular}{cc} \toprule
Time Period & Average Annualized Inflation \\
\hline
1940:8-1951:7 & .0564  \\
1968:9-1985:10 & .0661 \\
\bottomrule
\end{tabular}
\caption{Inflation regimes determined using a five-year moving window with a cutoff inflation rate of $0.05$.}
\label{regime_data}
\end{center}
\end{table}

For possible investment assets, we consider the 30-day U.S. T-bill index (CRSP designation ``t30ind''),  a constant maturity 10-year U.S. treasury index,\footnote{The
10-year treasury index was generated from monthly returns from CRSP
back to 1941 (CRSP designation ``b10ind''). The data for 1926-1941 are interpolated from annual
returns in \citet{homer1996history}. The 10-year treasury index
is constructed by (a) buying a 10-year treasury at the start of each month, (b) collecting interest during the month,
and then (c) selling the treasury at the end of the month.  We repeat the process at the start of
the next month.  The gains in the index then reflect both interest and capital gains and losses.} 
 and the cap-weighted stock index (CapWt)
and the equal-weighted stock index (EqWt), also from CRSP.\footnote{The capitalization-weighted total returns have the CRSP designation  ``vwretd'', and the equal-weighted
total returns have the CRSP designation ``ewretd''.}  
%The CRSP indexes are total
%return indexes, which include all distributions for all domestic stocks trading on major U.S. exchanges.
All of these various indexes are adjusted for inflation by using the U.S.\ CPI
index.

%We calculate the average historical returns of the assets during the two inflation periods (see details in Appendix \ref{app:inflation_asset_return}),

We find that the equal-weighted stock index has a higher average return and higher volatility than the cap-weighted stock index. In addition, we find that the 30-day T-bill index has a similar average return as the 10-year T-bond index, but much lower volatility, see Appendix \ref{app:inflation_asset_return} for more details. 
This indicates that the T-bill index is the better choice of a defensive asset during high inflation.
Subsequently, we consider the equal-weighted stock index, the cap-weighted stock index, and the 30-day T-bill index.

%-----------------------------------
% bootstrap resampling
%-----------------------------------
\subsection{Bootstrap resampling}\label{sec:high_inflation_bootstrap}
{
Once we have obtained the filtered historical high-inflation data series from Section \ref{sec:filtering_regimes}, it becomes necessary to generate training and testing data sets from the original time series data. While one common approach is to assume and fit a parametric model to the underlying data, it is important to acknowledge the limitations associated with this choice.

Parametric models have several drawbacks, including the difficulty of accurately estimating their parameters \citep{black1993estimating}. Even for a simple geometric Brownian motion (GBM) model, accurately estimating the drift rate can be challenging and prone to errors, requiring a long historical period of data coverage \citep{brigo2008stochastic}. More complex models, such as the jump-diffusion model (\ref{model:jump_diffusion}), introduce additional components to the stochastic model, which necessitates the estimation of extra parameters. Furthermore, parametric models inherently make assumptions about the true stochastic model for asset prices, which can be subject to debate.

Acknowledging the above limitations of parametric market data models, we turn to the alternative non-parametric method of bootstrap resampling as a data-generating process for numerical experiments. Unlike parametric models, non-parametric models such as bootstrap resampling do not make assumptions about the parametric form of the asset price dynamics. Intuitively speaking, the bootstrap resampling method randomly chooses data points from the historical time series data and reassembles them into new paths of time series data. 
The bootstrap was initially proposed as a statistical method for estimating the sampling distribution of statistics \citep{efron1992bootstrap}. We use it as a data-generating procedure, as the philosophy behind bootstrap resampling is {consistent with the 
idea that {\em``history does not repeat, but it rhymes.''}.} 
The bootstrap resampling provides an empirical distribution, which seems to be the least prejudiced estimate possible of the underlying distribution of the data-generating process. We also note that bootstrap resampling is widely adopted by practitioners \citep{alizadeh2007investment,cogneau2013block,dichtl2016timing,scott2017wealth,shahzad2019bitcoin,cavaglia2022multi,simonian2022sharpe} as well as academics \citep{anarkulova2022stocks}.

Specifically, we choose to use the stationary block bootstrap resampling method
\citep{politis1994stationary}.} See Appendix \ref{app:bootstrap} for detailed pseudo-code for bootstrap resampling. Compared to the traditional bootstrap method, the block bootstrap technique preserves the local dependency of data within blocks. Furthermore, the stationary block bootstrap uses random blocksizes which preserves the stationarity of the original time series data.
An important parameter is the expected blocksize, which, informally, is a measure of serial correlation in the return data. A challenge in using block bootstrap resampling is the need to choose a single blocksize for multiple underlying time series data so that the bootstrapped data entries for different assets are synchronized in time. Subsequently, we use the expected blocksize of 6 months for all time series data. However, we have compared different numerical experiments using a range of blocksizes, including i.i.d. assumptions (i.e. expected blocksize equal to one month), and find that the results are relatively insensitive to blocksize, as discussed in more detail in Appendix \ref{app:blk_effect}.

Typically, the bootstrap technique resamples from data sourced from one contiguous segment of historical periods. However, the moving-window filtering algorithm has identified two non-contiguous historical inflation regimes. To apply the bootstrap method, there are two intuitive possibilities: 1) concatenate the two historical inflation regimes first, then bootstrap from the concatenated combined series, or 2) bootstrap within each regime (i.e., using circular block bootstrap resampling within each regime), then combine the bootstrapped resampled data points. We have experimented with both methods and find that the difference is minimal (see Appendix \ref{app:bootstrap_noncont}). In this article, we adopt the first method, i.e., we concatenate the historical regimes first, then bootstrap from the combined series. 
This method is also adopted by \citet{anarkulova2022stocks}, where stock returns from different countries 
are concatenated and the bootstrap is applied to the combined data.

\subsection{A case study on high inflation investment: a 4-asset scenario}\label{sec:numerical_experiment}
\subsubsection{Experiment setup}
{
In this section, we conduct a case study on optimal asset allocation during a consistent high-inflation regime. The details of the investment specification are given in Table \ref{nn_main_case}. Briefly, the active portfolio and benchmark portfolio begin with the same initial wealth of 100 at $t_0=0$. Both portfolios are rebalanced monthly. The investment horizon is 10 years, and there is an annual cash injection of 10 for both portfolios, evenly divided over 12 months.

We consider an empirical case in which we allow the manager to allocate between four investment assets: 
the equal-weighted stock index, the cap-weighted stock index, the 30-day U.S. T-bill index, 
and the 10-year U.S. T-bond index. We assume that the stock indexes and the 10-year T-bond index are long-only assets. 
The manager can short the T-bill index to take leverage and invest in the long-only assets 
(with maximum total leverage of 1.3). In this experiment, we assume the borrowing premium rate 
is zero. Essentially, we assume that the manager can borrow short-term funding to take leverage 
at the same cost as the treasury bill. This may be a reasonable assumption for 
sovereign wealth funds, as they are state-owned and enjoy a high credit rating. 
We remark that the borrowing premium does not really affect the results significantly.\footnote{See Appendix \ref{sec:borrow_premium} for a more detailed discussion.} The annual outperformance target $\beta$ is set to be 2\% (i.e. 200 bps per year).

It is worth noting that we choose the benchmark portfolio to be a fixed-mix portfolio that maintains a 70\% weight in the equal-weighted stock index and 30\% in the 30-day U.S. T-bill index. We select this fixed-mix portfolio as the benchmark based on our observation that the equal-weighted stock index shows superior performance compared to the cap-weighted stock index during high-inflation environments. Surprisingly, when analyzing bootstrap resampled data from the historical inflation regimes, we find that the fixed-mix portfolio consisting of 70\% in the equal-weighted stock index and 30\% in the 30-day U.S. T-bill index partially
stochastically dominates the fixed-mix portfolio consisting of 70\% in the cap-weighted stock index and 30\% in the 30-day U.S. T-bill index. For more detailed information, interested readers can refer to Appendix \ref{sec:sto_dom}.
}

\begin{table}[htb]
\begin{center}
\begin{tabular}{lc} \toprule
Investment horizon $T$ (years) & 10  \\
Equity market indexes & CRSP cap-weighted/equal-weighted index (real) \\
Bond index & CRSP 30-day/10-year U.S. treasury index (real) \\
Index samples for bootstrap& Concatenated 1940:8-1951:7, 1968:9-1985:10\\
Initial portfolio wealth/annual cash injection  & 100/10 \\
Rebalancing frequency & Monthly\\
Maximum leverage  & 1.3\\
Outperformance target rate $\beta$ & 2\%\\
\bottomrule
\end{tabular}
\caption{Investment scenario. 
\label{nn_main_case}}
\end{center}
\end{table}

As discussed in the previous section, we use the stationary bootstrap resampling algorithm (see Appendix \ref{app:bootstrap}) to generate 
a training data set $\boldsymbol{Y}$ and a testing data set $\boldsymbol{Y}^{test}$ (both with 10,000 resampled paths) from the concatenated index samples from two historical inflation regimes: 1940:8-1951:7 and 1968:9-1985:10, using an expected blocksize of 6 months. The testing data set $\boldsymbol{Y}^{test}$ is generated using a different random seed as the training data set $\boldsymbol{Y}$, and thus the probability of seeing the same sample in $\boldsymbol{Y}$ and $\boldsymbol{Y}^{test}$ is near zero (see \citet{ni2022optimal} for proof). 

We remark that in this experiment, we train the LFNN model (\ref{LFNN}) on $\boldsymbol{Y}$ under the discrete-time CS objective (\ref{prob:CS_disc_opt}), instead of the CD objective (\ref{prob:CD_disc}). As discussed in Section \ref{sec:obj_choice}, the CS objective function only penalizes underperformance relative to the elevated target. Numerical comparisons of the two objective functions suggest that the CS objective function indeed yields more favorable investment results than the CD objective (see Appendix \ref{sec:shorfall_obj}). 

{In this section, unless stated otherwise, all the results presented are testing results.}

\subsection{Experiment results}
\begin{figure}[htb]
\centerline{%
\begin{subfigure}[t]{.33\linewidth}
\centering
\includegraphics[width=\linewidth]{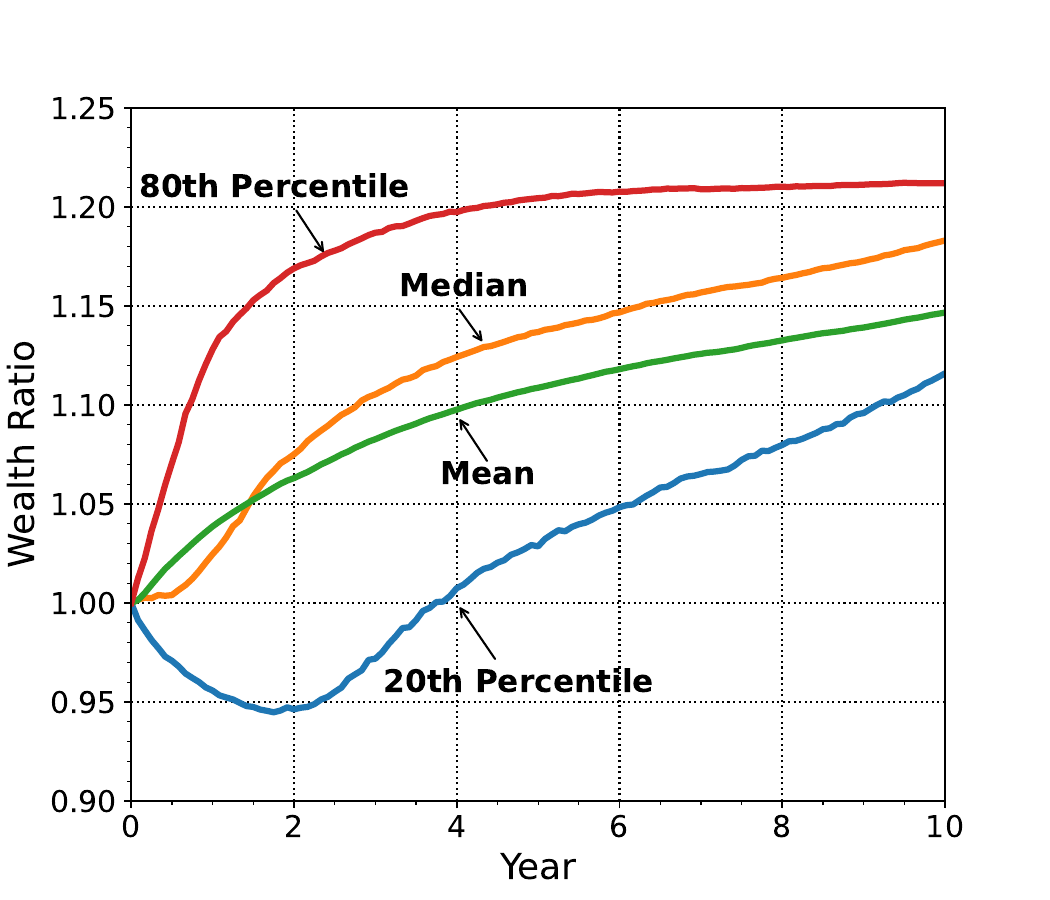}
    \caption{Percentiles of wealth ratio $\frac{W(t)}{\hat{W}(t)}$}
\label{fig:wealth_ratio}
\end{subfigure}
\begin{subfigure}[t]{.33\linewidth}
\centering
\includegraphics[width=\linewidth]{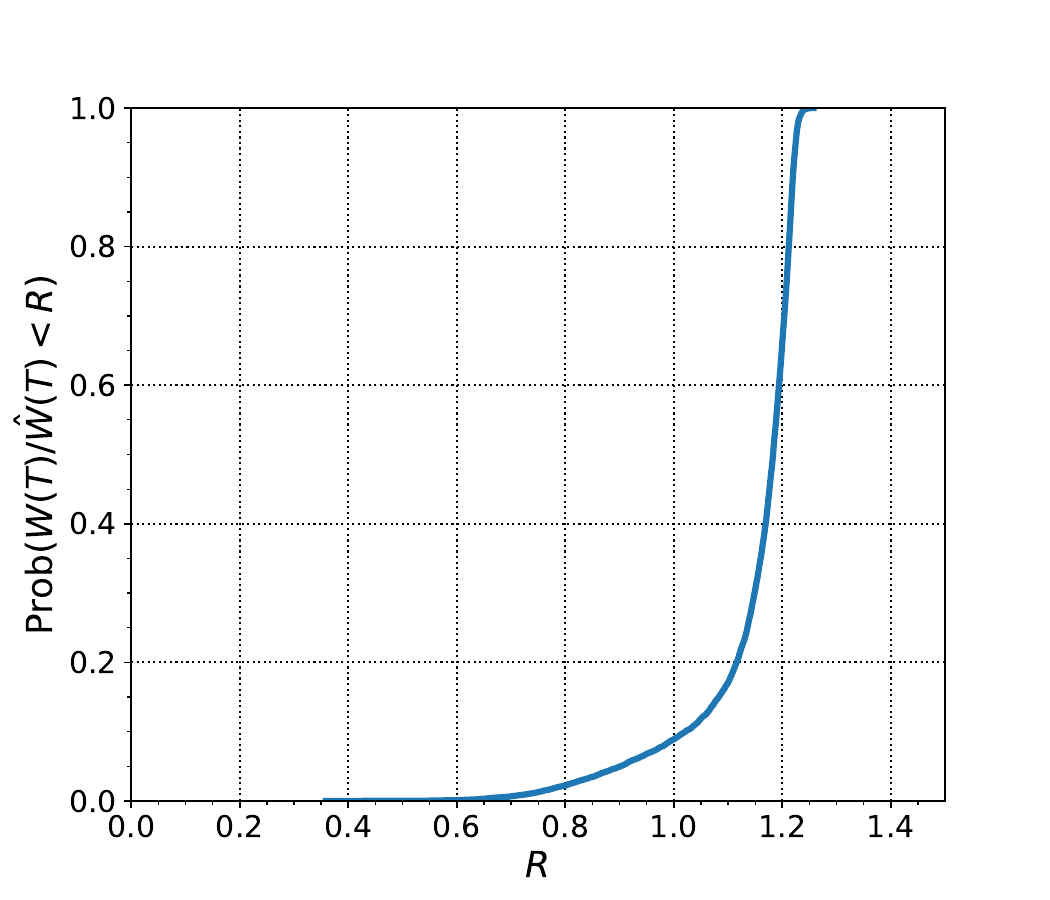}
\caption{Terminal wealth ratio}
\label{fig:cdf_wt_ratio}
\end{subfigure}
\begin{subfigure}[t]{.33\linewidth}
\centering
\includegraphics[width=\linewidth]{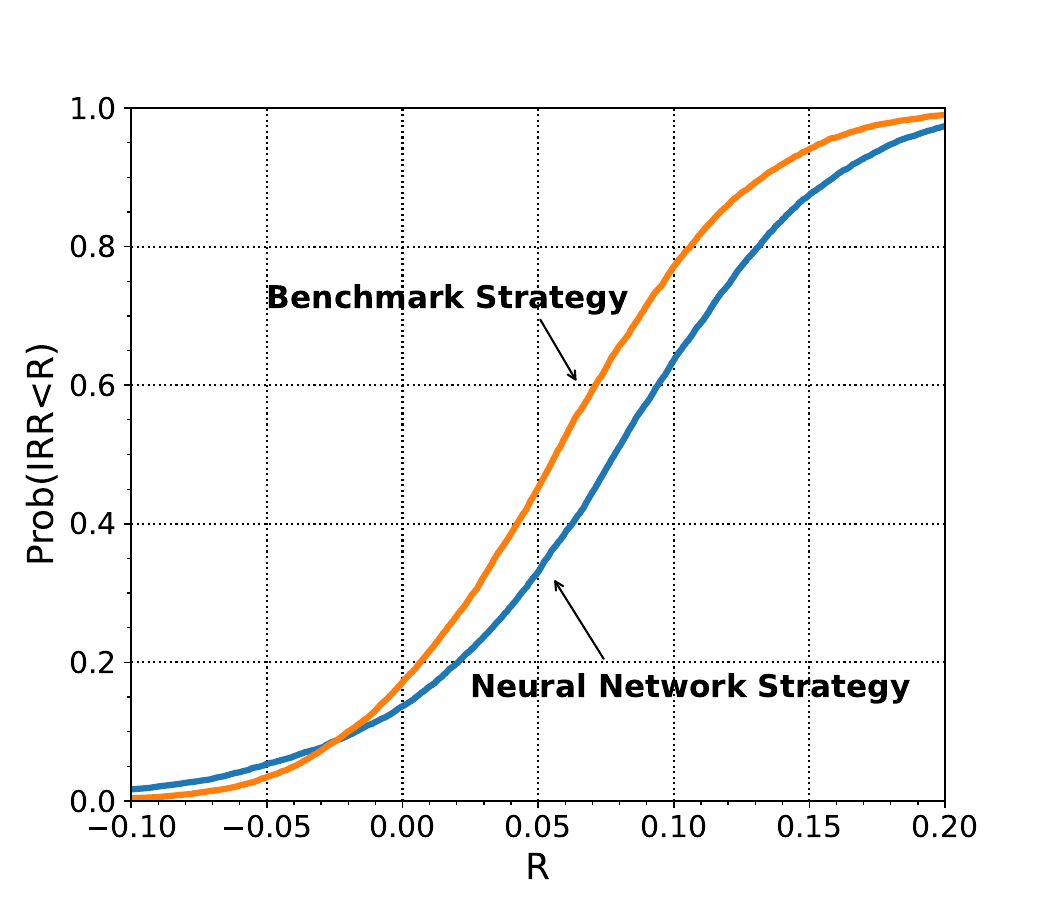}
\caption{CDF of IRR}
\label{fig:cdf_irr}
\end{subfigure}
}
\caption{
Percentiles of wealth ratio $\frac{W(t)}{\hat{W}(t)}$ and CDF of terminal wealth ratio $\frac{W(T)}{\hat{W}(T)}$ and internal rate of return (IRR). Results are based on the evaluation of the learned neural network model on $\boldsymbol{Y}^{test}$.
}
\label{fig_main_results}
\end{figure}

\begin{table}[hbt!]
\begin{center}
\begin{tabular}{lccccc} \hline
Strategy &  Median[$W_T$] & E[$W_T$] &   std[$W_T$] & 5th Percentile  & Median IRR (annual)    \\ \hline
 Neural network & 364.2        & 403.4    &  211.8    & 136.3  & 0.078\\
 Benchmark     & 308.5        & 342.9    & 165.0    & 149.0 & 0.056 \\ \hline
\end{tabular}
\caption{Statistics of strategies. Results are based on the evaluation results on the testing data set.}
\label{tb:main_results}
\end{center}
\end{table}

The analysis of Figure \ref{fig:wealth_ratio} reveals that the neural network strategy (the strategy following the training LFNN model) consistently outperforms the benchmark strategy in terms of the wealth ratio $W(t)/\hat{W}(t)$. Over time, both the mean and median wealth ratios demonstrate a smooth and consistent increase. Regarding tail performance (20th percentile), the neural network strategy initially falls behind the benchmark but gradually recovers and ultimately achieves 10\% greater wealth at the terminal time. This observation indicates that the neural network strategy effectively manages tail risk.

An additional metric that holds significant interest for managers is the distribution of the terminal wealth ratio $\frac{W(T)}{\hat{W}(T)}$. This metric examines the relative performance of the strategies at the end of the investment period. Figure \ref{fig:cdf_wt_ratio} illustrates that there is a greater than 90\% chance that the neural network strategy outperforms the benchmark strategy in terms of terminal wealth. This outcome is particularly noteworthy as the objective function (\ref{prob:CS_disc_opt}) does not directly target the terminal wealth ratio.

Given the constant cash injections in the portfolios, it is appropriate to employ the internal rate of return (IRR) as a measure of the portfolio's annualized performance. Figure \ref{fig:cdf_irr} demonstrates that the neural network strategy has a more than 90\% chance of producing a higher IRR. Furthermore, the median IRR of the neural network strategy exceeds that of the benchmark strategy by slightly over 2\%, aligning with the chosen target outperformance rate of $\beta=0.02$. This indicates that the neural network model consistently achieves the desired target performance across most outcomes.

The results from Table \ref{tb:main_results} indicate that the 5th percentile of the terminal wealth for the neural network strategy is lower than that of the benchmark strategy. This suggests that in some scenarios, particularly during persistent bear markets when stocks perform poorly, the neural network strategy may experience lower terminal wealth compared to the benchmark strategy. The neural network strategy takes on more risk by allocating a higher fraction of wealth to the equal-weighted stock index, which is considered a riskier asset, in comparison to the benchmark portfolio.

It's important to note, however, that these scenarios occur with low probability. As depicted in Figure \ref{fig:cdf_wt_ratio}, the neural network strategy exhibits a significantly high probability of outperforming the benchmark in terms of terminal wealth, exceeding 90\%. This implies that while there might be instances where the neural network strategy suffers relative to the benchmark, the overall performance is consistently strong, resulting in a high likelihood of achieving superior terminal value.

To gain insight into the strong performance of the neural network strategy, we further examine its allocation profile. 
\begin{figure}[htb]
\centering
\includegraphics[width=3.0in]{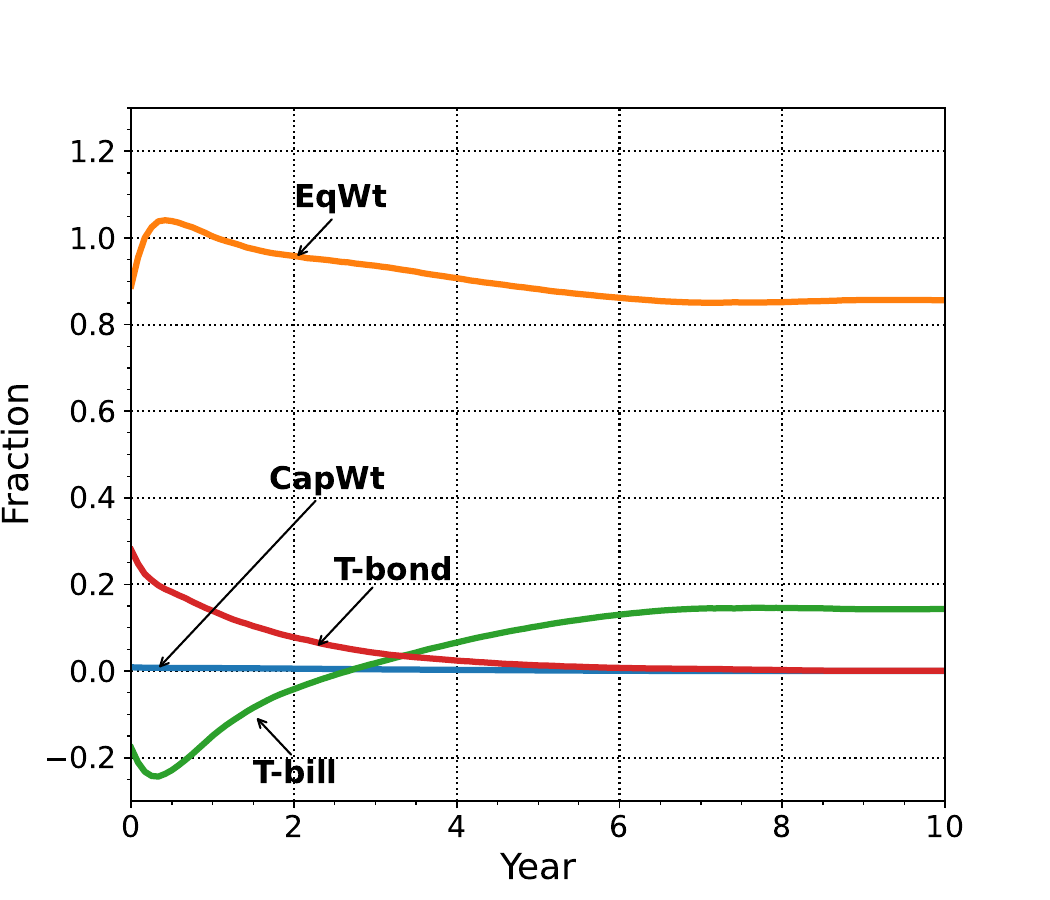}
\caption{Mean allocation fraction over time, evaluated on $\boldsymbol{Y}^{test}$}
\label{fig:mean_alloc}
\end{figure}

We begin by examining the mean allocation fraction for the four assets over time, as depicted in Figure \ref{fig:mean_alloc}. The first noteworthy observation from Figure \ref{fig:mean_alloc} is that, on average, the neural network strategy does not allocate wealth to the cap-weighted stock index. Initially, this might appear surprising; however, it aligns with historical data indicating significantly higher real returns for the equal-weighted stock index during periods of high inflation (refer to Appendix \ref{app:inflation_asset_return}). Given that the objective is to outperform a benchmark heavily invested in the equal-weighted index (70\%), it is logical to avoid allocating wealth to a comparatively weaker index in the active portfolio.

The second observation derived from Figure \ref{fig:mean_alloc} pertains to the evolution of mean bond allocation fractions. Initially, the neural network strategy shorts the 30-day T-bill index and assumes some leverage while heavily investing in the equal-weighted stock index during the first two years. This indicates a deliberate risk-taking approach early on to establish an advantage over the benchmark strategy. Subsequently, the allocation to the 10-year T-bond decreases, coinciding with the reduction in the allocation to the equal-weighted index. This suggests that the initial allocation to the T-bond was primarily for leveraging purposes, with the 10-year bond being the only defensive asset available. As leverage is no longer used in later years, the neural network strategy favors the T-bill over the 10-year bond.

Overall, despite the gradual decrease in stock allocation over time, the neural network strategy maintains an average allocation of more than 80\% to the equal-weighted stock index. This is expected, as outperforming an aggressive benchmark with a 70\% allocation to the equal-weighted stock index necessitates assuming higher levels of risk. Despite the higher allocation to riskier assets, the neural network strategy consistently delivers strong results compared to the benchmark strategy, as illustrated in Figure \ref{fig_main_results}.

Lastly, it is worth noting that the neural network strategy, trained under high-inflation regimes, exhibits remarkable performance on low-inflation testing datasets. This unexpected outcome highlights the robustness of the strategy. For further discussion on this topic, interested readers can refer to Appendix \ref{sec:low_inflation}.

%%%%%%%%%%%%%%%%%% conclusion
\section{Conclusion}
{In this paper, our primary objective is to propose a framework that generates optimal dynamic allocation strategies under leverage constraints in order to outperform a benchmark during high inflation regimes. Imposing leverage-constraint in multi-period
asset allocation is consistent with the practice in large sovereign wealth funds, which often have exposures to alternative assets. Our proposed framework efficiently solves high-dimensional optimal control problems, accommodating diverse objective functions, constraints, and data sources.

We begin by assuming that both asset prices follow jump-diffusion models. Under this assumption, we derive a closed-form solution for a two-asset case using the cumulative tracking difference (CD) objective function. However, to obtain this closed-form solution, we need to make additional unrealistic assumptions such as continuous rebalancing, unlimited leverage, and continued trading in insolvency. Despite these assumptions, the closed-form solution provides valuable insights into the optimal control behavior. Notably, to track the elevated target, the optimal control needs to aim higher than the target when making allocation decisions.

To overcome the limitations of unrealistic assumptions and derive a more practical solution, we introduce a novel leverage-feasible neural network (LFNN) model. The LFNN model approximates the optimal control directly, eliminating the need for high-dimensional approximations of conditional expectations required in dynamic programming approaches. Additionally, the LFNN model converts the leverage-constrained optimization problem into an unconstrained optimization problem. Importantly, we justify the validity of the LFNN approach by mathematically proving that the solution to the parameterized unconstrained optimization problem can approximate the solution to the original constrained optimization problem with arbitrary precision.

To illustrate the effectiveness of our proposed approach, we conduct a case study on optimal asset allocation during high-inflation regimes. We apply the LFNN model to bootstrap resampled data from filtered historical high-inflation data. In our numerical experiment, we consider an investment case with four assets in high inflation regimes. The results consistently demonstrate that the neural network strategy outperforms the benchmark strategy throughout the investment period. Specifically, the neural network strategy achieves a 2\% higher median Internal Rate of Return (IRR) compared to the benchmark strategy and yields a higher terminal wealth with more than a 90\% probability. The allocation strategy derived from the LFNN model suggests that managers should favor the equal-weighted stock index over the cap-weighted stock index and short-term bonds over long-term bonds during high-inflation periods.}

\section{Acknowledgements}
Forsyth's work was supported by the Natural Sciences and Engineering Research Council of
Canada (NSERC) grant RGPIN-2017-03760.  Li's work was supported by he Natural Sciences and Engineering Research Council of
Canada (NSERC) grant RGPIN-2020-04331.

\section{Conflicts of interest}
The authors have no conflicts of interest to report.

%%%%%%%%%%%%%%%%%%%% Appendix %%%%%%%%%%%%%%
\appendix
%%%%%%%%%%%%%%%%%%%%%%%%%%%% filter

%%%%%%%%%%%%%%%%%% closed-form
\section{Technical details of closed-form solution}
\subsection{Proof of Theorem (\ref{eq:PIDE})}\label{sec:proof_verification}
At any state $(t,w,\hat{w})\in[t_0,T]\times\mathbb{R}^2$, define the value function ${V}(w,\hat{w},t)$ to the CD problem (\ref{prob:CD_cont}) as
\begin{equation}
    {V}(t, w,\hat{w},\hat{\boldsymbol{\varrho}})) = \inf_{\boldsymbol{p}}\Big\{\mathbb{E}_{\boldsymbol{p}}\Bigg[\int_{t}^T\big(W(s)-e^{\beta s}\hat{W}(s)\big)^2ds\Big|W(t)=w,\hat{W}(t)=\hat{w}\Big]\Bigg\}.
\end{equation}
By the dynamic programming principle, we have
\begin{equation}
    {V}(t, w,\hat{w},\hat{\boldsymbol{\varrho}})=\inf_{\boldsymbol{p}}\Big\{\mathbb{E}_{\boldsymbol{p}}\Big[\Big({V}(t+\Delta t,W(t+\Delta t),\hat{W}(t+\Delta t),\hat{\boldsymbol{\varrho}})+\int_{t}^{t+\Delta t}\big(W(s)-e^{\beta s}\hat{W}(s)\big)^2ds\Big)\Big|W(t)=w,\hat{W}(t)=\hat{w}\Big]\Big\}\label{eq:dpp}
\end{equation}
Rearrange equation (\ref{eq:dpp}) to obtain
\begin{align}
    \inf_{\boldsymbol{p}}\Big\{\mathbb{E}_{\boldsymbol{p}}\Big[\Big(d{V}(t,w,\hat{w},\hat{\boldsymbol{\varrho}})+\int_{t}^{t+\Delta t}\big(W(s)-e^{\beta s}\hat{W}(s)\big)^2ds\Big)\Big|W(t)=w,\hat{W}(t)=\hat{w}\Big]\Big\}=0\label{eq:dpp_final}
\end{align}
Then, apply Itô's lemma with jumps \citep{cont2011nonparametric}, 
substitute $dW$ and $d\hat{W}$ terms with 
(\ref{dynamics_jump_diffusion}), and take limits as 
$\Delta t \downarrow0$, we obtain (\ref{eq:PIDE}).

The above results merely serve as an intuitive guide to obtain (\ref{eq:PIDE}). The formal proof of (\ref{eq:PIDE}) proceeds by using a suitably smooth test function, see for example \citep{oksendal2007applied}.

\subsection{Proof of results for CD-optimal control}\label{sec:obtain_opt_control}
In Section \ref{sec:closed-form}, we emphasized the dependence of $B$ and $D$ (defined in (\ref{def:B}) and (\ref{def:A_D})) on parameters $\beta$ and $c$ for understanding the optimal control function. As $\beta$ and $c$ are fixed parameters, in this proof, we omit the dependence of $B$ and $D$ on them for notational simplicity. 

The quadratic source term $\big(w-e^{\beta t}\hat{w}\big)^2$ in Theorem (\ref{eq:PIDE}) suggests the following {\it ansatz} for the value function $V$ in Theorem \ref{eq:PIDE} of the form
\begin{equation}
    V(t, w,\hat{w})=A(t)w^2+B(t)w+C(t)+\hat{A}(t)\hat{w}^2+\hat{B}(t)\hat{w}+D(t)w\hat{w},\label{def:ansatz}
\end{equation}
where $A,B,C,\hat{A},\hat{B},D$ are unknown deterministic functions of time $t$. If (\ref{def:ansatz}) is correct, then the pointwise infimum in (\ref{eq:PIDE}) is attained by $p^*$ satisfying the relationship
\begin{align}
    \Bigg(w\cdot\frac{\partial^2 V}{\partial w^2}\Bigg)\cdot p^* 
           = -\frac{1}{\gamma}\Bigg(\big(\mu_1-\mu_2\big)\cdot\frac{\partial V}{\partial w}+\big(\hat{\varrho}\gamma
   +\theta\big)\cdot\hat{w}\cdot\frac{\partial^2 V}{\partial w\hat{w}}
               +\theta\cdot w\cdot\frac{\partial^2 V}{\partial w^2}\Bigg),
    \label{eq:p_first_order_optimality}
\end{align}
assuming $A(t) > 0$.
Here $\gamma$ and $\theta$ are defined in (\ref{def:soln_params}). (\ref{def:ansatz}) implies that the relevant partial derivatives of $V$ are of the form
\begin{equation}
  \frac{\partial^2 V}{\partial w^2}=2A(t),\quad \frac{\partial V}{\partial w}=2A(t)w+B(t)+D(t)\hat{w},\quad\frac{\partial^2 V}{\partial w\hat{w}}=D(t).\label{eq:V_derivatives}
\end{equation}
Substituting (\ref{eq:V_derivatives}) into (\ref{eq:p_first_order_optimality}), the optimal control $p^*$ obtained is in the form of (\ref{def:optimal_control}), where $h$ and $g$ are given by (\ref{def:g_h}). Then, it only remains to determine the functions $A,B,D$. Substituting (\ref{eq:p_first_order_optimality}) into PIDE (\ref{eq:PIDE}), we can obtain the following ordinary differential equations (ODE) for $A,B,D$,
\begin{equation}
    \begin{cases}
      \frac{dA(t)}{dt}=-\Big(2\mu_2-\eta\Big)A(t)-1,\qquad A(T)=0,\\
      \frac{dD(t)}{dt}=-\Big(2\mu_2-\eta\Big)D(t)+2e^{\beta t},\qquad D(T)=0,\\
      \frac{dB(t)}{dt}=-(\mu_2-\phi)B(t)-2cA(t)-cD(t),\qquad B(T)=0,
    \end{cases}\label{eq:ode}
\end{equation}
Solving the ODE system gives us the $A,B,D$ defined in (\ref{def:A_D})
and (\ref{def:B}).  We also note that
$A(t) > 0$,  thus completing the proof.

\subsection{Proof of Corollary (\ref{coro:prop_g}) and (\ref{coro:prop_h}) }\label{sec:proof-g-h}
\citet{van2022dynamic} derive the CD-optimal control under the assumption that the stock price follows the double-exponential jump-diffusion model and the bond is risk-free with the bond price $B(t)$ following
\begin{equation}
    \frac{dB(t)}{B(t)}=r.\label{def:riskfree_bond}
\end{equation}
Under such, assumptions,
\citet{van2022dynamic} shows that the CD-optimal control can be expressed in a similar form as in (\ref{def:optimal_control}) with $g$ and $h$ functions. The $g$ and $h$ functions satisfy the same properties as in Corollary (\ref{coro:prop_g}) and (\ref{coro:prop_h}). 
Despite the
fact that we assume the bond price follows 
a jump-diffusion model, the proof 
of Corollary (\ref{coro:prop_g}) and (\ref{coro:prop_h}) 
follows similar steps as the proof in \citet{van2022dynamic}.

%%%%%%%%%%%%%%%%%%%%%%%%%%%% LFNN
{
\section{Technical details of LFNN model}
\subsection{Proof of Theorem \ref{theorem:feasibility_domain}}\label{app:proof_feasibility_dom}
\textbf{Theorem \ref{theorem:feasibility_domain}.}
(Unconstrained feasibility domain) The feasibility domain $\mathcal{Z}_{\boldsymbol{\theta}}$ defined in (\ref{def:Z_theta}) associated with the LFNN model (\ref{LFNN}) is $\mathbb{R}^{N_{\boldsymbol{\theta}}}$.
\begin{proof}
First, it is obvious that $\mathcal{Z}_{\boldsymbol{\theta}}\subseteq\mathbb{R}^{N_{\boldsymbol{\theta}}}$ by definition of (\ref{def:Z_theta}). Next, we show that $\mathbb{R}^{N_{\boldsymbol{\theta}}}\subseteq\mathcal{Z}_{\boldsymbol{\theta}}$. To prove this, we need to show that for any $\boldsymbol{\theta}\in\mathbb{R}^{N_{\boldsymbol{\theta}}}$,
{
\begin{equation}
    f(x;\boldsymbol{\theta})={p}\in
\begin{cases}
\mathcal{Z}_1,\;\text{if }x\in\mathcal{X}_1,\\\mathcal{Z}_2,\;\text{if }x\in\mathcal{X}_2,
\end{cases}\forall x\in\mathcal{X}.
    \label{eq:Z_theta_contain_R}
\end{equation}
}Here $f$ is the LFNN function defined in (\ref{LFNN}), ${p}=(p_1,\cdots,p_{N_a})^\top\in\mathbb{R}^{N_a}$ is the output of the LFNN model that represents the wealth allocation to the assets, $\mathcal{Z}$ is the feasibility domain defined in (\ref{control_space_lev}), and $x=\big(t,W(t),\hat{W}(t)\big)^\top\in\mathcal{X}$ is a feature vector. {To prove (\ref{eq:Z_theta_contain_R}), we verify the two scenarios ($x\in\mathcal{X}_1$ and $x\in\mathcal{X}_2$) separately.

When $x\in\mathcal{X}_2$, it is easily verifiable that $p=\boldsymbol{e}_{N_l+1}$ via the definition of the leverage-feasible activation function (\ref{psi_lev_feas_func}).

Next, we verify that when $x\in\mathcal{X}_1$, ${p}\in\mathcal{Z}_1$. To prove this, we need to show that constraints of (\ref{cons:long-only})-(\ref{constraint:simult_short}) are satisfied when $x\in\mathcal{X}_1$. }

By definition of (\ref{psi_lev_feas_func}), it is obvious 
that the long-only constraint (\ref{cons:long-only}) holds for long-only assets. 

It is also easy to verify that the summation constraint (\ref{cons:sum-one}) is satisfied. This can be observed after the fact that
\begin{equation}
    \sum_{i=1}^{N_l}p_i=l,\quad\text{and}\quad\sum_{i=N_l+1}^{N_a}p_i=1-l.
\end{equation}

The maximum leverage constraint (\ref{cons:max-lev}) is also satisfied, as 
\begin{equation}
    \sum_{i=1}^{N_l}p_i=l = p_{max}\cdot\text{Sigmoid}(-o_{N_a+1}) \leq p_{max}.
\end{equation}

Finally, the simultaneous shorting constraint (\ref{as:simult_short}) is satisfied. 
To see this, we examine the scenario when leverage occurs, 
i.e., $\sum_{i=1}^{N_l}p_i=l>1$. Then, by definition from (\ref{psi_lev_feas_func}), we know
\begin{equation}
    p_i = (1-l)\cdot\frac{e^{o_i}}{\sum_{k=N+1}^{N_a}e^{o_k}} \leq 0,\;\forall i\in\{N_l+1,\cdots,N_a\}
    \label{short_case_appendix}
\end{equation}
From (\ref{short_case_appendix}) it is clear that if $l \leq 1$, then $p_i \geq 0, \forall i$.

Therefore, for any $\boldsymbol{\theta}\in\mathbb{R}^{N_{\boldsymbol{\theta}}}$, (\ref{eq:Z_theta_contain_R}) is satisfied. This implies $\mathbb{R}^{N_{\boldsymbol{\theta}}}\subseteq\mathcal{Z}_{\boldsymbol{\theta}}$.

\end{proof}

\subsection{Proof of Lemma \ref{lemma:decomp_p_opt} and Theorem \ref{theorem:approximation_opt_control}}\label{app:validity_LFNN}
\textbf{Lemma \ref{lemma:decomp_p_opt}.} (Structure of feasible control) 
Any feasible control function $p:\mathcal{X}\mapsto {\mathcal{Z}}$, where $\mathcal{Z}$ is defined in \eqref{admissible_strat_lev}, has the function decomposition

\begin{equation}
    p(x)=\varphi(\omega(x),x),
\end{equation}
where
$\varphi:\Tilde{\mathcal{Z}}\times\mathcal{X}\mapsto{\mathcal{Z}}$ is defined in (\ref{eq:psi_decomp}), i.e.
\begin{equation}
\varphi(z)=\Big(z_{N_a+1}\cdot (z_1,\cdots,z_{N_l}),(1-z_{N_a+1})\cdot(z_{N_l+1},\cdots,z_{N_a})\Big)^\top\cdot\textbf{1}_{\boldsymbol{x}\in\mathcal{X}_1}+\boldsymbol{e}_{N_l+1}\cdot\textbf{1}_{\boldsymbol{x}\in\mathcal{X}_2},
\end{equation}
and $\omega:\mathcal{X}\mapsto\Tilde{\mathcal{Z}}$. Here
\begin{equation}
 \Tilde{\mathcal{Z}}=\Bigg\{z\in\mathbb{R}^{N_a+1},\sum_{i=1}^{N_l}z_i=1,\sum_{i=N_l+1}^{N_a}z_i=1,z_{N_a+1}\leq p_{max}, z_i\geq0, \forall i\Bigg\}.   
\end{equation}

\begin{proof}
We prove the lemma by existence.

Define $\omega$ as
\begin{equation}
    \omega(x)=
    \begin{cases}
    \phi\big(p(x)\big),\quad\qquad\qquad\qquad\qquad\qquad\qquad\text{if }x\in\mathcal{X}_1,\\
    \big(\frac{1}{N_l},\cdots,\frac{1}{N_l},\frac{1}{N_a-N_l},\cdots,\frac{1}{N_a-N_l},0\big)^\top,\;\;\text{if }x\in\mathcal{X}_2,
    \end{cases}
\end{equation}

where   
    for $\forall{z}=(z_1,\cdots,z_{N_a})^\top\in\mathcal{Z}_1$,  $y=\phi(z){\in\mathbb{R}^{N_a+1}}$ is as defined below
\begin{equation}
\phi(z) \equiv y =
\begin{cases}
\begin{cases}
    y_i = \frac{z_i}{\sum_{j=1}^{N_l}z_j},\; i\in\{1,\cdots,N_l\},\\
    y_i = \frac{z_i}{1-\sum_{j=1}^{N_l}z_j},\; i\in\{N_l,\cdots,N_a\},\\
    y_{N_a+1}=\sum_{j=1}^{N_l}z_j,
\end{cases} \quad\text{if $\sum_{i=1}^{N_l}z_i\in(0,1)\cup(1,p_{max}]$,}\\
\begin{cases}
    y_i = z_i,\; i\in\{1,\cdots,N_l\},\\
    y_i = 1/(N_a-N_l),\; i\in\{N_l,\cdots,N_a\},\\
    y_{N_a+1}=1,
\end{cases} \text{if $\sum_{i=1}^{N_l}z_i=1$,}\\
\begin{cases}
    y_i = 0,\; i\in\{1,\cdots,N_l\},\\
    y_i = z_i,\; i\in\{N_l,\cdots,N_a\},\\
    y_{N_a+1}=0,
\end{cases} \qquad\qquad\;\;\text{if $\sum_{i=1}^{N_l}z_i=0$,}\\
\end{cases}
\end{equation}
It can then be easily verified that $\omega:\mathcal{X}\mapsto\Tilde{\mathcal{Z}}$, and that $p(x)=\varphi(\omega(x),x)$.

\end{proof}

%%%%%%%%%%%%%%%%%%%%%%%%%%%%%%%%%
% lemma for general approximation theorem

\begin{lemma}\label{lemma:general_approx} \textup{(Approximation of controls with a specific structure)} Assume a control function $p:\mathcal{X}\mapsto\mathcal{Z}$ has the structure 
\begin{equation}\label{eq:general_func}
    p(x)=\Phi(\Omega(x),x), x\in\mathcal{X},
\end{equation}
where $\mathcal{X}$ is compact, $\Omega\in C(\mathcal{X},\mathcal{Y})$, {i.e. $\Omega$ is a continuous mapping from $\mathcal{X}$ to $\mathcal{Y}$,} and $\Phi:\mathcal{Y}\times\mathcal{X}\mapsto\mathcal{Z}$ is Lipschitz continuous on $\mathcal{Y}\times\mathcal{X}_i$, $\forall i=1,\cdots,n$, where $\{\mathcal{X}_i,i=1,\cdots,n\}$ is a partition of $\mathcal{X}$, i.e. 
\begin{equation}
    \begin{cases}
    \bigcup_{i=1}^n \mathcal{X}_i=\mathcal{X},\\
    \mathcal{X}_i\bigcap \mathcal{X}_j=\varnothing,\forall 1\leq i,j\leq n.
    \end{cases}
\end{equation}
If $\exists m\in\mathbb{N}$ and $\Upsilon:\mathbb{R}^m\mapsto\mathcal{Y}$ such that
\begin{enumerate}[label=(\roman*)]
    \item $\Upsilon$ has a continuous right inverse on $Im(\Upsilon)$.
    \item $Im(\Upsilon)$ is dense in $\mathcal{Y}$, then $\forall\epsilon>0$.
    \item $\partial Im(\Upsilon)$ is collared.
\end{enumerate}
Then there exists a choice of $N_h$ and $\boldsymbol{\theta}$ such that the fully connected feedforward neural network function $\Tilde{f}(\cdot;\boldsymbol{\theta})$ defined in (\ref{def:FNN}) satisfies
\begin{equation}
    \sup_{x\in\mathcal{X}}\|\Phi\Big(\Upsilon\big(\Tilde{f}(x;\boldsymbol{\theta})\big),x\Big)-p(x)\|<\epsilon.
\end{equation}
\end{lemma}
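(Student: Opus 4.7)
The plan is to reduce the claim to the universal approximation theorem for feedforward networks by constructing a continuous pre‑image of $\Omega$ under $\Upsilon$ which the network can then approximate uniformly. Throughout, let $\Upsilon^{-1}_R : \mathrm{Im}(\Upsilon) \to \mathbb{R}^m$ denote the continuous right inverse guaranteed by hypothesis (i), so that $\Upsilon \circ \Upsilon^{-1}_R$ is the identity on $\mathrm{Im}(\Upsilon)$.

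First, I would use hypotheses (ii) and (iii) to replace $\Omega$ by a nearby continuous map that actually lands in $\mathrm{Im}(\Upsilon)$. Since $\mathcal{X}$ is compact and $\Omega$ is continuous, $\Omega(\mathcal{X})$ is a compact subset of $\mathcal{Y}$. The delicate part is that $\Omega(x)$ may lie on $\partial \mathrm{Im}(\Upsilon)$ for some $x$, in which case $\Upsilon^{-1}_R$ need not be defined at $\Omega(x)$. Here the collar hypothesis does the work: a collar of $\partial \mathrm{Im}(\Upsilon)$ provides a continuous homotopy that pushes a neighborhood of the boundary strictly into the interior of $\mathrm{Im}(\Upsilon)$. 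Combined with density, this lets me construct, for any prescribed $\delta > 0$, a continuous map $\tilde{\Omega} : \mathcal{X} \to \mathrm{Im}(\Upsilon)$ with $\sup_{x \in \mathcal{X}} \| \tilde{\Omega}(x) - \Omega(x) \| < \delta$.

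Next, define the continuous pre‑image $h := \Upsilon^{-1}_R \circ \tilde{\Omega} : \mathcal{X} \to \mathbb{R}^m$; it is continuous because both factors are, and satisfies $\Upsilon(h(x)) = \tilde{\Omega}(x)$ by the right‑inverse property. By the classical universal approximation theorem for sigmoidal feedforward networks on the compact set $\mathcal{X}$, I can choose $N_h$ and $\boldsymbol{\theta}$ so that the network $\tilde{f}(\,\cdot\,;\boldsymbol{\theta})$ of (\ref{def:FNN}) satisfies $\sup_{x \in \mathcal{X}} \| \tilde{f}(x;\boldsymbol{\theta}) - h(x) \| < \eta$ for any desired $\eta > 0$. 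Since $\Upsilon^{-1}_R$ is continuous on the compact set $h(\mathcal{X})$ and hence uniformly continuous, there exists a modulus $\omega_\Upsilon$ such that $\| \Upsilon(\tilde{f}(x;\boldsymbol{\theta})) - \Upsilon(h(x)) \| \le \omega_\Upsilon(\eta)$ uniformly in $x$ (after noting that $\tilde{f}(x;\boldsymbol{\theta})$ can be kept in a compact neighborhood of $h(\mathcal{X})$ where $\Upsilon$ is uniformly continuous). Combining yields $\sup_{x \in \mathcal{X}} \| \Upsilon(\tilde{f}(x;\boldsymbol{\theta})) - \Omega(x) \| \le \omega_\Upsilon(\eta) + \delta$.

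Finally, I transfer the approximation through $\Phi$ using the piecewise Lipschitz hypothesis. For each $x \in \mathcal{X}$, $x$ lies in exactly one cell $\mathcal{X}_i$ of the partition, and on $\mathcal{Y} \times \mathcal{X}_i$ the map $\Phi$ is Lipschitz with some constant $L_i$; since the partition is finite, $L := \max_i L_i < \infty$. Therefore
\begin{equation*}
\| \Phi(\Upsilon(\tilde{f}(x;\boldsymbol{\theta})),x) - \Phi(\Omega(x),x) \| \;\le\; L \, \| \Upsilon(\tilde{f}(x;\boldsymbol{\theta})) - \Omega(x) \| \;\le\; L\bigl(\omega_\Upsilon(\eta) + \delta\bigr),
\end{equation*}
which is made smaller than $\epsilon$ by choosing $\delta$ and $\eta$ small enough. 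I expect the main obstacle to be the first step: producing the continuous perturbation $\tilde{\Omega}$ that stays inside $\mathrm{Im}(\Upsilon)$ while remaining close to $\Omega$, which is precisely why the collar assumption (iii) is imposed in addition to density. Once that topological reduction is in place, the rest of the argument is a standard composition of universal approximation with uniform continuity of $\Upsilon$ and piecewise Lipschitz continuity of $\Phi$.
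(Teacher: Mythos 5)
Your proposal is correct in substance and shares the paper's overall architecture --- approximate $\Omega$ uniformly by $\Upsilon\big(\Tilde{f}(\cdot;\boldsymbol{\theta})\big)$ and then push the error through $\Phi$ using $L_\Phi=\max_i L_i$ --- but it differs at the key step. The paper obtains the uniform approximation $\sup_{x\in\mathcal{X}}\|\Upsilon(\Tilde{f}(x;\boldsymbol{\theta}))-\Omega(x)\|<\epsilon/L_\Phi$ in one line by citing the non-Euclidean universal approximation theorem of \citet{kratsios2020non}, for which hypotheses (i)--(iii) are precisely the required conditions on $\Upsilon$. You instead reconstruct that result from scratch: use the collar and density to perturb $\Omega$ to a continuous $\tilde{\Omega}$ landing in $Im(\Upsilon)$, compose with the continuous right inverse to get a continuous $h:\mathcal{X}\to\mathbb{R}^m$, apply the classical universal approximation theorem to $h$, and transfer back through $\Upsilon$ by uniform continuity on a compact neighborhood of $h(\mathcal{X})$. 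That buys self-containedness and makes visible why each of (i)--(iii) is needed, at the cost of the topological details you yourself flag (obtaining metric, not just topological, control of the collar push-in on the compact set $\Omega(\mathcal{X})$) and of one hypothesis you use implicitly: continuity of $\Upsilon$ itself, which the lemma statement never asserts (it holds for the $\zeta$ used in the application and is likewise needed for the cited theorem). There is also a small slip in the transfer step, where uniform continuity should be invoked for $\Upsilon$ rather than for $\Upsilon^{-1}_R$ --- your parenthetical already corrects this. The final Lipschitz step, taking the supremum cell by cell over the partition and bounding by $\max_i L_i$, is identical to the paper's.
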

\begin{proof}

Let 
\begin{equation}
L_\Phi=\max_{1\leq i\leq n}L_i,
\end{equation}
where $L_i$ is the Lipschitz constant for $\Phi$ on $\mathcal{Y}\times\mathcal{X}_i$.

Since  $\Omega\in C(\mathcal{X})$ in compact $\mathcal{X}$,  following \citet{kratsios2020non}, we know that $\forall {\epsilon},$, there exists $N_h\in\mathbb{N}$ and $\boldsymbol{\theta}\in\mathbb{R}^{N_{\boldsymbol{\theta}}}$ such that the corresponding FNN $\Tilde{f}(\cdot;\boldsymbol{\theta}):\mathcal{X}\mapsto\mathbb{R}^m$ defined in (\ref{def:FNN}) satisfies
\begin{equation}
    \sup_{x\in\mathcal{X}}\|\Upsilon\big(\Tilde{f}(x;\boldsymbol{\theta})\big)-\Omega(x)\| <\epsilon/{L_\Phi},
\end{equation}
Then
\begin{align}
\sup_{x\in\mathcal{X}}\|\Phi\Big(\Upsilon\big(\Tilde{f}(x;\boldsymbol{\theta})\big),x\Big)-p(x)\| &= \sup_{1\leq i\leq n}\sup_{x\in\mathcal{X}_i}\|\Phi\Big(\Upsilon\big(\Tilde{f}(x;\boldsymbol{\theta})\big),x\Big)-\Phi\Big(\Omega(x),x\Big)\|\\
&\leq\sup_{1\leq i\leq n}\sup_{x\in\mathcal{X}_i}L_i\cdot\Big(\|\Upsilon\big(\Tilde{f}(x;\boldsymbol{\theta})\big)-\Omega(x)\|\Big)\\
&<\sup_{1\leq i\leq n}\frac{L_i}{L_\Phi}\epsilon\\
&\leq\epsilon.
\end{align}

\end{proof}

\begin{remark}\textup{(Remark on Lemma \ref{lemma:general_approx}) Normally, the universal approximation theorem only applies to the approximation of continuous functions defined on a compact set \citep{hornik1991approximation}. Lemma \ref{lemma:general_approx} extends the universal approximation theorem to a broader class of functions that have the structure of (\ref{eq:general_func}). Furthermore, Lemma \ref{lemma:general_approx} provides guidance on constructing neural network functions that handle stochastic constraints on controls which are usually difficult to address in stochastic optimal control problems. Consider the following example: the control $p:\mathcal{X}\mapsto\mathbb{R}^{N_a}$ has stochastic constraints such that $p(\boldsymbol{x})\in[a(\boldsymbol{x}),b(\boldsymbol{x})]$ where $a,b:\mathcal{X}\mapsto\mathbb{R}^{N_a}$ are deterministic functions. This is a common setting in portfolio optimization problems in which allocation fractions to specific assets are subject to thresholds tied to the performance of the portfolio. With Lemma \ref{lemma:general_approx}, with a bit of engineering, one can easily construct a $\Phi$ so that the corresponding neural network satisfies the constraints naturally and be guaranteed that such a neural network can approximate the control well.}
\end{remark}

We then proceed to prove Theorem \ref{theorem:approximation_opt_control}.\\
\textbf{Theorem \ref{theorem:approximation_opt_control}.} (Approximation of optimal control) Following Assumption \ref{as:approx_control}, $\forall\epsilon>0$, there exists $N_h\in\mathbb{N}$, and $\boldsymbol{\theta}\in\mathbb{R}^{N_{\boldsymbol{\theta}}}$ such that the corresponding LFNN model $f(\cdot;\boldsymbol{\theta})$ described in (\ref{LFNN}) satisfies the following:
\begin{equation}
    \sup_{x\in\mathcal{X}}\|f(x;\boldsymbol{\theta})-p^*(x)\| <\epsilon.
\end{equation}
{
\begin{proof}
From (\ref{LFNN}) and Lemma \ref{lemma:property_psi}, we know that 
\begin{equation}
    f({x};\boldsymbol{\theta})=\psi\big(\Tilde{f}({x};\boldsymbol{\theta}),{x}\big)=\varphi\Big(\zeta\big(\Tilde{f}({x};\boldsymbol{\theta})\big),{x}\Big),
\end{equation}
where $\Tilde{f}$ is the FNN defined in (\ref{def:FNN}) and $\varphi:\Tilde{\mathcal{Z}}\times{\mathcal{X}}\mapsto\mathbb{R}^{N_a},\zeta:\mathbb{R}^{N_a+1}\mapsto \Tilde{\mathcal{Z}}$ are defined in (\ref{eq:psi_decomp}).
%part of the neural network before application of $\psi$.

It can be easily verified that $\zeta$ satisfies the following:
\begin{enumerate}[label=(\roman*)]
\item $\zeta$ has a continuous right inverse, e.g.
\begin{equation}
\zeta^{-1}(z):Im(\zeta)\mapsto\mathbb{R}^{N_a+1}, \zeta^{-1}(z)=\Bigg(\log(z_1),\cdots,\log(z_{N_a}),\sigma^{-1}(z_{N_a+1}/p_{max})\Bigg)^\top, \end{equation}
where $\sigma^{-1}$ is the inverse function of the sigmoid function.

\item $Im(\zeta)$ is dense in $\Tilde{\mathcal{Z}}$. This is because $\overline{Im(\zeta)}$, the closure of $Im(\zeta)$, is $\Tilde{\mathcal{Z}}$.
\item $\partial Im(\zeta)$ is collared \citep{brown1962locally,connelly1971new,baillif2022collared}.
\end{enumerate}

Furthermore, consider the partition of $\mathcal{X}$, $\big\{\mathcal{X}_1,\mathcal{X}_2\big\}$, which is defined in Definition \ref{def:partition}. It is easily verifiable that $\varphi$ is Lipschitz continuous on $\Tilde{\mathcal{Z}}\times\mathcal{X}_1$ and $\Tilde{\mathcal{Z}}\times\mathcal{X}_2$ respectively.

Finally, according to Assumption \ref{as:approx_control}, $p^*(x)=\varphi\big(\omega^*(x),x\big)$, where $\omega^*\in C(\mathcal{X},\Tilde{\mathcal{Z}})$.

Applying Lemma \ref{lemma:general_approx} with $\mathcal{Y}=\Tilde{\mathcal{Z}}, \Omega(\cdot) =\omega^*(\cdot)$, $\Upsilon(\cdot)=\zeta(\cdot)$, and $\Phi (\cdot,\cdot)=\varphi(\cdot,\cdot)$, we know that there exists $N_h\in\mathbb{N}$, and $\boldsymbol{\theta}\in\mathbb{R}^{N_{\boldsymbol{\theta}}}$ such that the corresponding LFNN model $f({x};\boldsymbol{\theta})=\varphi\Big(\zeta\big(\Tilde{f}({x};\boldsymbol{\theta})\big),{x}\Big)$ satisfies the following:
\begin{equation}
    \sup_{x\in\mathcal{X}}\|f(x;\boldsymbol{\theta})-p^*(x)\| <\epsilon.
\end{equation}

\end{proof}
}

%----------------
% benchmark with clipped form
%------------------
%-----------------------------------
% benchmark with close-form
%-----------------------------------
\section{Comparing LFNN with closed-form solution}\label{sec:validate_NN}
In this section, we {compare the performance of the strategy following the learned shallow LFNN model (which we refer to as the ``neural network strategy'' from now on)} with the closed-form solution (\ref{def:optimal_control}), and provide empirical validation of the LFNN approach. 
\subsection{Approximate form under realistic assumptions}\label{sec:approx_form}
We first note that the closed-form solution $p^*$ defined in (\ref{def:optimal_control}) is obtained under several unrealistic {assumptions}, namely continuous rebalancing, unlimited leverage, 
and continuing  trading in insolvency.\footnote{Note that we consider a two-asset scenario here, thus the scalar $p^*\in\mathbb{R}$ (allocation fraction for the stock index) fully describes the allocation strategy $\boldsymbol{p}^*$, since $\boldsymbol{p}^*=(p^*,1-p^*)^\top$.} 
In practice, 
investors have constraints such as discrete rebalancing, limited leverage, and no trading when insolvent. {For a meaningful comparison,} instead of comparing the neural network strategy with the closed-form solution $p^*$ directly, we compare the 
neural network strategy with {an easily obtainable approximation to} 
the closed-form solution which satisfies realistic constraints.

In particular, we consider an equally-spaced discrete rebalancing schedule $\mathcal{T}_{\Delta t}$ defined as
\begin{equation}
    \mathcal{T}_{\Delta t}=\Big\{t_i:\;i=0,\cdots,N\Big\},\label{def:equal_disc_schedule}
\end{equation}
where $t_i=i\Delta t$, and $\Delta t=T/N$. Then, the {\it clipped form} $\bar{p}_{\Delta t}:\mathcal{T}_{\Delta t}\times\mathbb{R}^3\mapsto\mathbb{R}$ is defined as
\begin{equation}(\text{Clipped form}):\quad
    \bar{p}_{\Delta t}(t_i,\bar{W}_{\Delta t}(t_i),\hat{W}_{\Delta t}(t_i),\hat{\varrho}) = \min\Bigg(\max\Big(p^*(t_i,\bar{W}_{\Delta t}(t_i),\hat{W}_{\Delta t}(t_i),\hat{\varrho}),p_{min}\Big), p_{max}\Bigg).\label{def:clipped_form}
\end{equation}
Here $[p_{min},p_{max}]$,  where $p_{min}=0$  and $p_{max} \geq 1$,  is the allowed range, $\bar{W}_{\Delta t}(t_i)$ is the wealth of the active portfolio at $t_i$ following $\bar{p}_{\Delta t}$ from $t_0$ to $t_i$, $\hat{W}_{\Delta t}(t_i)$ is the wealth of the benchmark portfolio at $t_i$ following the fixed-mix strategy described by constant allocation fraction $\hat{\varrho}$, but only rebalanced discretely according to $\mathcal{T}_{\Delta t}$. Clearly, the allocation strategy from $\bar{p}_{\Delta t}$ follows the discrete schedule of $\mathcal{T}_{\Delta t}$, and satisfies the leverage constraint that $\bar{p}_{\Delta t}\in[p_{min},p_{max}]$. $p_{\Delta t}$ {approaches the closed-form solution $p^*$ as $\Delta t\downarrow0,p_{min}\downarrow-\infty$ and $p_{max}\uparrow\infty$.} 
We note that a similar clipping idea is explored in \citet{vigna2014efficiency} in the context of closed-form solutions for multi-period mean-variance asset allocation.  However, it should be emphasized that the clipped form $\bar{p}_{\Delta t}$ with finite $(p_{min}, p_{max})$ is a feasible, but in general sub-optimal, control of the leverage-constrained CD problem (\ref{prob:CD_disc}).

We then address the assumption that trading continues when insolvent, i.e., when the wealth of the portfolio reaches zero. While necessary for the mathematical derivation of the closed-form solution, we acknowledge that this is by no means reasonable for practitioners. Under the continuous rebalancing case (no jumps), if the control (allocation) is bounded, it is shown that the wealth of the portfolio can never be negative \citep{wang2012comparison}. 
However, with discrete rebalancing, even with a bounded control, as long as the upper bound $p_{max}>1$, it is theoretically possible that the portfolio value becomes negative. We address this assumption by applying an overlay on strategies so that in the case of insolvency, 
we assume the manager liquidates the {long-only positions and allocates the debt (negative wealth) to a shortable (bond) 
asset (consistent with Assumption \ref{as:no_trading_insolvency})} to allow outstanding debt to 
accumulate until the end of the investment horizon. Going forward, when we refer to any strategy 
(e.g. neural network strategy, clipped form), 
we mean the strategy with this overlay applied.
We remark that in practice, this overlay has little effect. In numerical experiments with 
10,000 samples of observed wealth trajectories (based on calibrated jump-diffusion model or bootstrap resampled data paths), we do not observe any single wealth trajectory that ever hits negative wealth for any strategy (e.g., neural network strategy, clipped form, etc).

In summary, the clipped form satisfies the realistic constraints and is a comparable benchmark for the neural network strategy. In the following section, we will numerically compare the performance of the clipped form, the neural network strategy, and the closed-form solution.

\subsection{Comparison: LFNN  strategy vs clipped-form solution}
{We assess and compare} the performance of the neural network strategy and the clipped form, we assume the following investment scenario described in Table \ref{tb:validation_case}. 

\begin{table}[htb]
\begin{center}
\begin{tabular}{lc} \toprule
Investment horizon $T$ (years) & 10  \\
Assets & CRSP cap-weighted index (real)/30-day T-bill (U.S.) (real) \\
Index Samples& Concatenated 1940:8-1951:7, 1968:9-1985:10\\
Initial portfolio wealth/annual cash injection  & 100/10 \\
Rebalancing frequency & Monthly, quarterly, semi-annually, annually\\
Maximum leverage & 1.3\\
Benchmark equity percentage & 0.7\\
Outperformance target rate $\beta$ & 1\% (100 bps)\\
\bottomrule
\end{tabular}
\caption{Investment scenario. 
\label{tb:validation_case}}
\end{center}
\end{table}

We assume the stock index and the bond index prices follow a double exponential jump model  (\ref{model:jump_diffusion}), see e.g., \citep{kou2002jump,kou2004option}, i.e., for the jump variable $\xi_i$, $y_i=\log(\xi_i)$ follows the double exponential distribution with density functions $g_i(y_i)$ defined as follows
\begin{equation}
    g_i(y_i)=\nu_i\iota_ie^{-\iota_iy_i}\textbf{1}_{y_i\geq0}+(1-\nu_i)\varsigma_ie^{-\varsigma_iy_i}\textbf{1}_{y_i<0},\;i=1,2.\label{def:double_exponential}
\end{equation}
where $\nu_i$ is the probability for an upward jump, and $\iota_i$ and $\varsigma_i$ are parameters that describe the upward jump and downward jump respectively. The double exponential jump-diffusion model allows the flexibility of modeling asymmetric upward and downward jumps in asset prices, which seems an appropriate assumption for inflation regimes.\footnote{We remind the reader that the closed-form solution is derived under the jump-diffusion model.}

Using the threshold technique \citep{mancini2009non,cont2011nonparametric,dang2016better}, we calibrate the double exponential jump-diffusion models to the historical high-inflation periods described in Section \ref{sec:filtering_regimes}. The calibrated parameters can be found in Appendix \ref{app:calibrated_parans}. Then, we construct  a training data set $\boldsymbol{Y}$ and a testing data set $\boldsymbol{Y}^{test}$ by sampling the 
calibrated model, each with 10,000 samples.

The neural network strategy follows the LFNN model learned from $\boldsymbol{Y}$. We then evaluate the performance
of the neural network strategy and the approximate form (\ref{def:clipped_form}) on the testing data set $\boldsymbol{Y}^{test}$. Specifically, we compare the value of the CD objective function (\ref{prob:CD_disc}) for the neural network strategy and the clipped form on $\boldsymbol{Y}^{test}$. In particular, this training/testing process is repeated for various rebalancing frequencies from monthly to annually, as described in Table \ref{tb:validation_case}. 
\begin{table}[htb]
\centering
\begin{tabular}{l c c c c c}
\hline
\multicolumn{5}{c}{Closed-form solution objective function value: 418 }   \\ \hline
   Strategy & $\Delta t=1$ & $\Delta t=1/2$  & $\Delta t=1/4$ & $\Delta t=1/12$  & $\Delta t=0$  \\ \hline
Clipped form  & 545 & 504 & 479 & 467 & 461 (extrapolated) \\ \hline
Neural network & 537 & 498 &  476 & 464 & 458 (extrapolated) \\ \hline
\end{tabular}
\caption{CD objective function values. Results shown are evaluated on $\boldsymbol{Y}^{test}$, the lower the better.}\label{tb:benchmark}
\end{table}

In Table \ref{tb:benchmark}, we can see that the neural network strategy consistently outperforms the clipped form in terms of the objective function value for all rebalancing frequencies. From Table 4.2 we can see that the objective function values of both the neural network strategy and the clipped form converge at roughly a first-order rate as $\Delta t\downarrow0$. Assuming this to be true, we extrapolate the solution to $\Delta t=0$ using Richardson extrapolation.
These extrapolated values are estimates of the exact value of the continuous-time CD objective function (\ref{prob:CD_cont}) for the clipped form and the neural network strategy. We can see that the neural network strategy still outperforms the clipped form in terms of the extrapolated objective function value.  
We can also see that the extrapolated
neural network objective function value is lower than the (suboptimal) clipped form extrapolated value, but,
of course, larger than the unconstrained closed-form solution. 

\begin{figure}[htb]
\centerline{%
\begin{subfigure}[t]{.45\linewidth}
\centering
\includegraphics[width=\linewidth]{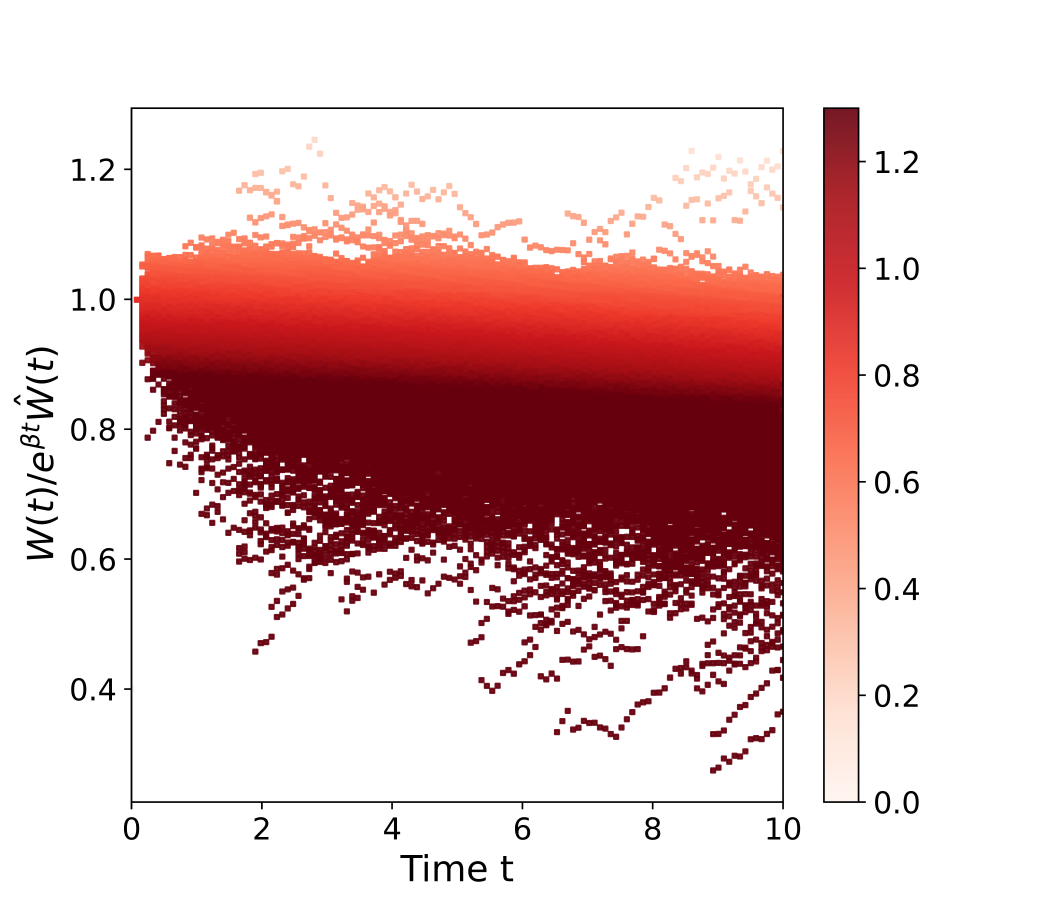}
\caption{Clipped form}
\label{fig:heatmap_clipped_validate}
\end{subfigure}
\begin{subfigure}[t]{.45\linewidth}
\centering
\includegraphics[width=\linewidth]{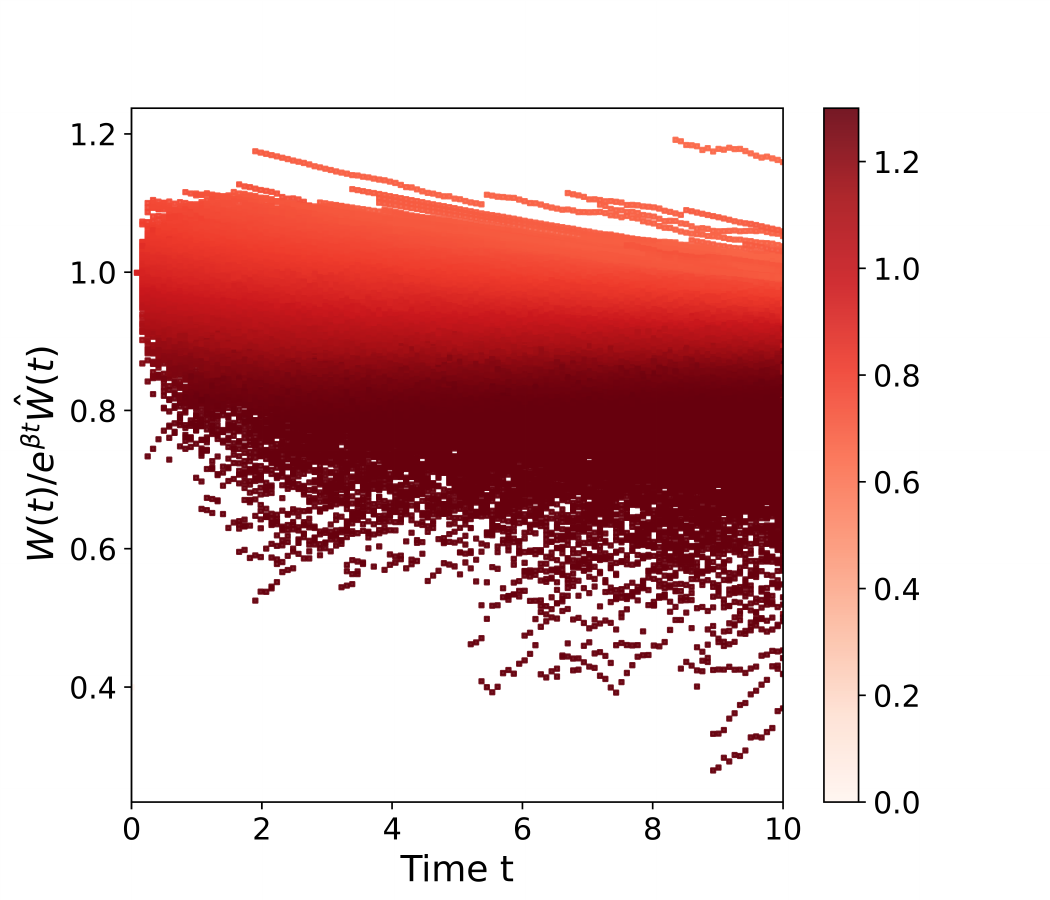}
\caption{Neural network strategy}
\label{fig:heatmap_nn_validate}
\end{subfigure}
}
\caption{
Stock allocation fraction w.r.t. tracking ratio $W(t)/\Big(e^{\beta t}\hat{W}(t)\Big)$ and time $t$. Results are based on the evaluation on the testing data set $\boldsymbol{Y}^{test}$, and monthly rebalancing (i.e. $\Delta t=1/12$).
}
\label{fig:heatmap_nn_vs_clipped}
\end{figure}

Finally, we compare the neural network allocation strategy with the clipped form strategy. Specifically, in Figure \ref{fig:heatmap_nn_vs_clipped}, we consider the case of monthly rebalancing and present the scatter plots of the allocation fraction 
in the stock index with respect to time $t$ and the ratio between the wealth of the active portfolio $W(t)$ and the elevated target $e^{\beta t}\hat{W}(t)$. For simplicity, we call this ratio the ``tracking ratio''. We plot the 3-tuple $\Big(\frac{W(t)}{e^{\beta t}\hat{W}(t)}, t, p_1(W(t),\hat{W}(t),t)\Big)$ (obtained from the evaluation of the strategies on samples from $\boldsymbol{Y}^{test}$) by using time $t$ as the {x-axis, the tracking ratio $\frac{W(t)}{e^{\beta t}\hat{W}(t)}$ as the y-axis,} and the values of the corresponding allocation fraction to the cap-weighted index $p_1(W(t),\hat{W}(t),t)$ to color the scattered dots on the plot. A darker shade of the color indicates a higher allocation fraction.

As we can see from Figure \ref{fig:heatmap_nn_vs_clipped}, the stock allocation fraction of the neural network strategy behaves similarly to the stock allocation fraction from the clipped form. Both strategies invest more wealth in the stock when the tracking ratio is lower, which is consistent with the insights we obtained in Section \ref{sec:insights}. In addition, the transition patterns of the allocation fraction of the two strategies are also highly similar. One can almost draw an imaginary horizontal dividing line around $\frac{W(t)}{e^{\beta t}\hat{W}(t)}=0.9$ that separates high stock allocation and low stock allocation for both strategies. 

We remark that a common criticism towards the use of neural networks is about the lack of interpretability compared to more interpretable counterparts such as the regression models \citep{rudin2019stop}. In this section, we see that the neural network strategy closely resembles the closed-form solution for the CD objective. The closed-form solution, in turn, complements the neural network model and offers an alternative way of interpreting results obtained from the neural network.

%%%%%%%%%%%%%%%
% filtering algo
\section{Moving-window inflation filter}
\subsection{Filtering algorithm}\label{sec:window_algo}
Algorithm \ref{Algo:filter} presents the pseudocode for the moving-window filtering algorithm.
\begin{algorithm}[htp]
\SetAlgoLined
\KwData{}
$~~~$CPI[i]; $i=1,\ldots,N$ \tcc{CPI Index}
$~~~$Cutoff \tcc{High inflation cutoff: annualized}
$~~~$$\Delta t$ \tcc{CPI index time interval}
$~~~$$K$ \tcc{smoothing window size}
\KwResult{Flag[i]; $ i=1,\ldots,N$ 
\tcc{= 1 high-inflation month; = 0 otherwise}
 }
\tcc{initialization}
Flag[i] =0; $i=1,\ldots,N$\;
\For{$i = 1,\ldots, N-K$}{
    \If{$\log(CPI[i+K]/CPI[i])/(K * \Delta t) > $Cutoff}{
     \For{$j = 0,\ldots, K$}{
        Flag[i+j] = 1 \;
       }
 }
}
\caption{Pseudocode window inflation filter}\label{bootstrap_appendix}
\label{Algo:filter}
\end{algorithm}
%%%%%%%%% window size
\subsection{Effect of moving window size}\label{app:filter_window_size}
Figure \ref{moving_window_fig} shows the filtering results for windows of size 12, 60, and 120 months. We can see that the five-year window produces two obvious inflation regimes: 1940:8-1951:7 and 1968:9-1985:10, which correspond to well-known market shocks (i.e. the second world war, and price controls; the oil price shocks and stagflation of the seventies). Increasing the window 
size to 10 years results in similar-looking plots as the five-year window size, but the number of months in each window increases, and the average inflation rate is lower. Since our objective is to determine the effect of high-inflation periods on allocation strategies, we {choose the five-year window size.}

\begin{figure}[htb!]
\centerline{%
\begin{subfigure}[t]{.33\linewidth}
\centering
\includegraphics[width=\linewidth]{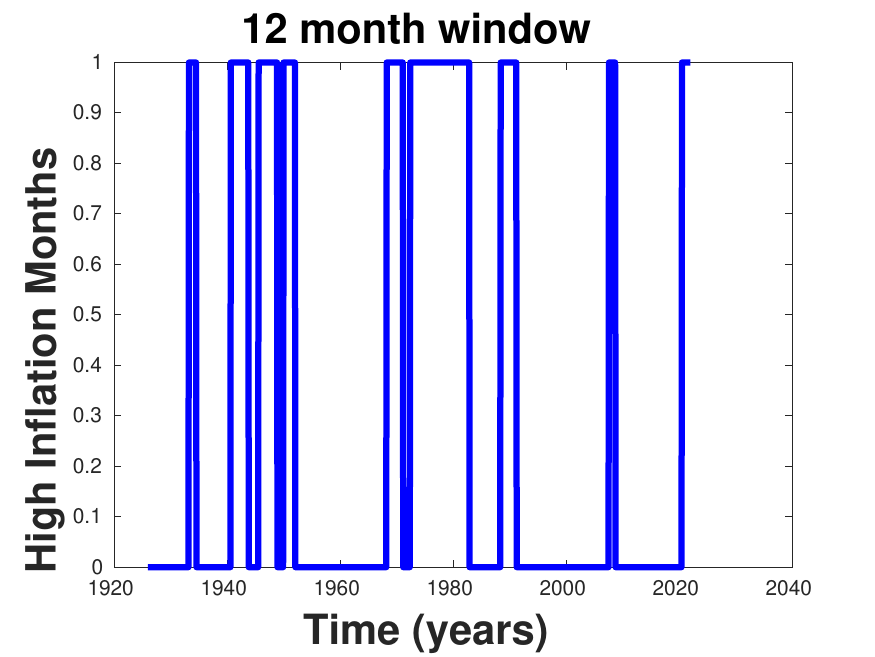}
%\caption{EW-ES frontier}
\label{twelve_month_fig}
\end{subfigure}
\begin{subfigure}[t]{.33\linewidth}
\centering
\includegraphics[width=\linewidth]{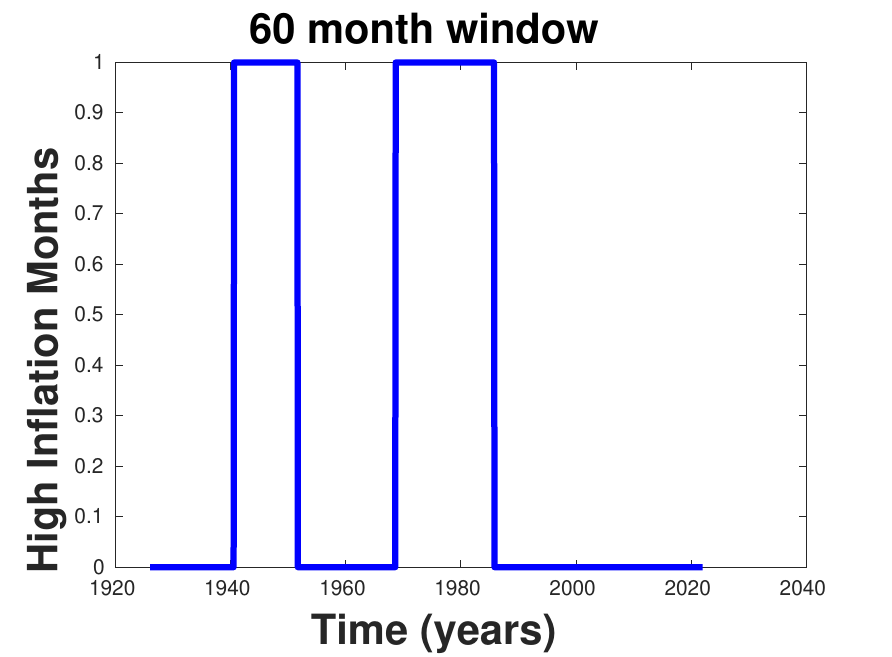}
%\caption{EW-LS frontier}
\label{60_month_fig}
\end{subfigure}
\begin{subfigure}[t]{.33\linewidth}
\centering
\includegraphics[width=\linewidth]{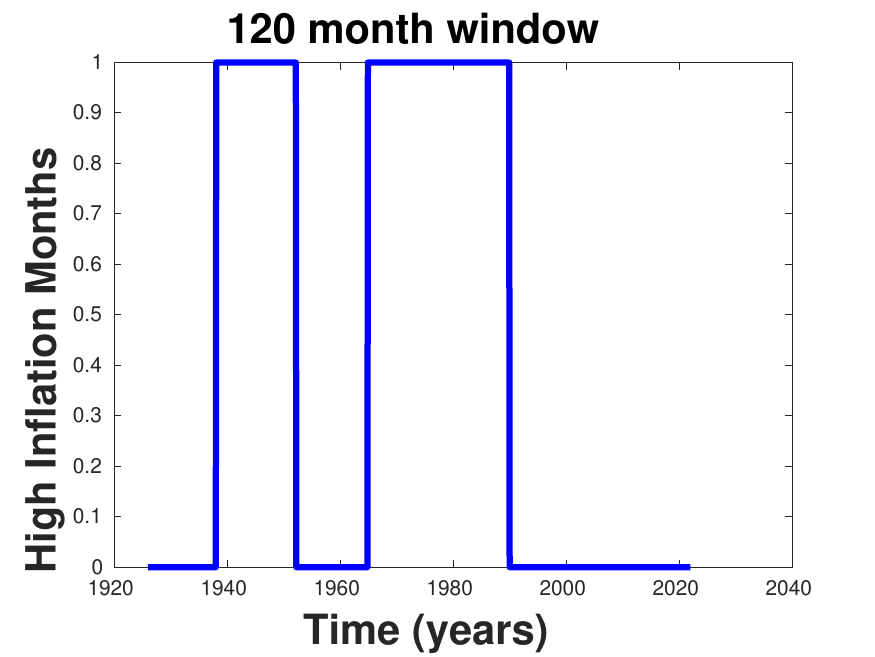}
%\caption{EW-LS frontier}
\label{120_month_fig}
\end{subfigure}
}
\caption{high-inflation regimes, using the moving-window method, with the window
size shown.  The cutoff for {\em high-inflation} regimes was $0.05$.
High-inflation months have a label value of one, and low-inflation months
have a label value of zero.  CPI data identified from the historical period 1926:1-2022:1.
}
\label{moving_window_fig}
\end{figure}

%%%%%%%%%%%%%%%%%% asset returns during inflation
\subsection{Asset performance during high inflation}\label{app:inflation_asset_return}

To gain some intuition on the behavior of asset returns during the inflation periods, we assume that each real (adjusted by CPI index) index follows geometric
Brownian motion (GBM).  For example, given an index with value $S$, then
\begin{eqnarray}
   dS & = & \mu S ~dt + \sigma S ~dZ ~
\end{eqnarray}
where $dZ$ is the increment of a Wiener process. We use
maximum likelihood estimation to fit the drift rate $\mu$  (expected arithmetic return) and volatility $\sigma$  in each regime, for each index, as shown in Table \ref{GBM_table}.  We also show a series constructed by: converting the indexes in each regime to returns, concatenating the two return series, and converting the concatenated return series back to an index. This concatenated
index does not, of course, correspond to an actual historical index, but is a pseudo-index constructed from high-inflation regimes. This amounts to a worst-case sequence of
returns in terms of the duration of historical inflation periods, that could plausibly be expected during a long period of high inflation. 

\begin{table}[hbt!]
\begin{center}
\begin{tabular}{lccc} \toprule
Index   & $\mu$  & $\sigma$ & $\mu - \sigma^2/2$\\
\hline
   & \multicolumn{3}{c}{1940:8-1951:7}\\
\hline
CapWt & 0.079 & 0.140 & .069\\
EqWt  & 0.145 &  0.190& .127 \\
10 Year Treasury & -0.035 &  0.036 & -.036\\
30-day T-bill    & -0.050 & 0.029 & -.050\\
\hline
    &   \multicolumn{3}{c}{1968:9-1985:10}\\
\hline
CapWt & 0.026     & 0.164& .013\\
EqWt  & 0.065     & 0.220 &  .041\\
10 Year Treasury & 0.011  &     0.093 & .007 \\
30-day T-bill    & 0.009   &  0.012 & .009\\
\hline
   & \multicolumn{3}{c}{Concatenated: 1940:8-1951:7 and 1968:9 - 1985:10}\\
\hline
CapWt &  0.049     & 0.156 & .038\\ 
EqWt  &  0.098    & 0.209 & .076\\
10 Year Treasury & -0.008 &     0.076& -.011\\
30-day T-bill    & -0.014  &   0.022& -.014\\
\bottomrule
\end{tabular}
\caption{GBM parameters for the indexes shown.  All indexes are real (deflated).
$\mu$ is the expected annualized arithmetic return.  $\sigma$ is the annualized volatility.
($\mu - \sigma^2/2$) is the annualized mean geometric return.
}
\label{GBM_table}
\end{center}
\end{table}

It is striking that in each historical inflation regime (i.e., 1940:8-1951:7 and 1968:9-1985:10) in Table \ref{GBM_table}, the drift rate $\mu$ 
for the equal-weighted index is much larger
than the drift rate for the cap-weighted index.  
We can observe that the {mean} geometric return for the cap-weighted index,
in the period 1968:9-1985:10, was only about one percent per year.

It is also noticeable that bonds
performed very poorly in the period 1940:8-1951:7. As well, during the period 1968:9-1985:10, there was
essentially no term premium for 10-year treasuries, compared with 30-day T-bills. In addition,
the 10-year treasury index had much higher volatility compared to the 30-day T-bill index. 
Looking at the concatenated series, it appears that 30-day T-bills are arguably the 
better defensive asset here since the volatility of this index is quite low (but with a
negative (real) drift rate).

%%%%%%%%%%%%%%%%%% bootstrap
\section{Bootstrap resampling}
\subsection{Stationary block bootstrap algorithm}\label{app:bootstrap}
Algorithm \ref{Algo:bootstrap} presents the pseudocode for the stationary block bootstrap. See \citet{ni2022optimal} for more discussion.

\begin{algorithm}[htb]
\SetAlgoLined
\tcc{initialization}
bootstrap\_samples = [ ]\;
\tcc{loop until the total number of required samples are reached}
\While{True }{
\tcc{choose random starting index in [1,\ldots,N], N is the index of the last historical sample}
index = UniformRandom( 1, N )\;
\tcc{actual blocksize follows a shifted geometric distribution with the expected value of exp\_block\_size}
blocksize = GeometricRandom( $\frac{1}{exp\_block\_size}$ )\;
\For{$i = 0;\ i < blocksize;\ i = i + 1$}{
\tcc{if the chosen block exceeds the range of the historical data array, do a circular bootstrap}
\eIf{index + i $>$ N}{
bootstrap\_samples.append( historical\_data[ index + i - N ] )\;
}{bootstrap\_samples.append( historical\_data[ index + i ] )\;}

\If{bootstrap\_samples.len() == number\_required}{\Return bootstrap\_samples\;} }
}
\caption{Pseudocode for stationary block bootstrap}
\label{Algo:bootstrap}
\end{algorithm}

%%%%%%%%%%%%%% blocksize
\subsection{Effect of blocksize}\label{app:blk_effect}
As discussed, we 
will use bootstrap resampling 
\citep{politis1994stationary,
politis2004automatic,patton2009correction,dichtl2016,anarkulova2022stocks},
to analyze the performance of using
the equal-weighted index compared to the cap-weighted index, during
periods of high inflation (our concatenated series: 1940:8-1951:7, 1968:9-1985:10).

First, we examine the effect of the expected blocksize parameter in
the bootstrap resampling algorithm.  We will use a paired sampling approach,
where we simultaneously draw returns from the bond and stock indexes.\footnote{This preserves
correlation effects.}  The algorithm
in \cite{politis2004automatic} was developed for single asset time series. It is therefore important to assess the effect of the blocksize on numerical results. In Table  \ref{block_table_a}, we examine the effect of different blocksizes
on the statistics of stationary block bootstrap resampling.

\begin{table}[hbt!]
\begin{center}
\begin{tabular}{lcccc} \hline
Expected blocksize (months) &  Median[$W_T$] & E[$W_T$] &   std[$W_T$] &5th Percentile     \\ \hline
 1                          & 170.9        & 191.6    &  97.6    & 78.6  \\
 3                          & 174.6        & 202.9    & 120.4    & 69.3   \\
 6                          & 174.2        & 204.2    & 125.9    & 66.8  \\
 12                         & 175.5        & 204.4    & 124.2    & 67.9 \\
 24                         & 179.2        & 205.1    & 118.4    & 68.7 \\ \hline
\end{tabular}
\caption{Effect of expected blocksize, on the statistics of
the final wealth $W(T)$ at $T=10$ years. Constant weight, scenario in Table \ref{base_case_1}.
Equity weight: 0.7, rebalanced monthly.  Bond index: 30-day T-bill.
Equity index: equal-weighted.
Concatenated series: 1940:8-1951:7, 1968:9-1985:10 (high-inflation regimes).
All quantities are real (inflation-adjusted).  Initial wealth 100. 
Bootstrap resampling, $10,000$ resamples).
\label{block_table_a}
}
\end{center}
\end{table}

Perhaps a more visual way of analyzing the effect of the expected blocksize is
shown in Figure \ref{blk_effect},  where we show the cumulative distribution function (CDF) of the
final wealth after 10 years,
for different blocksizes.   We show the CDF since this gives us a visualization of the
entire final wealth distribution, not just a few summary statistics.

Since the data frequency is at one-month intervals,
specifying a geometric mean expected blocksize of one month means that the blocksize
is always a constant one month.  This effectively means that we are assuming that
the data is i.i.d.  However, the one-month results are an outlier, compared to the
other choices of expected blocksize.  There is hardly any difference between the CDFs
for any choice of expected blocksize in the range of 3-24 months.  In this article, we use an expected blocksize of 6 months.

\begin{figure}[htb!]
\centering
\includegraphics[width=3.0in]{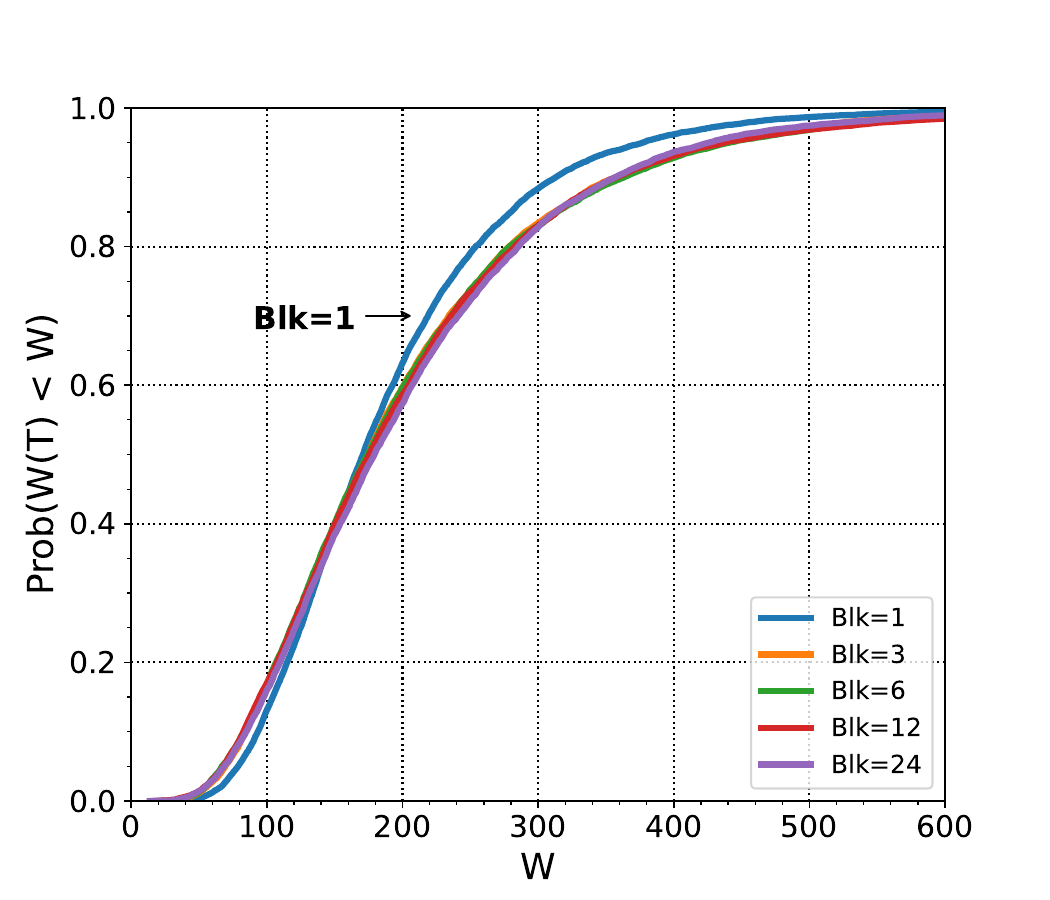}
\caption{
Cumulative distribution function (CDF), final
wealth $W(T)$ at $T=10$ years. The effect of expected blocksize.  Constant weight, scenario in Table \ref{base_case_1}.
Equity weight: 0.7, rebalanced monthly.  Bond index: 30-day T-bill.
Equity index: equal-weighted.
Concatenated series: 1940:8-1951:7, 1968:9-1985:10 (high-inflation regimes).
All quantities are real (inflation-adjusted).  Initial wealth 100.
Bootstrap resampling, expected blocksize one year, $10,000$ resamples.
}
\label{blk_effect}
\end{figure}

%%%%% non-contiguous bootstrap
\subsection{Bootstrapping from non-contiguous data segments}\label{app:bootstrap_noncont}
As discussed in Section \ref{sec:high_inflation_bootstrap}, we have identified two historical inflation regimes: 1940:8-1951:7 and 1968:9-1985:10. As traditional bootstrap methods assume one segment of the underlying data segment, it naturally becomes a question of how to bootstrap from two non-contiguous segments of data appropriately. In the main sections of the article, we first concatenate the two data segments, then treat the concatenated data samples as a complete segment and apply bootstrap methods to them. This method is in line with the work of \citet{anarkulova2022stocks}, in which the authors concatenate stock returns from different countries and bootstrap from the concatenated series.

A second intuitive bootstrap method would be to bootstrap randomly from each of the two segments. Briefly, each bootstrap resample consists of (i) selecting a random segment (probability proportional to the length of the segment), (ii) selecting a random starting date in
the selected segment, (iii) then selecting a block (of random size) of
consecutive returns from this start date, (iv) in the event that the end of the data set in a segment is reached, we use circular block bootstrap resampling within that segment, and (v) repeating this process until a sample of the total desired length is obtained. 

We compare the bootstrapped data from concatenated segments and separate segments, by evaluating the performance of the 70\%/30\% equal-weighted index/T-bill fixed-mix portfolio, using the investment scenario described in Table \ref{base_case_1}. 
\begin{table}[htb]
\begin{center}
\begin{tabular}{lcccc} \hline
                                          &  Median[$W_T$] & E[$W_T$] &   std[$W_T$] &5th Percentile     \\ \hline
 Bootstrap from concatenated segments     & 174.2        & 204.2    & 125.9    & 66.8  \\
 Bootstrap from separate segments      & 176.9        & 208.0    & 132.4    & 65.4 \\ \hline
\end{tabular}
\caption{Effect of bootstrap method - bootstrap from concatenated segments vs bootstrap from separate segments, on the statistics of
the final wealth $W(T)$ at $T=10$ years. Constant weight, scenario in Table \ref{base_case_1}.
Equity weight: 0.7, rebalanced monthly.  Bond index: 30-day T-bill.
Equity index: equal-weighted.
Concatenated series: 1940:8-1951:7, 1968:9-1985:10 (high-inflation regimes).
All quantities are real (inflation-adjusted).  Initial wealth 100. 
Bootstrap resampling, $10,000$ resamples).
\label{tb:noncont}
}
\end{center}
\end{table}

We can observe from Table \ref{tb:noncont} that the strategy performance on bootstrap resampled data using two methods only varies slightly. This indicates that the two methods do not yield much difference for practical purposes. This is indeed expected - after all, the difference between the two methods only occurs
when a random block crosses the edge of each of the segments. 
However, such a situation only occurs with a very low probability. 
Except for this low-probability situation, the two bootstrap methods are identical.

%----------------------------
% stochastic dominance result
%----------------------------
\section{Comparing passive strategies in high inflation regimes}\label{sec:sto_dom}
In this section, we compare the performances of two fixed-mix strategies. The first strategy, the ``EqWt'' strategy, maintains a 70\% allocation to the equal-weighted index, and 30\% allocation to the 30-day T-bill index. The second strategy, the ``CapWt'' strategy, maintains a 70\% allocation to the cap-weighted index, and 30\% allocation to the 30-day T-bill index. 
\begin{table}[htb]
\begin{center}
\begin{tabular}{lc} \toprule
Investment horizon $T$ (years) & 10  \\
Equity market indexes & CRSP cap-weighted/equal-weighted index (real) \\
Bond index & 30-day T-bill (U.S.) (real) \\
Index Samples& Concatenated 1940:8-1951:7, 1968:9-1985:10\\
Initial portfolio wealth  & 100 \\
Rebalancing frequency & Monthly\\
\bottomrule
\end{tabular}
\caption{Investment scenario. 
\label{base_case_1}}
\end{center}
\end{table}

%In the numerical experiment, we consider the investment scenario described in 

Figure \ref{plot_cum_high_inflation_eq_cap_fig} compares the CDF (cumulative distribution functions) of the terminal wealth of the EqWt strategy and the CapWt strategy
based on 
$10,000$ block bootstrap resampled data samples \citep{politis1991circular,dichtl2016,anarkulova2022stocks}
from the concatenated CRSP combined time series from 1940:8-1951:7 and 1968:9-1985:10, with an expected blocksize of six months. 
Both strategies
 assume  an initial wealth of 100 with no further cash injections and withdrawals,
  the investment horizon is 10 years, with monthly rebalancing to maintain the constant weights in the portfolio, see also Table \ref{base_case_1}.
%In we compare the CDFs of the 
%terminal wealth of the EqWt strategy and the CapWt strategy. 

We  first  recall the concept of 
{\it partial stochastic dominance}. Suppose two investment strategies $A$ and $B$ are evaluated 
on a set of data samples under the same investment scenario. We consider the
CDFs of terminal wealth $W$ associated with both strategies. Specifically, we denote the CDF of 
strategy A by $\text{CDF}_A(W)$ and that of strategy B by $\text{CDF}_B(W)$. 
Let $W_T$ be the random wealth at time $T$ and $W$ be a possible wealth realization, then we can interpret $\text{CDF}_A(W)$ as
\begin{eqnarray}
   \text{CDF}_A(W) &= &Prob(W_T \leq W) ~. \label{cdf_def}
\end{eqnarray}

Following \citet{ATK_1987,van2021distribution}, we define partial first-order stochastic dominance.

\begin{definition}[Partial first order stochastic dominance]\label{partial_def}
Given an investment strategy A which generates a CDF of terminal wealth $W$ given by $\text{CDF}_A(W)$, 
and a strategy B with
 $\text{CDF}_B(W)$,
then strategy $A$ partially stochastically dominates strategy B (to first order) {in the interval $(W_{lo},W_{hi})$} if
\begin{eqnarray}
    \text{CDF}_A(W) & \leq & \text{CDF}_B(W),\; \forall W \in (W_{lo}, W_{hi}) 
\end{eqnarray}
with strict inequality for at least one point in $(W_{lo}, W_{hi})$.
\end{definition}
The arguments for relaxing the usual definition of stochastic dominance are 
given in \citep{ATK_1987,van2021distribution}. Given some initial wealth $W_0$, if $W_{hi} \gg W_0$, then an investor may not be concerned that strategy $A$ underperforms strategy $B$ at these very high wealth values. In this case, the investor is fabulously wealthy. 
 Suppose that $W_{lo} \ll W_0$,
Assume $\text{CDF}_A(W_{lo}) = \text{CDF}_{B}( W_{lo})$.  As an extreme example, suppose $W_{lo} = $2 cents.  The fact
that strategy $B$ has a higher probability of ending up with one cent, compared with strategy $A$ is cold
comfort, and not particularly interesting.  On the other hand, suppose $\text{CDF}(W_{lo}) \ll 1$.  Again, an investor
may not be interested in events with exceptionally low probabilities.

\begin{figure}[htb!]
\centering
\includegraphics[width=3.0in]{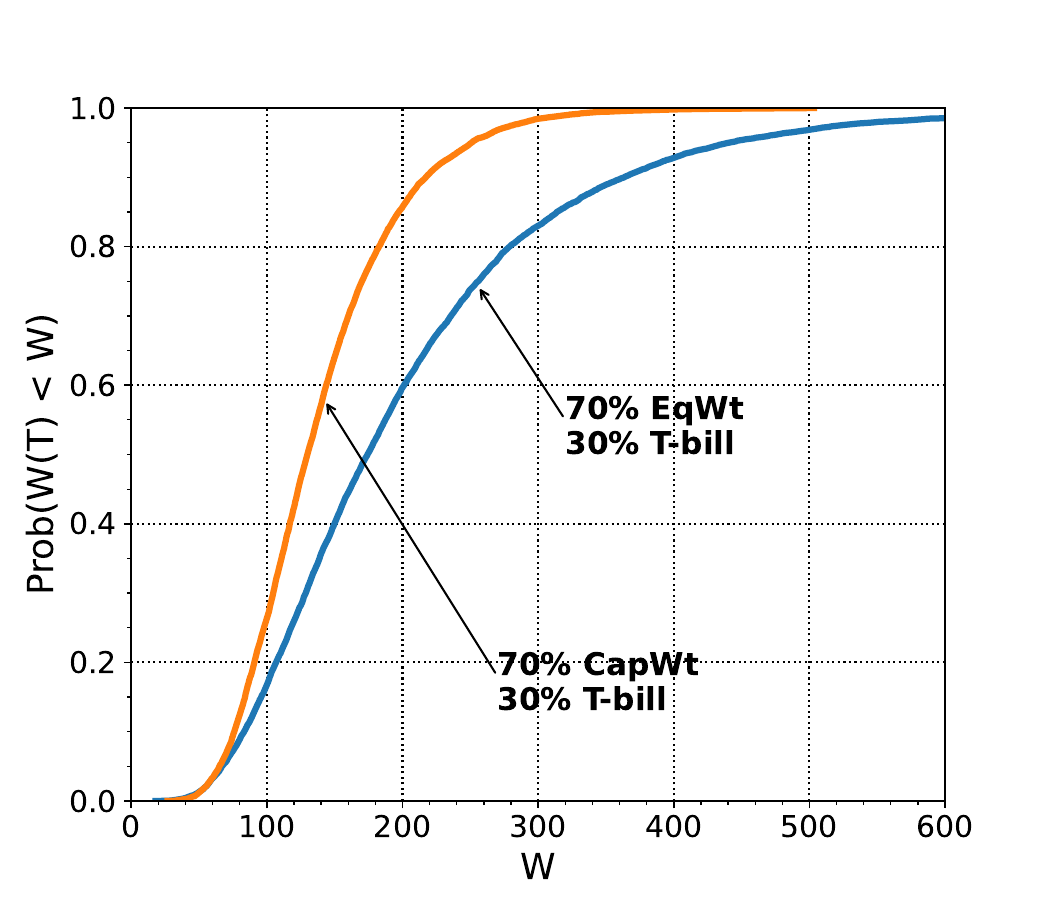}
\caption{Cumulative distribution function of final real wealth $W$ at $T=10$ years, bootstrap resampling
expected blocksize six months, $10,000$ resamples (Appendix \ref{bootstrap_appendix}).
$T = 10 $ years.
Data: concatenated returns, 1940:8-1951:7, 1968:9-1985:10.
Scenario described in Table \ref{base_case_1}.
}
\label{plot_cum_high_inflation_eq_cap_fig}
\end{figure}

Remarkably, the EqWt strategy appears to partially stochastically
dominate the CapWt strategy, since the CDF curve of the EqWt strategy 
almost appears to be entirely on the right side of the CDF curve of the CapWt strategy, except at very low probability
values, see  \citet{ATK_1987,van2021distribution} for a definition of stochastic partial dominance.
Close examination shows that the curves cross at the point $F_{EqWt}(W_{lo}) = F_{CapWt}(W_{lo}) \simeq .02$,
with a slight underperformance of the EqWt strategy compared to the CapWt strategy in this extreme left tail.

The fact that the EqWt strategy partially stochastically dominates the CapWt strategy seems to suggest 
that the equal-weighted stock index is the better choice for the stock index 
than the cap-weight stock index during high inflation times. 
\iffalse
In a later numerical experiment {in Section \ref{sec:main_experiment},} where the admissible
stock indexes include both cap-weighted and equal-weighted indexes, we show that the computed
optimal strategy almost always ignores the cap-weighted stock index 
and picks the equal-weighted stock index, 
which further confirms the argument that the equal-weighted stock index is the preferred asset for 
investors during high inflation scenarios.
\fi
We note that, using recent data\footnote{Since about 2010.  Of course, this is outside 
a period of sustained high inflation.},
the situation is not as clear \citep{taljaard2021has},
since the equal-weighted index appears to underperform.
However, \citet{taljaard2021has} suggests that this is due to the recent
market concentration in tech stocks.\footnote{As of February 2023, Apple, Microsoft, Amazon and
Alphabet (A and C) in total comprised 17\% of the market capitalization of the S\&P 500.} In fact, a plausible explanation for the outperformance (historically) of an equal-weighted index is that this is simply due to the small-cap effect, which was not widely known until about 1981 \citep{Banz_1980}.  
\citet{Plyakha2021} acknowledge that the equal-weighted index has significant exposure to the size factor.  However, \citet{Plyakha2021} argue that the equal-weighted index also has a larger exposure to the value factor.  In addition, there is a significant {\em alpha} effect due to the contrarian strategy of frequent rebalancing to equal weights. It would appear to be simplistic to dismiss an equal weight strategy on the grounds that this is simply a small cap effect that has become less effective.

%%%%% calibrated synthetic model parameters
\section{Calibrated synthetic model parameters}\label{app:calibrated_parans}
\begin{table}[H]
\centering
\begin{tabular}{ccccccccccccc}
\hline
$\mu_1$ & $\sigma_1$ & $\lambda_1$ & $\nu_1$ & $\iota_1$ & $\varsigma_1$ & 
    $\mu_2$ & $\sigma_2$ & $\lambda_2$ & $\nu_2$ & $\iota_2$ & $\varsigma_2$ & $\rho$ \\ \hline
0.051   & 0.146      & 0.178       & 0.2         & 7.13       & 7.33       & -0.014  & 0.017      & 0.321       & 0           & N/A       & 44.48      & 0.14   \\ \hline
\end{tabular}
\caption{Estimated annualized parameters for double exponential 
jump-diffusion model (\ref{def:double_exponential}) from CRSP cap-weighted stock index, 30-day U.S. T-bill index deflated by the CPI. Sample period: concatenated
1940:8-1951:7 and 1968:9-1985:10.}
\end{table}

%%%%%%%%%%%%%%%%% CD vs CS
\section{Comparison of CD and CS objectives}\label{sec:shorfall_obj}
In this section, we numerically compare the CS objective function (\ref{prob:CS_cont}) with the CD objective function (\ref{prob:CD_cont}).

As we briefly discussed in Section \ref{sec:obj_choice}, one caveat of the CD objective function is that it not only penalizes the underperformance relative to the elevated target but also penalizes the outperformance over the elevated target. In practice, the outperformance of the elevated target is favorable, and managers may not want to penalize the strategy when it happens. Therefore, in such cases, the cumulative quadratic shortfall (CS) objective (\ref{prob:CS_cont}) and (\ref{prob:CS_disc}) may be more appropriate. For the remainder of the paper, we focus on the discrete-time CS problem with the LFNN parameterization and the equally-spaced rebalancing schedule $T_{\Delta t}$ defined in Appendix \ref{sec:approx_form}, i.e.,
\begin{equation}{\text{(Parameterized $CS(\beta)$)}:\quad}
   \inf_{\boldsymbol{\theta}\in\mathbb{R}^{N_{\boldsymbol{\theta}}}}\mathbb{E}_{f(\cdot;\boldsymbol{\theta})}^{(t_0,w_0)}\Bigg[\sum_{t\in\mathcal{T}_{\Delta t}}\Big(\min\big(W_{\boldsymbol{\theta}}(t)-e^{\beta t}\hat{W}(t),0\big)\Big)^2+\epsilon W_{\boldsymbol{\theta}}(T)\Bigg],\label{prob:CS_disc_opt}
\end{equation}

The CS objective function in (\ref{prob:CS_disc_opt}) only penalizes the underperformance against the elevated target. Here $\epsilon W(T)$ is a regularization term. We remark that problem (\ref{prob:CS_disc_opt}) without the regularization term can be ill-posed. To see this, consider a case where $W_{\boldsymbol{\theta}}(t) \gg e^{\beta t} \hat{W}(t)$, for some $t\in[t_0,T]$. In this case, the future cumulative quadratic shortfall (on $[t,T]$) will almost surely be zero without the regularization term, 
{  so the control from thereon has no effect on the objective function under that scenario.}
%so any strategy is optimal from thereon. 
We choose $\epsilon$ to be a small positive scalar.  As William Bernstein once said, ``if you have won the game stop playing.'' If one has accumulated as much wealth as Warren Buffet, then it does not matter what assets she invests in. The positive regularization factor of $\epsilon$ forces the strategy to put all wealth into less risky assets when the portfolio has already performed extremely well.

We design a numerical experiment to compare the CS objective with the symmetric CD objective in the following problem (\ref{prob:CD_disc_opt}) with the same LFNN parameterization and equally-spaced rebalancing schedule $\mathcal{T}_{\Delta t}$
\begin{equation}{\text{(Parameterized $CD(\beta)$)}:\quad}
   \inf_{\boldsymbol{\theta}\in\mathbb{R}^{N_{\boldsymbol{\theta}}}}\mathbb{E}_{f(\cdot;\boldsymbol{\theta})}^{(t_0,w_0)}\Bigg[\sum_{t\in\mathcal{T}_{\Delta t}}\big(W_{\boldsymbol{\theta}}(t)-e^{\beta t}\hat{W}(t)\big)^2\Bigg].\label{prob:CD_disc_opt}
\end{equation}

Specifically, we adopt the investment scenario in Table \ref{tb:validation_case} with $\beta=0.02$ and reuse the training and testing data sets simulated from the calibrated double exponential jump-diffusion model. The neural network strategies follow the trained LFNN models on the training data set $\boldsymbol{Y}$ for both the CS and CD objective. We then evaluated both strategies on the same testing data set $\boldsymbol{Y}^{test}$. 

We compare the {\it wealth ratio}, i.e., the wealth of the managed portfolio divided by the wealth of the benchmark portfolio, over time, for both strategies. The wealth ratio metric reflects how well the active strategy performs against the benchmark strategy along the investment horizon; a higher wealth ratio metric is better. Below we show the percentiles of the wealth ratio for both strategies evaluated on $\boldsymbol{Y}^{test}$.

\begin{figure}[H]
\begin{minipage}{1\textwidth}
\centering
    \includegraphics[width=3.2in]{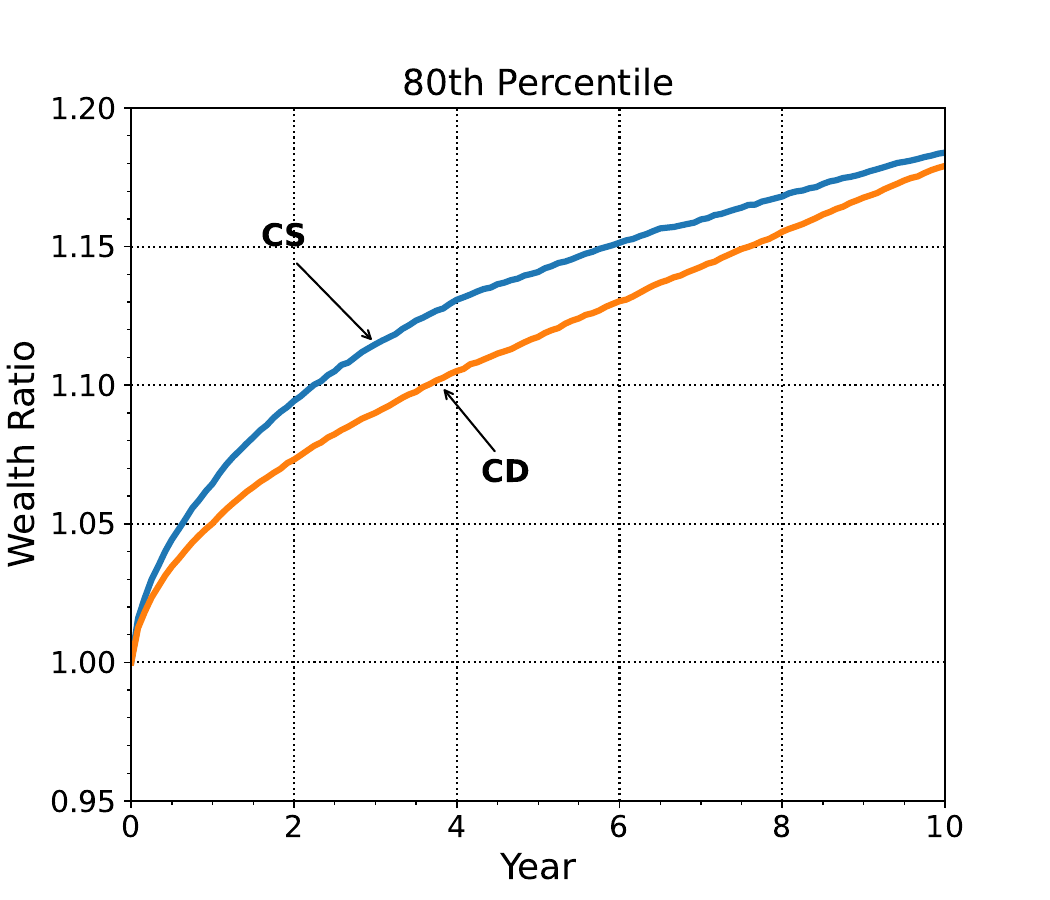}
    \includegraphics[width=3.2in]{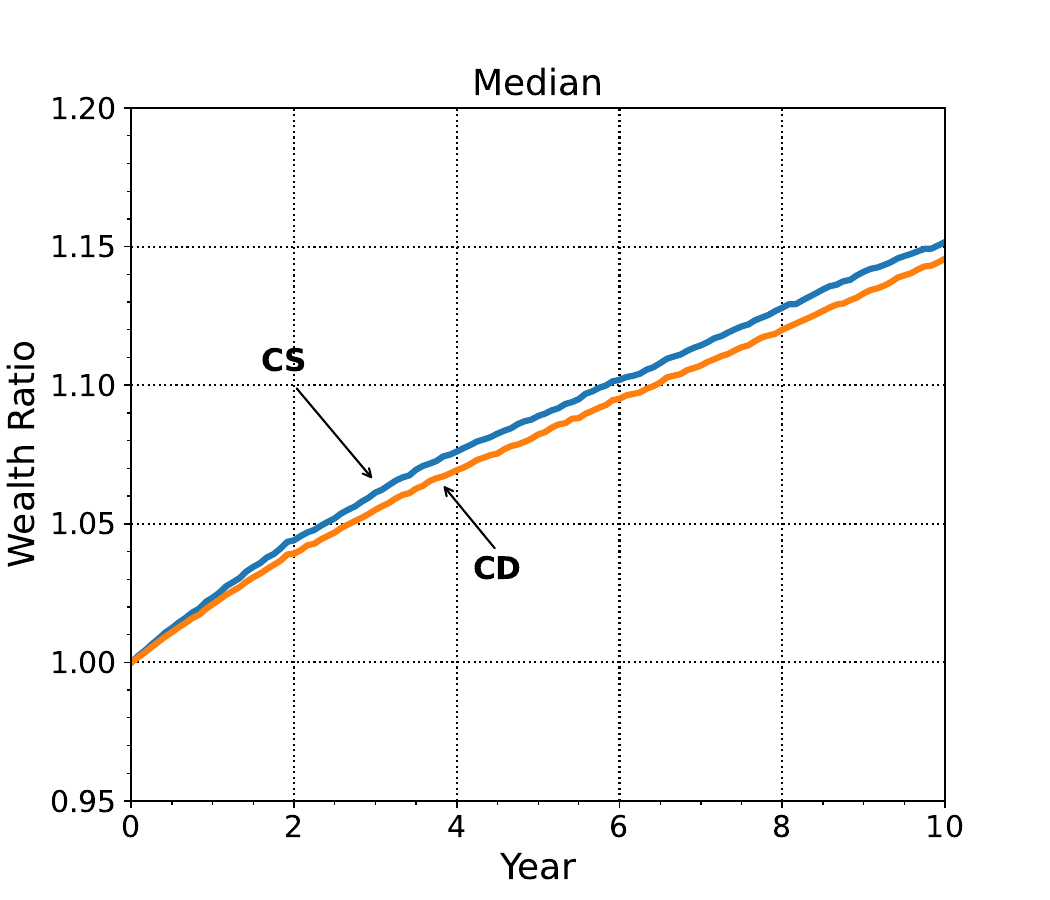}
    \vspace{0.2cm}
    \includegraphics[width=3.2in]{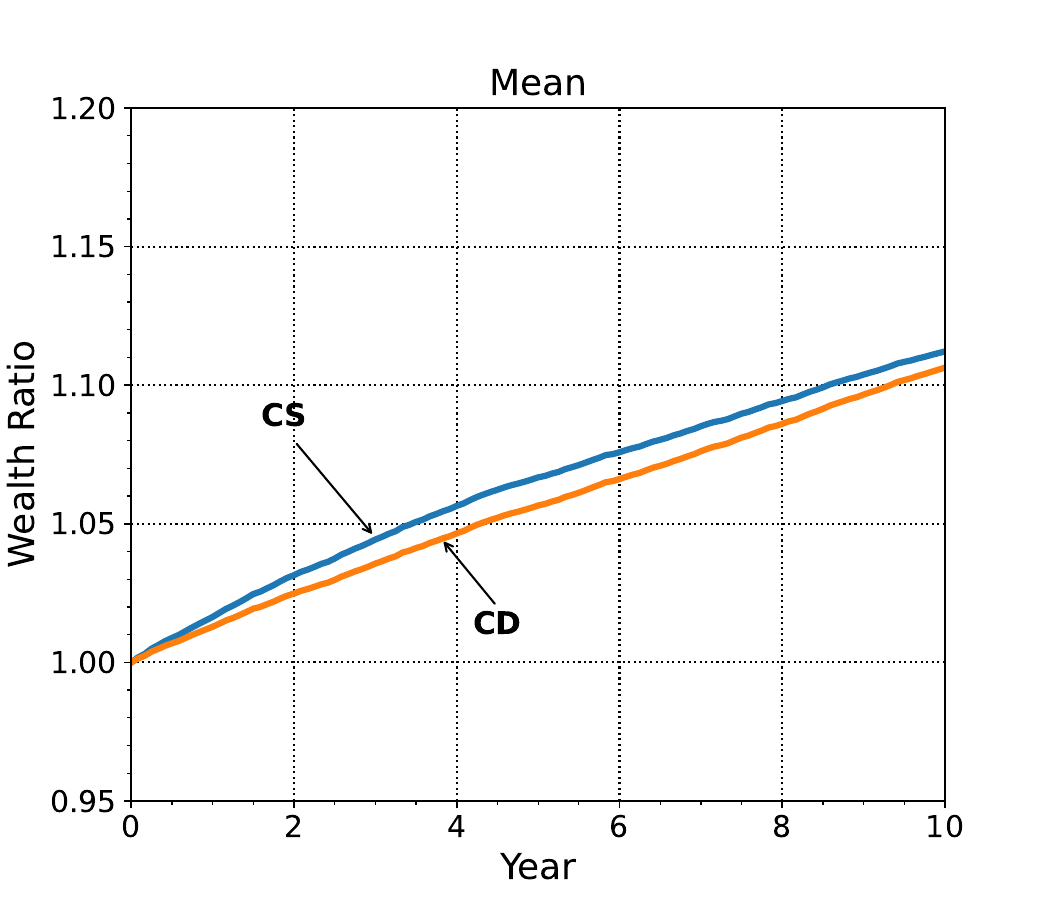}
    \includegraphics[width=3.2in]{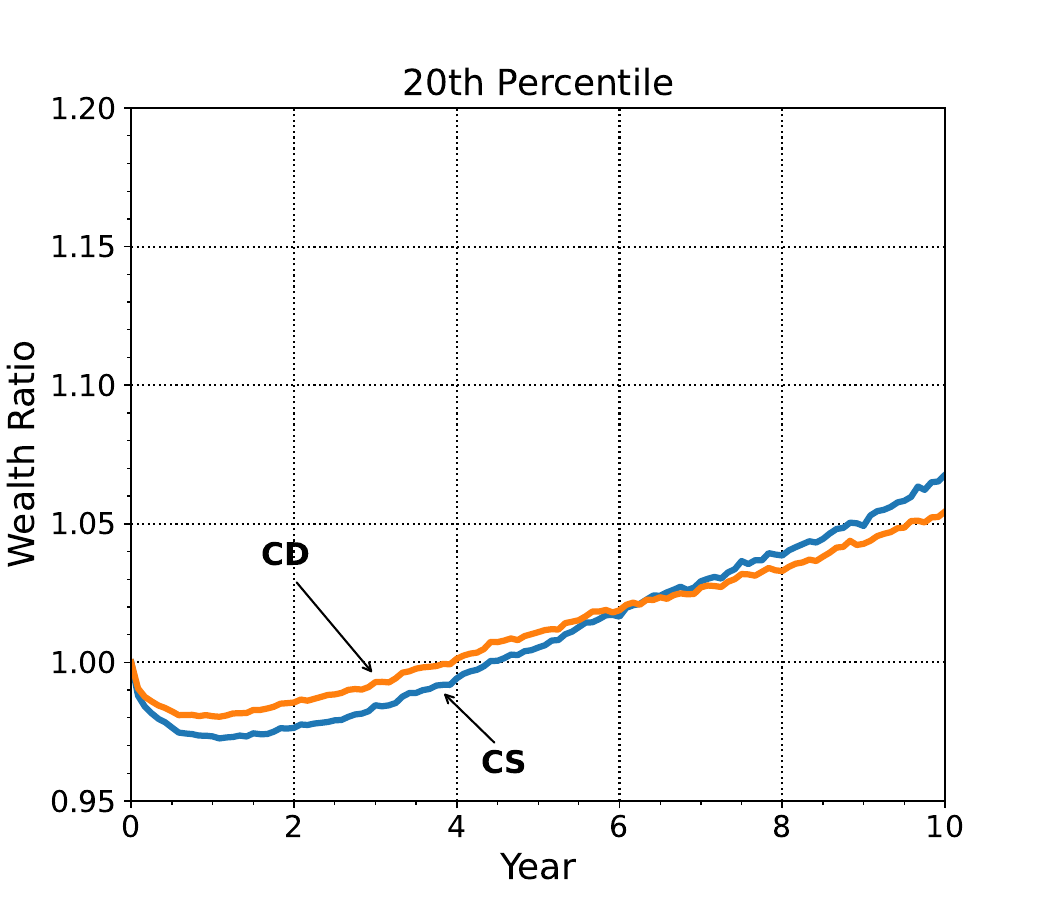}
\end{minipage}
    \caption{Percentiles of wealth ratio of the neural network strategy (i.e., the neural network model) learned under the cumulative quadratic tracking difference (CD) objective and the neural network strategy learned under the cumulative quadratic shortfall (CS) objective. The results shown are based on evaluations on the testing data set $\boldsymbol{Y}^{test}$.}\label{fig:obj_quad_vs_downside}
\end{figure}

We can see from Figure \ref{fig:obj_quad_vs_downside} that the CS strategy (neural network strategy trained under the CS objective) yields a more favorable wealth ratio than the CD strategy (neural network strategy trained under the CD objective). On average, the CS strategy achieves a consistently higher terminal wealth ratio than the CD strategy. Even in the 20th percentile case, the CS strategy lags initially but recovers over time.\footnote{The CS strategy starts with a higher allocation to the stock, and thus encounters more volatility early on.} The result indicates that the CS objective might be a wiser choice for managers in practice. In the following numerical experiments with bootstrap resampled data, we use the CS objective (\ref{prob:CS_disc_opt}) instead of the CD objective (\ref{prob:CD_disc_opt}).\footnote{Unfortunately, we cannot derive a closed-form solution under the CS objective (\ref{prob:CS_cont}) under continuous rebalancing using the PIDE approach, due to the non-smoothness introduced by the $\min(\cdot)$ function.}

%%%%% high borrow cost
\section{Experiments with non-zero borrowing premium}\label{sec:borrow_premium}
In Section \ref{sec:numerical_experiments}, we conducted the numerical experiments, assuming the borrowing premium is zero. This assumption is based on the fact that large sovereign wealth funds are often considered to have almost risk-free credit ratings, due to their state-backed nature. In other words, we assume that sovereign wealth funds can borrow funding at the same rate as risk-free treasury bills. 

This assumption may be too benign for general public funds. In general, it is unlikely that a non-sovereign wealth fund can borrow at a risk-free rate. However, the actual borrowing cost within large public funds is often 
unavailable. For this reason, we use the corporate bond yields issued by corporations with similar credit ratings as these large public funds as an approximation to the borrowing cost. 
Currently, large public funds such as the Blackstone Group or Apollo Global Management are rated between Aaa and Baa rating by Moody's. We obtain the nominal corporate bond yields with Moody's Aaa \citep{moodyaaa2023} and Baa \citep{moodybaa2023} ratings and adjust them with CPI returns. 
During the two high-inflation regimes we have identified, 
Aaa-rated corporate bonds have an average real yield of 0.7\%, 
while Baa-rated bonds have 1.8\%. Taking an average of the two, we use 1.25\% as an estimate for the real yield of corporate bonds as well as the borrowing cost of large public funds.\footnote{Note that the corporate bonds from Moody's yield data have maturities of more than 20 years. Usually, long-term bonds have higher yields than short-term bonds. 
Thus, using corporate yields likely overestimates the borrowing cost, since we assume the manager is only borrowing short-term funding.} As discussed in Section \ref{app:inflation_asset_return}, the average real return for T-bill index is -1.4\%. This gives us an average borrowing premium rate of $2.65\%$. 
In this section, as a stress test, we conduct the same experiment as in Section \ref{sec:numerical_experiment}, except that we use a fixed borrowing premium of $3\%$ instead of 0. We note that the historical corporate bond yields are based on bonds with a long maturity. Typically, long-term yields are higher than short-term yields, which accounts for the term risk. 
Therefore, the assumption of a $3\%$ borrowing premium 
should be a fairly aggressive stress test for the use of leverage.

\begin{figure}[H]
\centerline{%
\begin{subfigure}[t]{.45\linewidth}
\centering
\includegraphics[width=\linewidth]{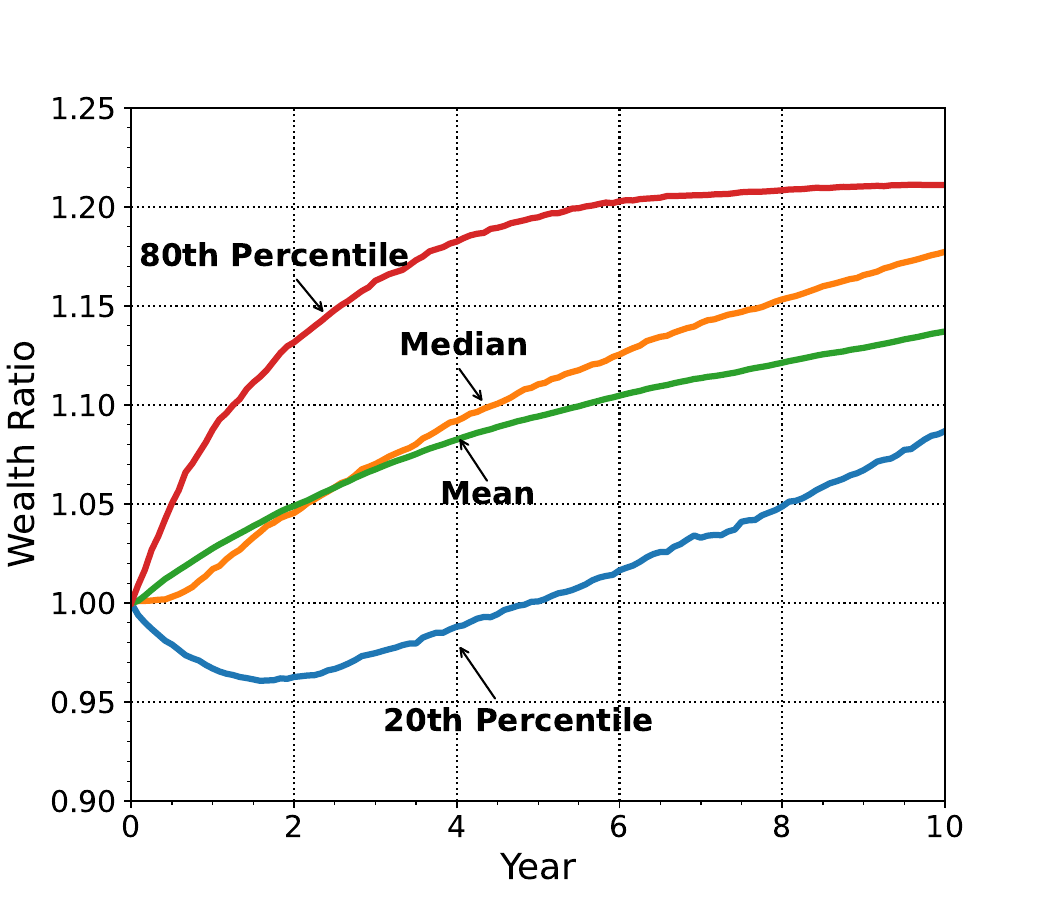}
\caption{Wealth ratio $\frac{W(t)}{\hat{W}(t)}$}
\label{fig:path_ratio_porem=3}
\end{subfigure}
\begin{subfigure}[t]{.45\linewidth}
\centering
\includegraphics[width=\linewidth]{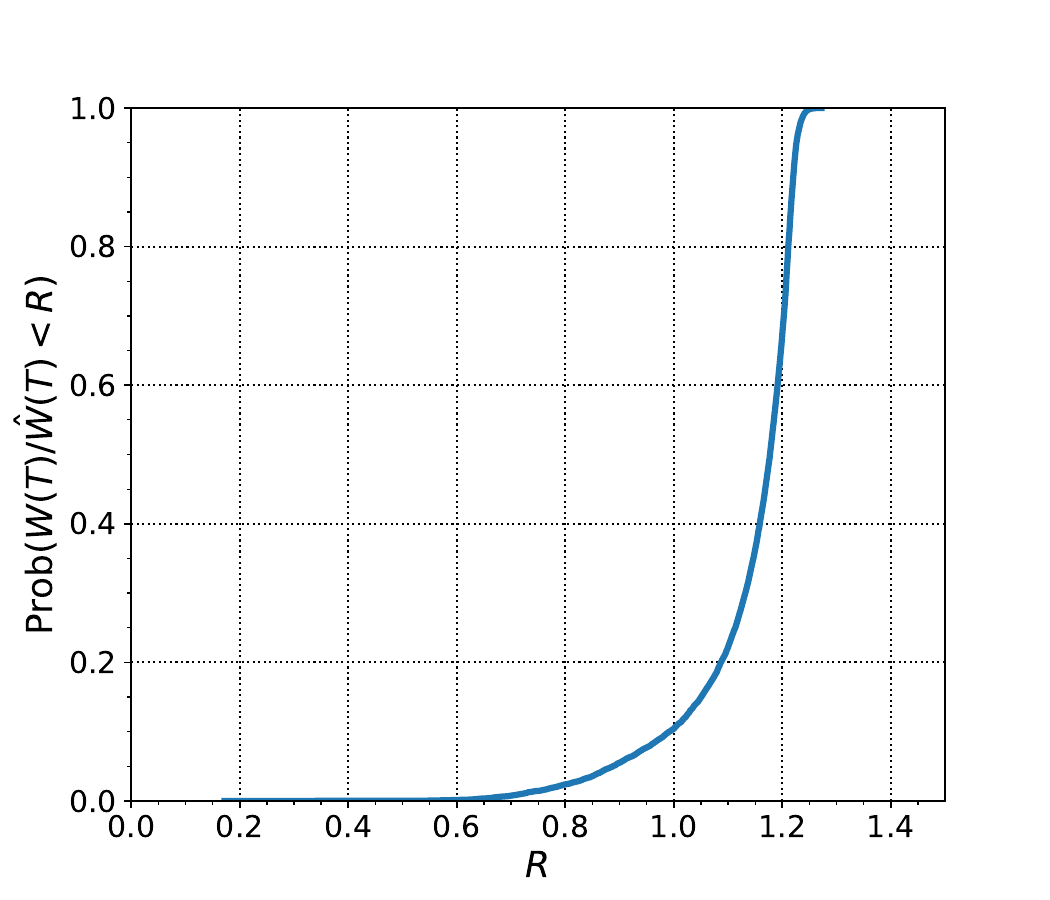}
\caption{Terminal wealth ratio $\frac{W(T)}{\hat{W}(T)}$}
\label{fig:cdf_irr_prem=3}
\end{subfigure}
}
\caption{
Percentiles of wealth ratio over the investment horizon, and CDF of terminal wealth ratio. Annualized borrowing premium is 3\%. Results are based on the evaluation of the learned neural network model (from high-inflation data) on the testing data set (low-inflation data).
}
\label{fig:prem=3}
\end{figure}

\begin{table}[H]
\begin{center}
\begin{tabular}{lccccc} \hline
Strategy &  Median[$W_T$] & E[$W_T$] &   std[$W_T$] & 5th Percentile  & Median IRR (annual)    \\ \hline
 Neural network & 362.7        & 401.3    &  212.6    & 133.9  & 0.078\\
 Benchmark     & 308.5        & 342.9     & 165.0    & 149.0 & 0.056 \\ \hline
 Neural network (zero premium) & 364.2     & 403.4    &  211.8    & 136.3  & 0.078\\\hline
\end{tabular}
\caption{Statistic of strategies. Annualized borrowing premium is 3\%. Results are based on the evaluation of the learned neural network model (from high-inflation data) on the testing data set (low-inflation data).}
\label{tb:prem=3}
\end{center}
\end{table}

We plot the percentiles of the wealth ratio and the CDF of the terminal wealth ratio. 
As we can see from Figure \ref{fig:prem=3} and Table \ref{tb:prem=3}, the neural network strategy is only 
marginally affected by the increased borrowing premium rate. 
Specifically, the terminal wealth statistics  for the case
with the borrowing premium all slightly worse. 
However, the impact is so marginal that the median 
IRR does not change, and the neural network strategy 
still maintains more than a 200 bps advantage in terms of median IRR
compared to the benchmark.

The most noticeable difference is in the allocation 
fraction, as shown in Figure \ref{fig:alloc_prem=3}. With a 
significantly higher borrowing cost, the neural network 
strategy does not leverage as much in the first two years, 
resulting in a less negative allocation to the T-bill and 
a lower allocation to the equal-weighted stock index. However, 
as we have seen in Figure \ref{fig:prem=3} and 
Table \ref{tb:prem=3}, this only results in minimal impact on 
the performance of the strategy. 

\begin{figure}[H]
\centering
\includegraphics[width=3.0in]{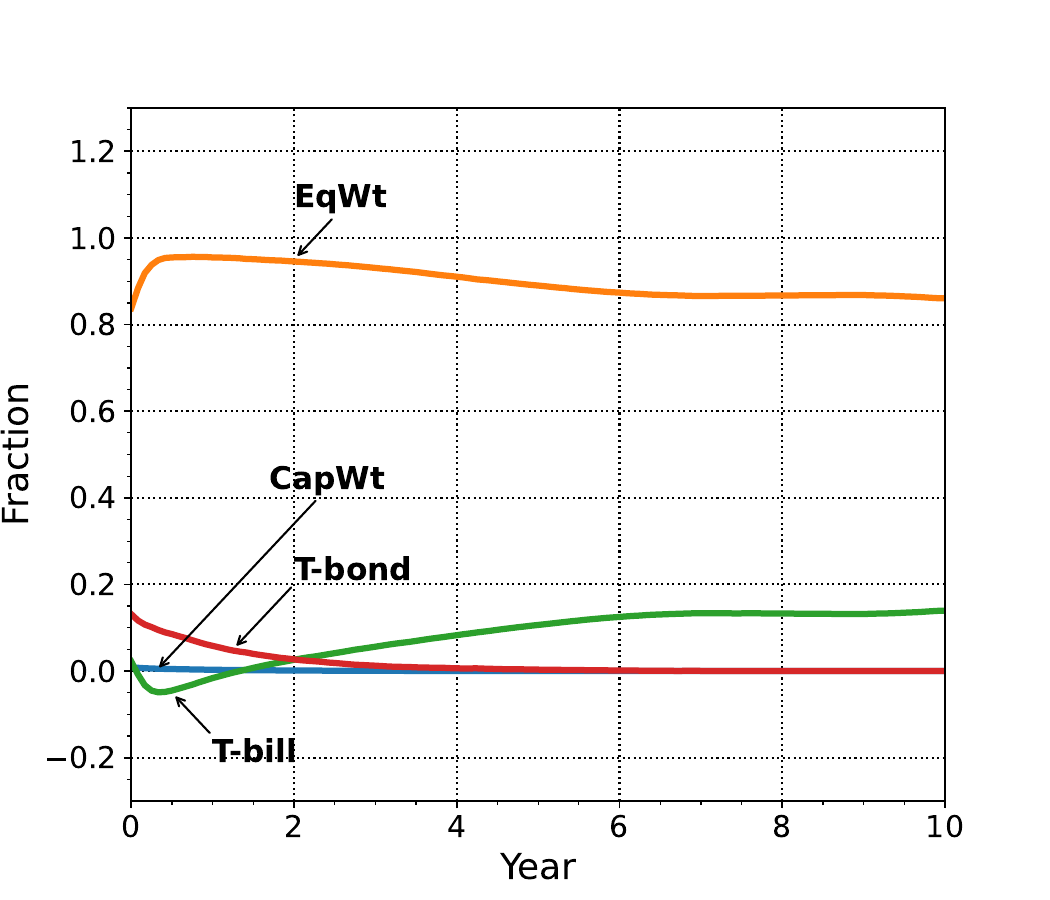}
\caption{Mean of allocation fraction of the learned neural network strategy over time, evaluated on the testing data set. Annualized borrowing premium is 3\%. The neural network strategy is learned from bootstrap resampled data based on concatenated series: 1940:8-1951:7, 1968:9-1985:10 (high-inflation regimes). The investment scenario is described in Table \ref{nn_main_case}.}\label{fig:alloc_prem=3}
\end{figure}

%%%%%%%%%%%%%%%%% low inflation testing
\section{Strategy performance in low inflation regimes}\label{sec:low_inflation}
Until now, the entire discussion has been centered around the imaginary scenario of a long, persistent inflation environment. We have shown that the neural network strategy consistently outperforms the benchmark strategy under a high-inflation regime. However, what if we are wrong? More dramatically, if the high-inflation situation ends immediately and we enter a low-inflation environment, how will the neural network strategy learned under high-inflation regimes perform? 

To answer these questions, we evaluate the neural network strategy learned under high-inflation regimes on a testing data set bootstrapped from low-inflation historical regimes. Specifically, we exclude the two inflation regimes (1940:8-1951:7 and 1968:9-1985:10) from the full historical data of 1926:1-2022:1 and obtain several low-inflation data segments. 
We concatenate the low-inflation data segments and use the stationary bootstrap (Appendix \ref{app:bootstrap}) to generate a testing data set. We adopt the investment scenario described in Table \ref{nn_main_case} and evaluate the performance of the neural network strategy obtained in Section \ref{sec:numerical_experiment} on this low-inflation data set. 

\begin{figure}[htb!]
\centering
\includegraphics[width=3.0in]{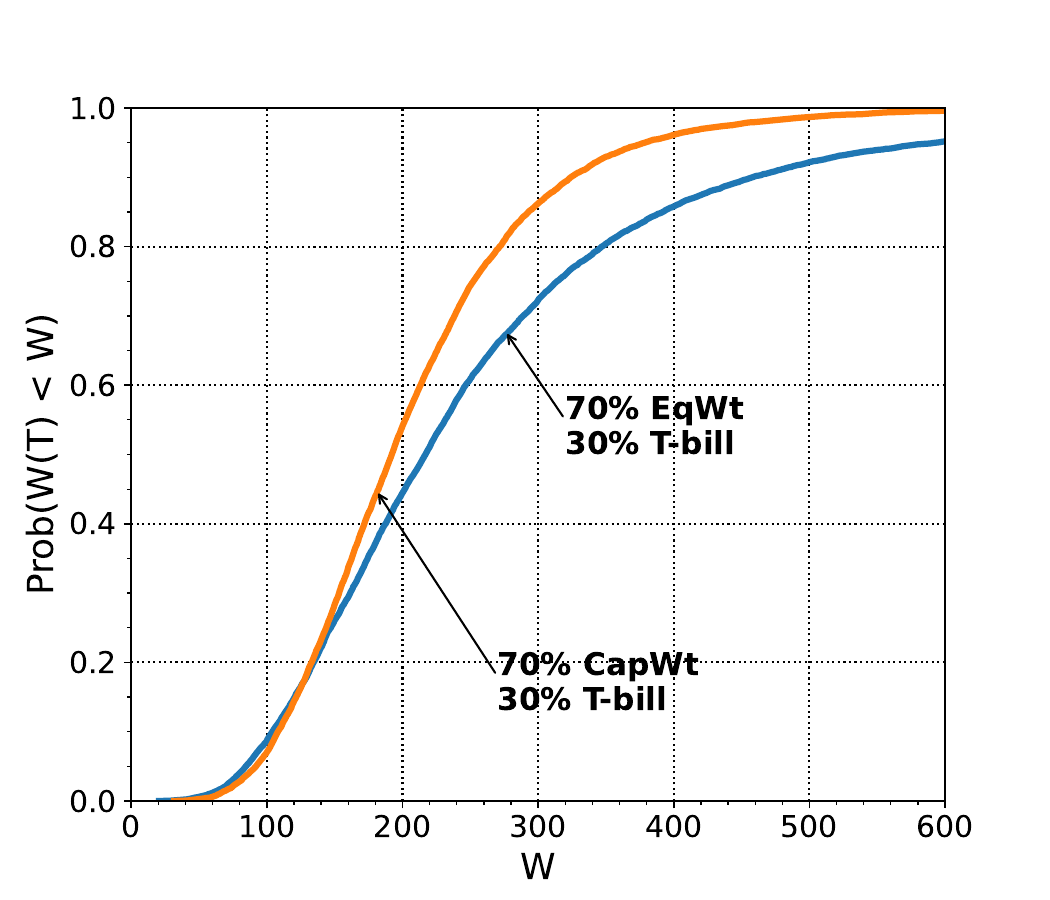}
\caption{
Cumulative distribution functions (CDFs) for
cap-weighted and equal-weighted indexes, as a function of final real wealth $W$ at $T=10$ years.
Initial stake $W_0 = 100$, no cash injections or withdrawals.  Block
bootstrap resampling, expected blocksize 6 months.  70\% stocks, 30\% bonds, rebalanced
monthly.  Bond index: 30-day U.S. T-bills.  Stock index: CRSP capitalization-weighted or CRSP equal-weighted index. Underlying data excludes high-inflation regimes.  All indexes are deflated
by the CPI. $10,000$ resamples. Data set 1926:1-2022:1, excluding high inflation regimes (1940:8-1951:7 and 1968:9-1985:10).
}
\label{fig:low_inf_fixedmix_CDF}
\end{figure}

Note that we continue to use the equal-weighted stock index/30-day T-bill 
fixed-mix portfolio as the benchmark. This is validated by Figure \ref{fig:low_inf_fixedmix_CDF}, which plots the CDF of the terminal wealth of the fixed-mix portfolios using 70\% equal-weighted stock index vs 70\% cap-weighted stock index (both with 30\% 30-day U.S. T-bill as the bond component). As we can see from Figure \ref{fig:low_inf_fixedmix_CDF}, the fixed-mix portfolio with an equal-weighted stock index clearly has a more right-skewed distribution than the portfolio with a cap-weighted stock index. This seems to suggest that the equal-weighted index is the superior choice to use in the benchmark portfolio, even in low inflation regimes.

\begin{figure}[htb]
\centerline{%
\begin{subfigure}[t]{.45\linewidth}
\centering
\includegraphics[width=\linewidth]{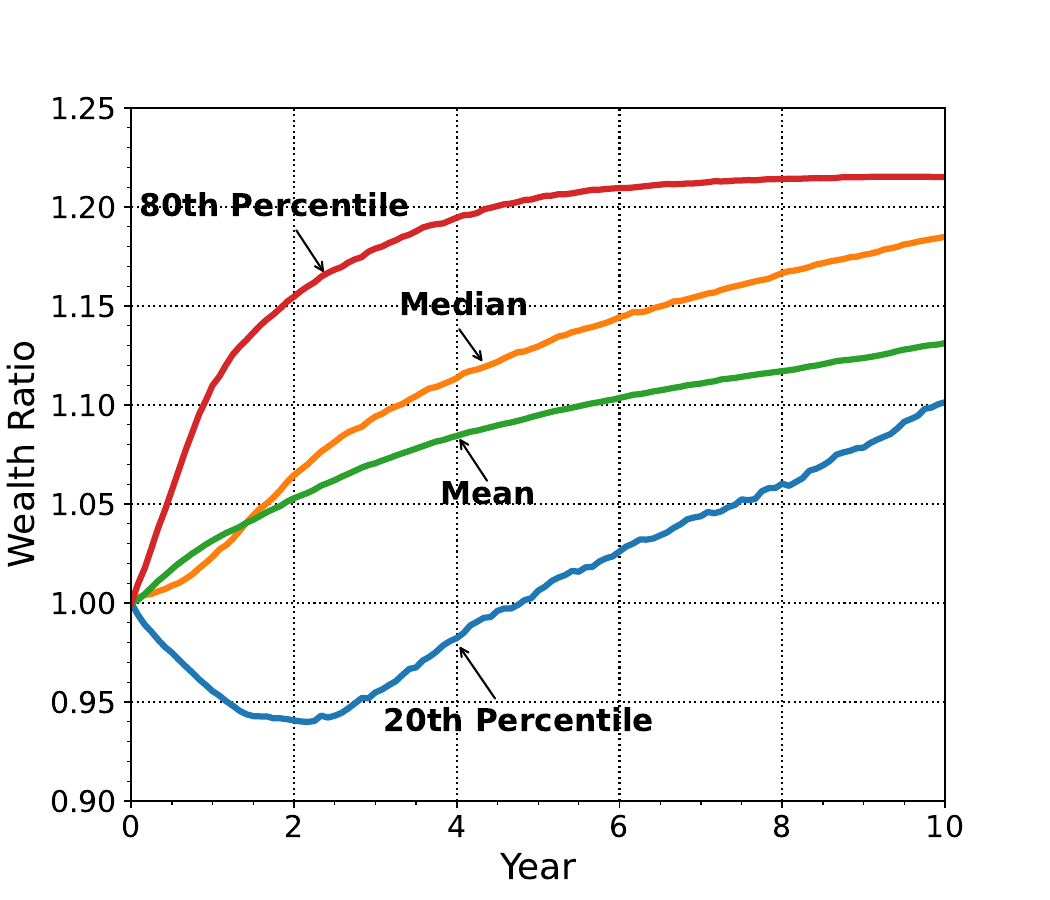}
\caption{Wealth ratio $\frac{W(t)}{\hat{W}(t)}$}
\label{fig:path_ratio_lowinf}
\end{subfigure}
\begin{subfigure}[t]{.45\linewidth}
\centering
\includegraphics[width=\linewidth]{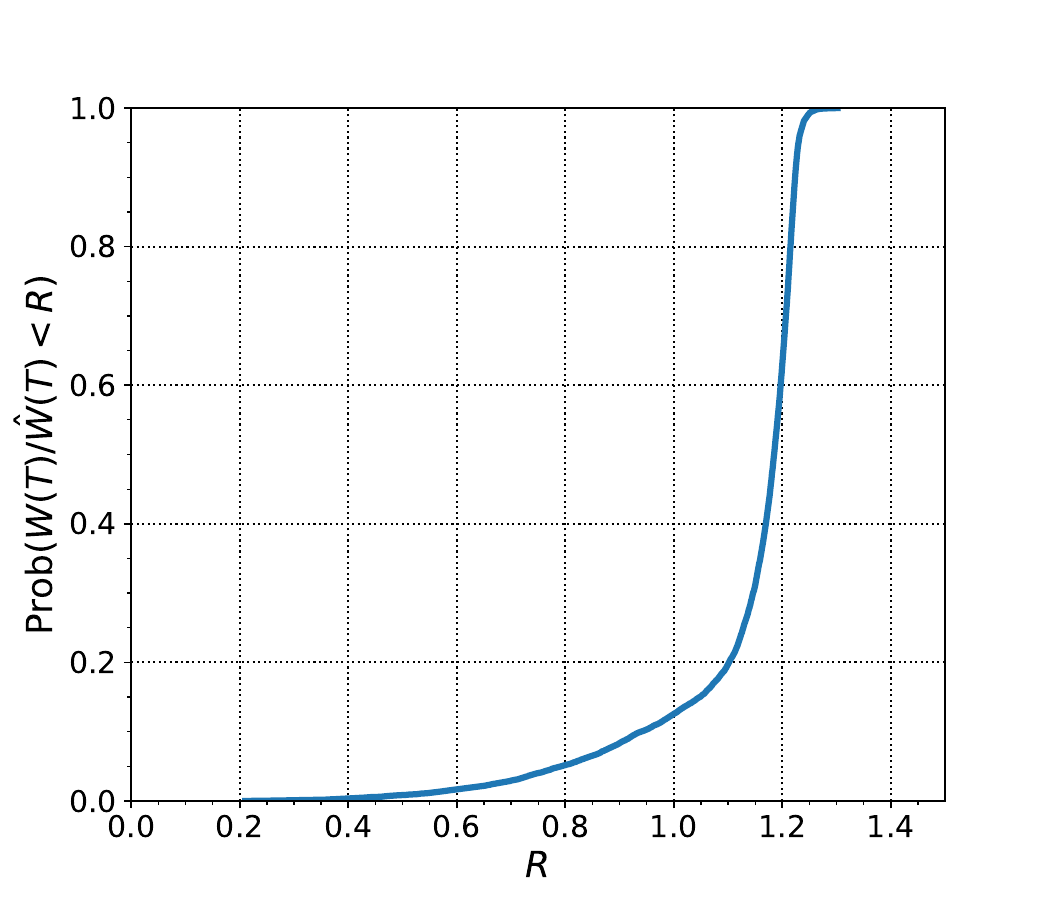}
\caption{Terminal wealth ratio $\frac{W(T)}{\hat{W}(T)}$}
\label{fig:cdf_irr_lowinf}
\end{subfigure}
}
\caption{
Percentiles of wealth ratio over the investment horizon, and CDF of terminal wealth ratio. Results are based on the evaluation of the learned neural network model (from high-inflation data) on the low-inflation testing data set).
}
\label{fig:lowinf}
\end{figure}

\begin{table}[hbt!]
\begin{center}
\begin{tabular}{lccccc} \hline
Strategy &  Median[$W_T$] & E[$W_T$] &   std[$W_T$] & 5th Percentile  & Median IRR (annual)    \\ \hline
 Neural network & 429.7        & 489.6    &  301.9    & 151.6  & 0.100\\
 Benchmark     & 368.3        & 420.8     & 238.2    & 175.7 & 0.079 \\ \hline
\end{tabular}
\caption{Statistic of strategies. Results are based on the evaluation of the learned neural network model (from high-inflation data) on the low-inflation testing data set.}
\label{tb:lowinf}
\end{center}
\end{table}

We then present the performance of the neural network strategy learned on high-inflation data on 
the testing data set bootstrapped from low-inflation historical returns. 
Surprisingly, as we can see from Figure \ref{fig:path_ratio_lowinf}, 
the performance of the neural network strategy learned under high-inflation regimes 
performs quite well in low-inflation environments. Compared to the testing results on the 
high-inflation data set, there is a noticeable performance degradation; 
for example, the probability of outperforming the benchmark strategy in terminal 
wealth is now slightly less than 90\%. However, the degradation is quite minimal. 
The neural network strategy still has more than an 85\% chance of outperforming 
the benchmark strategy at the end of the investment horizon. 
As shown in Table \ref{tb:lowinf}, the median IRR of the neural network 
strategy is still 2\% higher than the median IRR of the benchmark strategy, meeting the investment target.

The above results indicate that the neural network strategy is surprisingly robust. 
Despite being specifically trained under a high-inflation scenario, 
the strategy performs admirably well in a low-inflation environment.

\clearpage
\bibliographystyle{chicago}
\bibliography{references}

\begin{thebibliography}{}

\bibitem[\protect\citeauthoryear{Al-Aradi and Jaimungal}{Al-Aradi and
  Jaimungal}{2018a}]{alaradi_2018}
Al-Aradi, A. and S.~Jaimungal (2018a).
\newblock Outperformance and tracking: dynamic asset allocation for active and
  passive portfolio management.
\newblock {\em Applied Mathematical Finance\/}~{\em 25(3)}, 268--294.

\bibitem[\protect\citeauthoryear{Al-Aradi and Jaimungal}{Al-Aradi and
  Jaimungal}{2018b}]{al2018outperformance}
Al-Aradi, A. and S.~Jaimungal (2018b).
\newblock Outperformance and tracking: Dynamic asset allocation for active and
  passive portfolio management.
\newblock {\em Applied Mathematical Finance\/}~{\em 25\/}(3), 268--294.

\bibitem[\protect\citeauthoryear{Alekseev and Sokolov}{Alekseev and
  Sokolov}{2016}]{Alekseev_2016}
Alekseev, A.~G. and M.~V. Sokolov (2016).
\newblock Benchmark-based evaluation of portfolio performance: a
  characterization.
\newblock {\em Annals of Finance\/}~{\em 12}, 409--440.

\bibitem[\protect\citeauthoryear{Alizadeh and Nomikos}{Alizadeh and
  Nomikos}{2007}]{alizadeh2007investment}
Alizadeh, A.~H. and N.~K. Nomikos (2007).
\newblock Investment timing and trading strategies in the sale and purchase
  market for ships.
\newblock {\em Transportation Research Part B: Methodological\/}~{\em 41\/}(1),
  126--143.

\bibitem[\protect\citeauthoryear{Anarkulova, Cederburg, and
  O’Doherty}{Anarkulova et~al.}{2022}]{anarkulova2022stocks}
Anarkulova, A., S.~Cederburg, and M.~S. O’Doherty (2022).
\newblock Stocks for the long run? evidence from a broad sample of developed
  markets.
\newblock {\em Journal of Financial Economics\/}~{\em 143\/}(1), 409--433.

\bibitem[\protect\citeauthoryear{Atkinson}{Atkinson}{1987}]{ATK_1987}
Atkinson, A.~B. (1987).
\newblock On the measurment of poverty.
\newblock {\em Econometrica\/}~{\em 55\/}(4), 749--764.

\bibitem[\protect\citeauthoryear{Baillif}{Baillif}{2022}]{baillif2022collared}
Baillif, M. (2022).
\newblock Collared and non-collared manifold boundaries.
\newblock {\em L’Enseignement Math{\'e}matique\/}~{\em 68\/}(1), 161--180.

\bibitem[\protect\citeauthoryear{Ball, Leigh, and Mishra}{Ball
  et~al.}{2022}]{ball2022understanding}
Ball, L.~M., D.~Leigh, and P.~Mishra (2022).
\newblock Understanding us inflation during the covid era.
\newblock Technical report, National Bureau of Economic Research.

\bibitem[\protect\citeauthoryear{Banz}{Banz}{1981}]{Banz_1980}
Banz, R. (1981).
\newblock The relationship between return and market value of common stocks.
\newblock {\em Journal of Financial Economics\/}~{\em 9:1}, 3--18.

\bibitem[\protect\citeauthoryear{Basak, Shapiro, and Tepla}{Basak
  et~al.}{2006}]{basak_2006}
Basak, S., A.~Shapiro, and L.~Tepla (2006).
\newblock Risk management with benchmarking.
\newblock {\em Management Science\/}~{\em 52(4)}, 542--557.

\bibitem[\protect\citeauthoryear{Black}{Black}{1993}]{black1993estimating}
Black, F. (1993).
\newblock Estimating expected return.
\newblock {\em Financial Analysts Journal\/}~{\em 49\/}(5), 36--38.

\bibitem[\protect\citeauthoryear{Bo, Liao, and Yu}{Bo
  et~al.}{2021}]{bo2021optimal}
Bo, L., H.~Liao, and X.~Yu (2021).
\newblock Optimal tracking portfolio with a ratcheting capital benchmark.
\newblock {\em SIAM Journal on Control and Optimization\/}~{\em 59\/}(3),
  2346--2380.

\bibitem[\protect\citeauthoryear{Boyde}{Boyde}{2021}]{boyde2021etf}
Boyde, E. (2021).
\newblock {ETFs: why tracking difference usually matters more than tracking
  error}.
\newblock
  \url{https://www.ft.com/content/44b08a6c-55d7-4841-8e8e-aad6451a4cc3}.
\newblock [Online; accessed 6-Dec-2022].

\bibitem[\protect\citeauthoryear{Brigo, Dalessandro, Neugebauer, and
  Triki}{Brigo et~al.}{2008}]{brigo2008stochastic}
Brigo, D., A.~Dalessandro, M.~Neugebauer, and F.~Triki (2008).
\newblock A stochastic processes toolkit for risk management.
\newblock {\em arXiv preprint arXiv:0812.4210\/}.

\bibitem[\protect\citeauthoryear{Brown}{Brown}{1962}]{brown1962locally}
Brown, M. (1962).
\newblock Locally flat imbeddings of topological manifolds.
\newblock {\em Annals of Mathematics\/}, 331--341.

\bibitem[\protect\citeauthoryear{Browne}{Browne}{1999}]{browne_1999_a}
Browne, S. (1999).
\newblock Beating a moving target: optimal portfolio strategies for
  outperforming a stochastic benchmark.
\newblock {\em Finance and Stochastics\/}~{\em 3}, 275--294.

\bibitem[\protect\citeauthoryear{Browne}{Browne}{2000}]{browne_2000}
Browne, S. (2000).
\newblock Risk-constrained dynamic active portfolio management.
\newblock {\em Management Science\/}~{\em 46(9)}, 1188--1199.

\bibitem[\protect\citeauthoryear{Buehler, Gonon, Teichmann, and Wood}{Buehler
  et~al.}{2019}]{BuehlerGononEtAl2018}
Buehler, H., L.~Gonon, J.~Teichmann, and B.~Wood (2019).
\newblock Deep hedging.
\newblock {\em Quantitative Finance\/}~{\em 19:8}, 1271--1291.

\bibitem[\protect\citeauthoryear{{Bureau of Labor Statistics}}{{Bureau of Labor
  Statistics}}{2023}]{statistics2023consumer}
{Bureau of Labor Statistics} (2023).
\newblock Consumer price index summary.

\bibitem[\protect\citeauthoryear{Cavaglia, Scott, Blay, and Hixon}{Cavaglia
  et~al.}{2022}]{cavaglia2022multi}
Cavaglia, S., L.~Scott, K.~Blay, and S.~Hixon (2022).
\newblock Multi-asset class factor premia: A strategic asset allocation
  perspective.
\newblock {\em The Journal of Portfolio Management\/}~{\em 48\/}(4), 14--32.

\bibitem[\protect\citeauthoryear{Charteris and McCullough}{Charteris and
  McCullough}{2020}]{charteris2020tracking}
Charteris, A. and K.~McCullough (2020).
\newblock Tracking error vs tracking difference: Does it matter?
\newblock {\em Investment Analysts Journal\/}~{\em 49\/}(3), 269--287.

\bibitem[\protect\citeauthoryear{Coache and Jaimungal}{Coache and
  Jaimungal}{2021}]{coache2021reinforcement}
Coache, A. and S.~Jaimungal (2021).
\newblock Reinforcement learning with dynamic convex risk measures.
\newblock {\em arXiv preprint arXiv:2112.13414\/}.

\bibitem[\protect\citeauthoryear{Cogneau and Zakamouline}{Cogneau and
  Zakamouline}{2013}]{cogneau2013block}
Cogneau, P. and V.~Zakamouline (2013).
\newblock Block bootstrap methods and the choice of stocks for the long run.
\newblock {\em Quantitative Finance\/}~{\em 13\/}(9), 1443--1457.

\bibitem[\protect\citeauthoryear{Connelly}{Connelly}{1971}]{connelly1971new}
Connelly, R. (1971).
\newblock A new proof of brown’s collaring theorem.
\newblock {\em Proceedings of the American Mathematical Society\/}~{\em
  27\/}(1), 180--182.

\bibitem[\protect\citeauthoryear{Cont, Mancini, et~al.}{Cont
  et~al.}{2011}]{cont2011nonparametric}
Cont, R., C.~Mancini, et~al. (2011).
\newblock Nonparametric tests for pathwise properties of semimartingales.
\newblock {\em Bernoulli\/}~{\em 17\/}(2), 781--813.

\bibitem[\protect\citeauthoryear{{CPP Investments}}{{CPP
  Investments}}{2021}]{cpp2021}
{CPP Investments} (2021).
\newblock {Annual Report 2021}.
\newblock
  \url{https://www.cppinvestments.com/the-fund/our-performance/financial-results/}.
\newblock [Online; accessed 5-Dec-2022].

\bibitem[\protect\citeauthoryear{{CPP Investments}}{{CPP
  Investments}}{2022}]{cpp2022}
{CPP Investments} (2022).
\newblock {Annual Report 2022}.
\newblock
  \url{https://www.cppinvestments.com/wp-content/uploads/2022/06/CPP-Investments_F2022-Annual-Report-EN.pdf}.
\newblock [Online; accessed 5-Dec-2022].

\bibitem[\protect\citeauthoryear{Dang and Forsyth}{Dang and
  Forsyth}{2016}]{dang2016better}
Dang, D.-M. and P.~A. Forsyth (2016).
\newblock Better than pre-commitment mean-variance portfolio allocation
  strategies: A semi-self-financing {Hamilton--Jacobi--Bellman} equation
  approach.
\newblock {\em European Journal of Operational Research\/}~{\em 250\/}(3),
  827--841.

\bibitem[\protect\citeauthoryear{Davis and Lleo}{Davis and
  Lleo}{2008}]{davis_2008}
Davis, M. and S.~Lleo (2008).
\newblock Risk-sensitive benchmarked asset management.
\newblock {\em Quantitative Finance\/}~{\em 8(4)}, 415--426.

\bibitem[\protect\citeauthoryear{Dichtl, Drobetz, and Kryzanowski}{Dichtl
  et~al.}{2016}]{dichtl2016timing}
Dichtl, H., W.~Drobetz, and L.~Kryzanowski (2016).
\newblock Timing the stock market: Does it really make no sense?
\newblock {\em Journal of Behavioral and Experimental Finance\/}~{\em 10},
  88--104.

\bibitem[\protect\citeauthoryear{Dichtl, Drobetz, and Wambach}{Dichtl
  et~al.}{2016}]{dichtl2016}
Dichtl, H., W.~Drobetz, and M.~Wambach (2016).
\newblock Testing rebalancing strategies for stock-bond portfolos across
  different asset allocations.
\newblock {\em Applied Economics\/}~{\em 48}, 772--788.

\bibitem[\protect\citeauthoryear{Dixon, Halperin, and Bilokon}{Dixon
  et~al.}{2020}]{dixon2020machine}
Dixon, M.~F., I.~Halperin, and P.~Bilokon (2020).
\newblock {\em Machine learning in Finance}, Volume 1406.
\newblock Springer.

\bibitem[\protect\citeauthoryear{Efron}{Efron}{1992}]{efron1992bootstrap}
Efron, B. (1992).
\newblock Bootstrap methods: another look at the jackknife.
\newblock In {\em Breakthroughs in statistics}, pp.\  569--593. Springer.

\bibitem[\protect\citeauthoryear{Forsyth}{Forsyth}{2020}]{forsyth2020optimal}
Forsyth, P.~A. (2020).
\newblock Optimal dynamic asset allocation for dc plan
  accumulation/decumulation: Ambition-cvar.
\newblock {\em Insurance: Mathematics and Economics\/}~{\em 93}, 230--245.

\bibitem[\protect\citeauthoryear{Forsyth, van Staden, and Li}{Forsyth
  et~al.}{2022}]{forsyth2022too}
Forsyth, P.~A., P.~van Staden, and Y.~Li (2022).
\newblock Beating a constant weight benchmark: easier done than said.
\newblock Working paper, Cheriton School of Computer Science, University of
  Waterloo.

\bibitem[\protect\citeauthoryear{Gao, Gao, Hu, Jiang, and Su}{Gao
  et~al.}{2020}]{gao2020application}
Gao, Z., Y.~Gao, Y.~Hu, Z.~Jiang, and J.~Su (2020).
\newblock Application of deep q-network in portfolio management.
\newblock In {\em 2020 5th IEEE International Conference on Big Data Analytics
  (ICBDA)}, pp.\  268--275. IEEE.

\bibitem[\protect\citeauthoryear{Goodfellow, Bengio, and Courville}{Goodfellow
  et~al.}{2016}]{goodfellow2016deep}
Goodfellow, I., Y.~Bengio, and A.~Courville (2016).
\newblock {\em Deep learning}.
\newblock MIT press.

\bibitem[\protect\citeauthoryear{Han et~al.}{Han et~al.}{2016}]{han2016deep}
Han, J. et~al. (2016).
\newblock Deep learning approximation for stochastic control problems.
\newblock {\em arXiv preprint arXiv:1611.07422\/}.

\bibitem[\protect\citeauthoryear{Homer and Sylla}{Homer and
  Sylla}{1996}]{homer1996history}
Homer, S. and R.~E. Sylla (1996).
\newblock {\em A history of interest rates}.
\newblock Rutgers University Press.

\bibitem[\protect\citeauthoryear{Hornik}{Hornik}{1991}]{hornik1991approximation}
Hornik, K. (1991).
\newblock Approximation capabilities of multilayer feedforward networks.
\newblock {\em Neural networks\/}~{\em 4\/}(2), 251--257.

\bibitem[\protect\citeauthoryear{Hougan}{Hougan}{2015}]{hougan2015tracking}
Hougan, M. (2015).
\newblock Tracking difference, the perfect etf metric.
\newblock {\em Retrieved\/}~{\em 8\/}(1), 2020.

\bibitem[\protect\citeauthoryear{Javorcik}{Javorcik}{2020}]{javorcik2020global}
Javorcik, B. (2020).
\newblock Global supply chains will not be the same in the post-covid-19 world.
\newblock {\em COVID-19 and trade policy: Why turning inward won’t
  work\/}~{\em 111}.

\bibitem[\protect\citeauthoryear{Johnson, Bioy, Kellett, and Davidson}{Johnson
  et~al.}{2013}]{johnson2013right}
Johnson, B., H.~Bioy, A.~Kellett, and L.~Davidson (2013).
\newblock On the right track: Measuring tracking efficiency in etfs.
\newblock {\em The Journal of Index Investing\/}~{\em 4\/}(3), 35--41.

\bibitem[\protect\citeauthoryear{Kingma and Ba}{Kingma and
  Ba}{2014}]{kingma2014adam}
Kingma, D.~P. and J.~Ba (2014).
\newblock Adam: A method for stochastic optimization.
\newblock {\em arXiv preprint arXiv:1412.6980\/}.

\bibitem[\protect\citeauthoryear{Kou}{Kou}{2002}]{kou2002jump}
Kou, S.~G. (2002).
\newblock A jump-diffusion model for option pricing.
\newblock {\em Management science\/}~{\em 48\/}(8), 1086--1101.

\bibitem[\protect\citeauthoryear{Kou and Wang}{Kou and
  Wang}{2004}]{kou2004option}
Kou, S.~G. and H.~Wang (2004).
\newblock Option pricing under a double exponential jump diffusion model.
\newblock {\em Management science\/}~{\em 50\/}(9), 1178--1192.

\bibitem[\protect\citeauthoryear{Kratsios and Bilokopytov}{Kratsios and
  Bilokopytov}{2020}]{kratsios2020non}
Kratsios, A. and I.~Bilokopytov (2020).
\newblock Non-euclidean universal approximation.
\newblock {\em Advances in Neural Information Processing Systems\/}~{\em 33},
  10635--10646.

\bibitem[\protect\citeauthoryear{Lahaye, Laurent, and Neely}{Lahaye
  et~al.}{2011}]{lahaye2011jumps}
Lahaye, J., S.~Laurent, and C.~J. Neely (2011).
\newblock Jumps, cojumps and macro announcements.
\newblock {\em Journal of Applied Econometrics\/}~{\em 26\/}(6), 893--921.

\bibitem[\protect\citeauthoryear{Li and Ng}{Li and Ng}{2000}]{li2000optimal}
Li, D. and W.-L. Ng (2000).
\newblock Optimal dynamic portfolio selection: Multiperiod mean-variance
  formulation.
\newblock {\em Mathematical finance\/}~{\em 10\/}(3), 387--406.

\bibitem[\protect\citeauthoryear{Li and Forsyth}{Li and
  Forsyth}{2019}]{li2019data}
Li, Y. and P.~A. Forsyth (2019).
\newblock A data-driven neural network approach to optimal asset allocation for
  target based defined contribution pension plans.
\newblock {\em Insurance: Mathematics and Economics\/}~{\em 86}, 189--204.

\bibitem[\protect\citeauthoryear{Lim and Wong}{Lim and Wong}{2010a}]{Lim_2010}
Lim, A. and B.~Wong (2010a).
\newblock A benchmark approach to optimal asset allocation for insurers and
  pension funds.
\newblock {\em Insurance: Mathematics and Economics\/}~{\em 46(2)}, 317--327.

\bibitem[\protect\citeauthoryear{Lim and Wong}{Lim and
  Wong}{2010b}]{lim2010benchmarking}
Lim, A.~E. and B.~Wong (2010b).
\newblock A benchmarking approach to optimal asset allocation for insurers and
  pension funds.
\newblock {\em Insurance: Mathematics and Economics\/}~{\em 46\/}(2), 317--327.

\bibitem[\protect\citeauthoryear{Lin, MacMinn, and Tian}{Lin
  et~al.}{2015}]{lin2015risking}
Lin, Y., R.~D. MacMinn, and R.~Tian (2015).
\newblock De-risking defined benefit plans.
\newblock {\em Insurance: Mathematics and Economics\/}~{\em 63}, 52--65.

\bibitem[\protect\citeauthoryear{Lucarelli and Borrotti}{Lucarelli and
  Borrotti}{2020}]{lucarelli2020deep}
Lucarelli, G. and M.~Borrotti (2020).
\newblock A deep q-learning portfolio management framework for the
  cryptocurrency market.
\newblock {\em Neural Computing and Applications\/}~{\em 32\/}(23),
  17229--17244.

\bibitem[\protect\citeauthoryear{L’Her, Stoyanova, Shaw, Scott, and
  Lai}{L’Her et~al.}{2016}]{l2016bottom}
L’Her, J.-F., R.~Stoyanova, K.~Shaw, W.~Scott, and C.~Lai (2016).
\newblock A bottom-up approach to the risk-adjusted performance of the buyout
  fund market.
\newblock {\em Financial Analysts Journal\/}~{\em 72\/}(4), 36--48.

\bibitem[\protect\citeauthoryear{MacMinn, Brockett, Wang, Lin, and
  Tian}{MacMinn et~al.}{2014}]{macminn2014securitization}
MacMinn, R., P.~Brockett, J.~Wang, Y.~Lin, and R.~Tian (2014).
\newblock The securitization of longevity risk and its implications for
  retirement security.
\newblock {\em Recreating sustainable retirement\/}, 134--160.

\bibitem[\protect\citeauthoryear{Mancini}{Mancini}{2009}]{mancini2009non}
Mancini, C. (2009).
\newblock Non-parametric threshold estimation for models with stochastic
  diffusion coefficient and jumps.
\newblock {\em Scandinavian Journal of Statistics\/}~{\em 36\/}(2), 270--296.

\bibitem[\protect\citeauthoryear{Merton}{Merton}{1976}]{merton1976option}
Merton, R.~C. (1976).
\newblock Option pricing when underlying stock returns are discontinuous.
\newblock {\em Journal of financial economics\/}~{\em 3\/}(1-2), 125--144.

\bibitem[\protect\citeauthoryear{Moody’s}{Moody’s}{2023a}]{moodyaaa2023}
Moody’s (2023a).
\newblock {Moody's Seasoned Aaa Corporate Bond Yield [AAA], retrieved from
  FRED, Federal Reserve Bank of St. Louis;}.
\newblock https://fred.stlouisfed.org/series/AAA.
\newblock [Online; accessed 18-Jan-2023].

\bibitem[\protect\citeauthoryear{Moody’s}{Moody’s}{2023b}]{moodybaa2023}
Moody’s (2023b).
\newblock {Moody's Seasoned Baa Corporate Bond Yield [BAA], retrieved from
  FRED, Federal Reserve Bank of St. Louis;}.
\newblock https://fred.stlouisfed.org/series/BAA.
\newblock [Online; accessed 18-Jan-2023].

\bibitem[\protect\citeauthoryear{{NASDAQ}}{{NASDAQ}}{2023}]{nasdaq2023}
{NASDAQ} (2023).
\newblock {NASDAQ Composite Index}.
\newblock \url{https://www.nasdaq.com/market-activity/index/comp/historical}.
\newblock [Online; accessed 15-Feb-2023].

\bibitem[\protect\citeauthoryear{Ni, Li, Forsyth, and Carroll}{Ni
  et~al.}{2022}]{ni2022optimal}
Ni, C., Y.~Li, P.~Forsyth, and R.~Carroll (2022).
\newblock Optimal asset allocation for outperforming a stochastic benchmark
  target.
\newblock {\em Quantitative Finance\/}~{\em 22\/}(9), 1595--1626.

\bibitem[\protect\citeauthoryear{Nicolosi, Angelini, and Herzel}{Nicolosi
  et~al.}{2018}]{nicolosi2018portfolio}
Nicolosi, M., F.~Angelini, and S.~Herzel (2018).
\newblock Portfolio management with benchmark related incentives under mean
  reverting processes.
\newblock {\em Annals of Operations Research\/}~{\em 266}, 373--394.

\bibitem[\protect\citeauthoryear{{Norges Bank}}{{Norges
  Bank}}{2022}]{norges2022}
{Norges Bank} (2022).
\newblock {Investment Strategy}.
\newblock
  \url{https://www.nbim.no/en/the-fund/how-we-invest/investment-strategy/}.
\newblock [Online; accessed 5-Dec-2022].

\bibitem[\protect\citeauthoryear{Oderda}{Oderda}{2015}]{Oderda_2015}
Oderda, G. (2015).
\newblock Stochastic portfolio theory optimization and the origin of rule based
  investing.
\newblock {\em Quantitative Finance\/}~{\em 15(8)}, 1259--1266.

\bibitem[\protect\citeauthoryear{{\O}ksendal and Sulem}{{\O}ksendal and
  Sulem}{2007}]{oksendal2007applied}
{\O}ksendal, B.~K. and A.~Sulem (2007).
\newblock {\em Applied stochastic control of jump diffusions}, Volume 498.
\newblock Springer.

\bibitem[\protect\citeauthoryear{Park, Sim, and Choi}{Park
  et~al.}{2020}]{park2020intelligent}
Park, H., M.~K. Sim, and D.~G. Choi (2020).
\newblock An intelligent financial portfolio trading strategy using deep
  q-learning.
\newblock {\em Expert Systems with Applications\/}~{\em 158}, 113573.

\bibitem[\protect\citeauthoryear{Patton, Politis, and White}{Patton
  et~al.}{2009}]{patton2009correction}
Patton, A., D.~N. Politis, and H.~White (2009).
\newblock Correction to “automatic block-length selection for the dependent
  bootstrap” by d. politis and h. white.
\newblock {\em Econometric Reviews\/}~{\em 28\/}(4), 372--375.

\bibitem[\protect\citeauthoryear{Phalippou}{Phalippou}{2014}]{phalippou2014performance}
Phalippou, L. (2014).
\newblock Performance of buyout funds revisited?
\newblock {\em Review of Finance\/}~{\em 18\/}(1), 189--218.

\bibitem[\protect\citeauthoryear{Plyakha, Uppal, and Vilkov}{Plyakha
  et~al.}{2021}]{Plyakha2021}
Plyakha, Y., R.~Uppal, and G.~Vilkov (2021).
\newblock Equal or value weighting? implications for asset-pricing tests.
\newblock In C.~Zopounidis, R.~Benkraiem, and I.~Kalaitzoglou (Eds.), {\em
  Financial Risk Management and Modeling}, pp.\  295--347. Springer
  International Publishing.

\bibitem[\protect\citeauthoryear{Politis and Romano}{Politis and
  Romano}{1991}]{politis1991circular}
Politis, D.~N. and J.~P. Romano (1991).
\newblock {\em A circular block-resampling procedure for stationary data}.
\newblock Purdue University. Department of Statistics.

\bibitem[\protect\citeauthoryear{Politis and Romano}{Politis and
  Romano}{1994}]{politis1994stationary}
Politis, D.~N. and J.~P. Romano (1994).
\newblock The stationary bootstrap.
\newblock {\em Journal of the American Statistical Association\/}~{\em
  89\/}(428), 1303--1313.

\bibitem[\protect\citeauthoryear{Politis and White}{Politis and
  White}{2004}]{politis2004automatic}
Politis, D.~N. and H.~White (2004).
\newblock Automatic block-length selection for the dependent bootstrap.
\newblock {\em Econometric Reviews\/}~{\em 23\/}(1), 53--70.

\bibitem[\protect\citeauthoryear{Powell}{Powell}{2023}]{Powell_2023}
Powell, W. (2023).
\newblock A universal framework for sequential decision problems.
\newblock {\em OR/MS Today\/}~{\em February}.
\newblock \url{https://tinyurl.com/PowellORMSfeature/}.

\bibitem[\protect\citeauthoryear{Reppen, Soner, and Tissot-Daguette}{Reppen
  et~al.}{2022}]{reppen2022deep}
Reppen, A.~M., H.~M. Soner, and V.~Tissot-Daguette (2022).
\newblock Deep stochastic optimization in finance.
\newblock {\em arXiv preprint arXiv:2205.04604\/}.

\bibitem[\protect\citeauthoryear{Rudin}{Rudin}{2019}]{rudin2019stop}
Rudin, C. (2019).
\newblock Stop explaining black box machine learning models for high stakes
  decisions and use interpretable models instead.
\newblock {\em Nature machine intelligence\/}~{\em 1\/}(5), 206--215.

\bibitem[\protect\citeauthoryear{Scott and Cavaglia}{Scott and
  Cavaglia}{2017}]{scott2017wealth}
Scott, L. and S.~Cavaglia (2017).
\newblock A wealth management perspective on factor premia and the value of
  downside protection.
\newblock {\em The Journal of Portfolio Management\/}~{\em 43\/}(3), 33--41.

\bibitem[\protect\citeauthoryear{Shahzad, Bouri, Roubaud, Kristoufek, and
  Lucey}{Shahzad et~al.}{2019}]{shahzad2019bitcoin}
Shahzad, S. J.~H., E.~Bouri, D.~Roubaud, L.~Kristoufek, and B.~Lucey (2019).
\newblock Is bitcoin a better safe-haven investment than gold and commodities?
\newblock {\em International Review of Financial Analysis\/}~{\em 63},
  322--330.

\bibitem[\protect\citeauthoryear{Silver, Lever, Heess, Degris, Wierstra, and
  Riedmiller}{Silver et~al.}{2014}]{silver2014deterministic}
Silver, D., G.~Lever, N.~Heess, T.~Degris, D.~Wierstra, and M.~Riedmiller
  (2014).
\newblock Deterministic policy gradient algorithms.
\newblock In {\em International conference on machine learning}, pp.\
  387--395. Pmlr.

\bibitem[\protect\citeauthoryear{Simonian and Martirosyan}{Simonian and
  Martirosyan}{2022}]{simonian2022sharpe}
Simonian, J. and A.~Martirosyan (2022).
\newblock Sharpe parity redux.
\newblock {\em The Journal of Portfolio Management\/}~{\em 48\/}(4), 183--193.

\bibitem[\protect\citeauthoryear{Taljaard and Mare}{Taljaard and
  Mare}{2021}]{taljaard2021has}
Taljaard, B.~H. and E.~Mare (2021).
\newblock Why has the equal weight portfolio underperformed and what can we do
  about it?
\newblock {\em Quantitative Finance\/}~{\em 21\/}(11), 1855--1868.

\bibitem[\protect\citeauthoryear{Tepla}{Tepla}{2001}]{tepla_2001}
Tepla, L. (2001).
\newblock Optimal investment with minimum performance constraints.
\newblock {\em Journal of Economic Dynamics and Control\/}~{\em 25},
  1629--1645.

\bibitem[\protect\citeauthoryear{{The Federal Reserve}}{{The Federal
  Reserve}}{2011}]{fed2011}
{The Federal Reserve} (2011).
\newblock {What is an acceptable level of inflation?}
\newblock
  \url{https://www.federalreserve.gov/faqs/5D58E72F066A4DBDA80BBA659C55F774.htm}.
\newblock [Online; accessed 5-Dec-2022].

\bibitem[\protect\citeauthoryear{Tsang and Wong}{Tsang and
  Wong}{2020}]{tsang2020deep}
Tsang, K.~H. and H.~Y. Wong (2020).
\newblock Deep-learning solution to portfolio selection with serially dependent
  returns.
\newblock {\em SIAM Journal on Financial Mathematics\/}~{\em 11\/}(2),
  593--619.

\bibitem[\protect\citeauthoryear{van Staden, Dang, and Forsyth}{van Staden
  et~al.}{2021}]{van2021distribution}
van Staden, P.~M., D.-M. Dang, and P.~A. Forsyth (2021).
\newblock On the distribution of terminal wealth under dynamic mean-variance
  optimal investment strategies.
\newblock {\em SIAM Journal on Financial Mathematics\/}~{\em 12\/}(2),
  566--603.

\bibitem[\protect\citeauthoryear{van Staden, Forsyth, and Li}{van Staden
  et~al.}{2022}]{van2022dynamic}
van Staden, P.~M., P.~A. Forsyth, and Y.~Li (2022).
\newblock Across-time risk-aware strategies for outperforming a benchmark.
\newblock Working paper.

\bibitem[\protect\citeauthoryear{van Staden, Forsyth, and Li}{van Staden
  et~al.}{2023}]{van2023beating}
van Staden, P.~M., P.~A. Forsyth, and Y.~Li (2023).
\newblock Beating a benchmark: dynamic programming may not be the right
  numerical approach.
\newblock {\em SIAM Journal on Financial Mathematics\/}~{\em 14:2}, 407--451.

\bibitem[\protect\citeauthoryear{Vigna}{Vigna}{2014}]{vigna2014efficiency}
Vigna, E. (2014).
\newblock On efficiency of mean--variance based portfolio selection in defined
  contribution pension schemes.
\newblock {\em Quantitative finance\/}~{\em 14\/}(2), 237--258.

\bibitem[\protect\citeauthoryear{Wang and Forsyth}{Wang and
  Forsyth}{2010}]{wang2010numerical}
Wang, J. and P.~A. Forsyth (2010).
\newblock Numerical solution of the hamilton--jacobi--bellman formulation for
  continuous time mean variance asset allocation.
\newblock {\em Journal of Economic Dynamics and control\/}~{\em 34\/}(2),
  207--230.

\bibitem[\protect\citeauthoryear{Wang and Forsyth}{Wang and
  Forsyth}{2012}]{wang2012comparison}
Wang, J. and P.~A. Forsyth (2012).
\newblock Comparison of mean variance like strategies for optimal asset
  allocation problems.
\newblock {\em International journal of theoretical and applied finance\/}~{\em
  15\/}(02), 1250014.

\bibitem[\protect\citeauthoryear{Wang, Foster, and Kakade}{Wang
  et~al.}{2020}]{wang2020statistical}
Wang, R., D.~P. Foster, and S.~M. Kakade (2020).
\newblock What are the statistical limits of offline rl with linear function
  approximation?
\newblock {\em arXiv preprint arXiv:2010.11895\/}.

\bibitem[\protect\citeauthoryear{Yao, Zhang, and Zhou}{Yao
  et~al.}{2006}]{yao2006tracking}
Yao, D.~D., S.~Zhang, and X.~Y. Zhou (2006).
\newblock Tracking a financial benchmark using a few assets.
\newblock {\em Operations Research\/}~{\em 54\/}(2), 232--246.

\bibitem[\protect\citeauthoryear{Zhang and Gao}{Zhang and
  Gao}{2017}]{zhang2017portfolio}
Zhang, Q. and Y.~Gao (2017).
\newblock Portfolio selection based on a benchmark process with dynamic
  value-at-risk constraints.
\newblock {\em Journal of Computational and Applied Mathematics\/}~{\em 313},
  440--447.

\bibitem[\protect\citeauthoryear{Zhao}{Zhao}{2007}]{zhao2007dynamic}
Zhao, Y. (2007).
\newblock A dynamic model of active portfolio management with benchmark
  orientation.
\newblock {\em Journal of Banking \& Finance\/}~{\em 31\/}(11), 3336--3356.

\bibitem[\protect\citeauthoryear{Zhou and Li}{Zhou and
  Li}{2000}]{zhou2000continuous}
Zhou, X.~Y. and D.~Li (2000).
\newblock Continuous-time mean-variance portfolio selection: A stochastic lq
  framework.
\newblock {\em Applied Mathematics and Optimization\/}~{\em 42}, 19--33.

\bibitem[\protect\citeauthoryear{Ziemba}{Ziemba}{2005}]{ziemba2005symmetric}
Ziemba, W.~T. (2005).
\newblock The symmetric downside-risk sharpe ratio.
\newblock {\em The Journal of Portfolio Management\/}~{\em 32\/}(1), 108--122.

\end{thebibliography}

\end{document}